\newtheorem{theorem}{Theorem}
\newtheorem{lemma}[theorem]{Lemma}
\newtheorem{corollary}[theorem]{Corollary}
\newtheorem{definition}[theorem]{Definition}
\newcommand{\cX}{\mathcal{X}}
\newcommand{\cM}{\mathcal{M}}
\newcommand{\ba}{\mathbf{a}}
\newcommand{\eps}{\epsilon}
\newcommand{\N}{\mathbb{N}}
\newcommand{\R}{\mathbb{R}}
\newcommand{\bv}{\mathbf{v}}
\newcommand{\bw}{\mathbf{w}}
\newcommand{\bz}{\mathbf{z}}
\newcommand{\bx}{\mathbf{x}}
\newcommand{\bA}{\mathbf{A}}
\newcommand{\cP}{\mathcal{P}}
\newcommand{\cL}{\mathcal{L}}
\renewcommand{\phi}{\varphi}
\newcommand{\myin}{\mathrm{in}}
\newcommand{\myout}{\mathrm{out}}
\newcommand{\mystart}{\mathrm{start}}
\newcommand{\myend}{\mathrm{end}}
\newcommand{\hN}{\hat{N}}
\newcommand{\hu}{\hat{u}}
\newcommand{\hc}{\hat{c}}
\newcommand{\tf}{\tilde{f}}
\newcommand{\tG}{\tilde{G}}
\newcommand{\tbw}{\tilde{\bw}}
\newcommand{\tw}{\tilde{w}}
\newcommand{\cC}{\mathcal{C}}
\DeclareMathOperator{\Lap}{Lap}
\DeclareMathOperator{\rnn}{\mbox{$r$}-nnd}
\DeclareMathOperator{\argmin}{argmin}
\DeclareMathOperator{\disc}{disc}
\DeclareMathOperator{\dist}{dist}
\DeclareMathOperator{\tdist}{\widetilde{dist}}
\DeclareMathOperator{\hdist}{\widehat{dist}}
\title{Differentially Private All-Pairs Shortest Path Distances:\\ Improved Algorithms and Lower Bounds}
\author{
Badih Ghazi \\
Google Research \\
{\small \texttt{badihghazi@gmail.com}}
\and
Ravi Kumar \\
Google Research \\
{\small \texttt{ravi.k53@gmail.com}}
\and
Pasin Manurangsi \\
Google Research \\
{\small \texttt{pasin@google.com}}
\and
Jelani Nelson \\
UC Berkeley \& Google Research \\
{\small \texttt{minilek@alum.mit.edu}}}
\date{\today}
\begin{document}

\maketitle

\begin{abstract}
We study the problem of releasing the weights of all-pair shortest paths in a weighted undirected graph with differential privacy (DP). In this setting, the underlying graph is fixed and two graphs are neighbors if their edge weights differ by at most $1$ in the $\ell_1$-distance. We give an $\eps$-DP algorithm with additive error $\tilde{O}(n^{2/3} / \eps)$ and an $(\eps, \delta)$-DP algorithm with additive error $\tilde{O}(\sqrt{n} / \eps)$ where $n$ denotes the number of vertices. This positively answers a question of Sealfon~\cite{Sealfon16,DPorg-open-problem-all-pairs}, who asked whether a $o(n)$-error algorithm exists. We also show that an additive error of $\Omega(n^{1/6})$ is necessary for any sufficiently small $\eps, \delta > 0$. 

Finally, we consider a relaxed setting where a multiplicative approximation is allowed. We show that, with a multiplicative approximation factor $k$,
the additive error can be reduced to $\tilde{O}\left(n^{1/2 + O(1/k)} / \eps\right)$ in the $\eps$-DP case and $\tilde{O}(n^{1/3 + O(1/k)} / \eps)$ in the $(\eps, \delta)$-DP case, respectively.
\end{abstract}

\section{Introduction}

In recent years, differential privacy (DP)\cite{DworkMNS06, dwork2006our} has emerged as a popular notion for quantifying the leakage of sensitive user information by algorithms for a variety of tasks, from data analytics to machine learning, and seen increasing adoption in industry \cite{dp2017learning, ding2017collecting, google_dp} and government agencies \cite{abowd2018us}. Loosely speaking, DP requires that the output of a (randomized) algorithm remains statistically near-indistinguishable when the input dataset is replaced by another so-called “neighboring” dataset, which only differs on the contributions of a single user. DP has been applied to graph problems, including the all-pairs shortest distances problem defined next.

In the \emph{all-pairs shortest path distances (APSD)} problem, we are given a weighted undirected graph $G = (V, E, \bw)$. The goal is to output, for all $u, v \in V$, an estimate $\tdist(u, v)$ of the smallest-weight distance $\dist_G(u, v)$ among all paths from $u$ to $v$.  APSD is among the most well-studied algorithmic graph problems, with known classic algorithms such as Floyd-–Warshall that can solve the problem in $O(n^3)$ time. More recently, (slightly) faster algorithms have been discovered~\cite{Williams18}. Additionally, the hypothesis that the problem cannot be solved in $O(n^{3 - \Omega(1)})$ time serves as one of the main starting points in fine-grained complexity theory (see e.g.,~\cite{WilliamsW18}).

Sealfon~\cite{Sealfon16} was the first to formally study APSD with privacy.  In the DP framework, 
we require that our algorithm be $(\eps, \delta)$-DP, where---following~\cite{Sealfon16}\footnote{Note that, similar to~\cite{Sealfon16}, we have to keep the underlying graph $(V, E)$ fixed and we are only allowed to change the weights. This is required to get any non-trivial result: as pointed out in~\cite{Sealfon16}, removing an edge can make the graph disconnected and therefore the distance can go from finite to infinite.}---the neighboring graphs correspond to those whose weight vector $\bw$ differs by at most 1 in the $\ell_1$-distance. Throughout, we measure the \emph{error} in terms of the maximum (i.e., $\ell_\infty$) error, defined as $\max_{u, v \in V} |\dist_G(u, v) - \tdist(u, v)|$. We say that an output is \emph{$\alpha$-accurate} iff the error is at most $\alpha$. We use $n$ to denote the number of vertices.  Sealfon gave an $O(n \log n / \eps)$-error ``input-perturbation'' algorithm for APSD, which adds Laplace noise to all edge weights and computes the shortest path in this resulting graph with noisy weights. Furthermore, he showed that, if we are required to release the shortest \emph{paths} themselves, then this error is tight up to a logarithmic factor. However, this lower bound does not apply to the APSD problem, where only the weights have to be released.

Given the above state of the problem, Sealfon~\cite{Sealfon16,DPorg-open-problem-all-pairs} asked the following question:
\begin{displayquote}
{\sl
Is it possible to privately release all-pairs distances with error sublinear in $n$?
}
\end{displayquote}

\subsection{Our Results}

We positively answer this question, by providing private algorithms with sublinear error:
\begin{theorem} \label{thm:pure-intro}
For any $\eps > 0$, there is an $\eps$-DP algorithm that is\footnote{
For readability, our upper bounds in this section suppress factors that are polylogarithmic in $n$ and $1/\delta$.  We use $\tilde{O}$ to hide a dependency of the form $(\log n)^{O(1)}$ in the $\eps$-DP case and $(\log n \cdot \log(1/\delta))^{O(1)}$ in the $(\eps, \delta)$-DP case.} $\tilde{O}(n^{2/3}/\eps)$-accurate w.h.p. 
\end{theorem}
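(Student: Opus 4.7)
The plan is to combine input-perturbation with a hitting-set-based auxiliary graph built on bounded-hop distances, and --- this is the key technical novelty --- to analyze the lower-bound direction via a walk-based rather than hop-by-hop bound.

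\textbf{Privacy and algorithm.} Release noisy weights $\tw_e = w_e + Z_e$ with $Z_e \sim \Lap(1/\eps)$ iid. Since $\bw$ has $\ell_1$-sensitivity $1$ under the neighbor relation, this is $\eps$-DP by the Laplace mechanism, and everything downstream is post-processing on the noisy graph $\tG$. Fix a hop parameter $L$. For every pair $(u,v)$, compute the noisy $L$-bounded distance $\tdist^L(u,v) = \min_{P:u\to v,\,|P|\le L}\sum_{e\in P}\tw_e$ via $L$ rounds of Bellman--Ford on $\tG$. Sample $S\subseteq V$ uniformly at random with $|S| = \tilde O(n/L)$ so that w.h.p.\ every window of $L$ consecutive vertices along every $G$-shortest path contains a vertex of $S$; hence every $G$-shortest path admits a decomposition at its $S$-vertices into $m := \lceil n/L\rceil$ consecutive segments of at most $L$ hops each. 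Build a weighted graph $H$ on $V$ with edge $(a,b)$ weighted by $\max\{0,\tdist^L(a,b)\}$, and output $\tdist(u,v)$ as the minimum-weight $u$-to-$v$ path in $H$ using at most $m$ hops (i.e.\ $m$ rounds of Bellman--Ford on $H$).

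\textbf{Upper bound (easy direction).} Plug the true $S$-decomposition of the $u$-$v$ shortest path in $G$ into $H$ as an $m$-hop path; its total $H$-weight is at most $\sum_i \tw(P_i) = \dist_G(u,v) + \sum_{e\in P}Z_e$, where $P$ is the concatenation of the $m$ edge-disjoint decomposition segments (at most $n$ distinct edges of $G$, fixed independently of the noise). Laplace concentration together with a union bound over the $\binom{n}{2}$ pairs gives $|\sum_{e\in P}Z_e| = \tilde O(\sqrt n/\eps)$ with high probability.

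\textbf{Lower bound --- main obstacle.} A naive hop-by-hop analysis on $H$ (per-edge lower-tail error $\tilde O(L/\eps)$ from a union bound over $n^L$ candidate paths, times $m$ hops) only recovers Sealfon's $\tilde O(n/\eps)$. The intended fix is to argue directly about the walk $W = P_0^\ast \cup \cdots \cup P_{m-1}^\ast$ (multiset of edges) obtained by concatenating the noisy $L$-bounded shortest $s_i$-to-$s_{i+1}$ paths $P_i^\ast$ realizing the aux-graph shortest path $u = s_0,\dots,s_m = v$: one has $\tdist(u,v) = \sum_{e\in W}\tw_e$, and the walk structure plus non-negativity of the true weights gives $\sum_{e\in W}w_e \ge \dist_G(u,v)$, so $\tdist(u,v) \ge \dist_G(u,v) + \sum_e k_e Z_e$, where $k_e$ is the multiplicity of edge $e$ in $W$. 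Provided each $P_i^\ast$ is simple (which I would enforce by a Johnson-style reweighting of $\tG$ to make all noisy weights non-negative before the Bellman--Ford pass), $\max_e k_e \le m$ and $\sum_e k_e \le mL \le n$, so $\sum_e k_e^2 \le (\max_e k_e)\sum_e k_e \le n^2/L$. Laplace concentration on the linear combination $\sum_e k_eZ_e$ then yields deviation $\tilde O(n/(\sqrt L\,\eps))$ w.h.p.\ after union bounding over pairs.

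\textbf{Balancing.} Setting $L = n^{2/3}$ (so $|S| = \tilde O(n^{1/3})$) balances the upper-side deviation $\tilde O(\sqrt n/\eps)$ against the lower-side deviation $\tilde O(n/(\sqrt L\,\eps)) = \tilde O(n^{2/3}/\eps)$, matching the target accuracy. The delicate point --- and the piece I expect to be hardest to make rigorous --- is enforcing simplicity (or at least a bounded repeat multiplicity) of the $P_i^\ast$, since it is precisely the multiplicity bound $\max_e k_e \le m$ that pushes the exponent from Sealfon's $1$ down to $2/3$.
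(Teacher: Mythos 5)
Your algorithm and analysis diverge from the paper's in an important way: the paper combines input perturbation (for $t$-hop distances) with a \emph{separate output-perturbation mechanism} (Steinke--Ullman / \Cref{thm:pure-DP-avoid-union}) that directly releases the $S$-pair distances $\dist_G(w,z)$ with additive error $O(|S|^2/\eps)$, then takes a min over a constant number of pieces ($u\!\to\!w$, $w\!\to\!z$, $z\!\to\!v$). You instead build an $m$-hop auxiliary graph $H$ entirely from the noisy $L$-bounded distances and ask the Laplace noise to cancel across $m$ concatenated segments. The upper-bound direction of your argument is fine, but the lower-bound direction has a genuine gap.

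The gap is adaptivity. You write $\tdist(u,v) \geq \dist_G(u,v) + \sum_e k_e Z_e$ and then invoke Laplace concentration on $\sum_e k_e Z_e$ using $\|k\|_2^2 \leq n^2/L$. But the multiplicities $k_e$ are not fixed: the walk $W$ is the one realizing the minimum $m$-hop weight in $H$, and each segment $P_i^\ast$ is the \emph{noisy}-shortest $L$-bounded path. So $(k_e)_e$ is a function of the very noise $(Z_e)_e$ you are trying to concentrate, and ``union bounding over pairs'' does not restore independence. What you actually need is a uniform bound of the form $\min_W \sum_e k_e^W Z_e \geq -\alpha$ over all walks $W$ that decompose into at most $m$ simple $L$-hop segments. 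The number of such walks is roughly $n^{\Theta(mL)} = n^{\Theta(n)}$, and a union bound at this scale blows the deviation up to order $\|k\|_2 \cdot \sqrt{mL\log n}/\eps = n^{3/2}/(\sqrt{L}\,\eps)$, which is $n^{7/6}/\eps$ at your setting $L = n^{2/3}$ --- worse than Sealfon's $\tilde O(n/\eps)$, not better. Absent a chaining argument or some other exploitation of the structure of the walk polytope (which you do not supply), the concentration step as written is not valid. The ``per-edge'' bound $\|\tilde\bw - \bw\|_\infty \leq O(\log n/\eps)$ \emph{does} hold uniformly and is safe against adaptivity, but it only gives $\sum_e k_e|Z_e| \leq \|k\|_1 \cdot O(\log n/\eps) \leq O(n\log n/\eps)$, which is Sealfon's bound.

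This is precisely the obstruction the paper's two-mechanism design is engineered to avoid: the $S$-pair distances come from a mechanism whose additive error ($O(s^2/\eps)$, or $O(s\sqrt{\log(1/\delta)}/\eps)$ under approximate DP) is guaranteed uniformly over the $\binom{s}{2}$ queries with no dependence on path length, and the only place noise accumulates over hops is the two $t$-hop stubs $u\!\to\!w$ and $z\!\to\!v$, each contributing $O(t\log n/\eps)$ via the per-edge bound. If you want to salvage a ``pure input-perturbation'' algorithm, you would need to either (i) justify the cancellation across segments with a genuine chaining/sup-of-process argument against the class of admissible walks, or (ii) restrict $H$ to vertex set $S$ (as the paper does implicitly) and still you would be left needing accurate $S$-pair distances, which the $L$-hop noisy distances alone do not give you without the $\tilde O(L/\eps)$ per-query error.
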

\begin{theorem} \label{thm:apx-intro}
For any $\eps, \delta \in (0, 1)$, there is an $(\eps, \delta)$-DP algorithm that is $\tilde{O}(\sqrt{n}/\eps)$-accurate w.h.p. 
\end{theorem}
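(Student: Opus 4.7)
The plan is to release two noisy data structures in parallel via the Gaussian mechanism, then answer distance queries by simple post-processing.

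Sample $H \subseteq V$ uniformly at random with $|H| = \Theta(\sqrt{n} \log n)$; by a standard hitting-set argument (and a union bound over canonical shortest paths and their subpaths), with high probability every $c\sqrt{n}$-edge subpath of any shortest path contains a vertex of $H$, for some absolute constant $c$. Release~1 adds independent $\mathcal{N}(0, \tilde{O}(1/\eps^2))$ noise to each edge weight to produce $\tw$; this is $(\eps/2, \delta/2)$-DP since the edge-weight vector has $\ell_2$-sensitivity $1$ under the neighboring relation. Release~2 computes the vector $(\dist(h, h'))_{h, h' \in H}$ of all hub-to-hub shortest-path distances in $G$ and releases it via Gaussian noise of per-coordinate scale $\tilde{O}(\sqrt{n}/\eps)$; this is $(\eps/2, \delta/2)$-DP because changing the edge weights by $1$ in $\ell_1$ alters each of the $|H|^2 = \tilde{O}(n)$ coordinates by at most $1$, giving $\ell_2$-sensitivity $\leq |H| = \tilde{O}(\sqrt{n})$. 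Basic composition yields the overall $(\eps, \delta)$-DP guarantee.

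For each query $(u, v)$, return the minimum over two candidate types: (a) $\tdist_{\leq}(u, v)$, the weight in $\tG$ of the shortest $u$-to-$v$ path restricted to at most $c\sqrt{n}$ edges; and (b) $\min_{h_1, h_2 \in H} \big[\tdist_{\leq}(u, h_1) + \tdist_{\text{hub}}(h_1, h_2) + \tdist_{\leq}(h_2, v)\big]$, where $\tdist_{\text{hub}}$ denotes Release~2's output. For the upper-bound direction of accuracy, either the true shortest path has at most $c\sqrt{n}$ edges --- in which case (a) is within $\tilde{O}(\sqrt{n}/\eps)$ of $\dist(u, v)$ by Gaussian tails on the noise summed over that fixed path --- or the hitting-set guarantee supplies hubs $h_1, h_2$ on the shortest path with $u$- and $v$-side legs of $\leq c\sqrt{n}$ edges, so that (b) is within $\tilde{O}(\sqrt{n}/\eps)$ by combining Release~1 on both legs and Release~2 on the middle segment. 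For the lower-bound direction, each term is at most $\tilde{O}(\sqrt{n}/\eps)$ below the corresponding true distance, and the triangle inequality in $G$ assembles these into $\dist(u, v) - \tilde{O}(\sqrt{n}/\eps)$.

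The hard part will be the lower-bound analysis of $\tdist_{\leq}(u, v)$, which is a minimum over many noisy path weights and could a priori systematically under-estimate. The key calculation is that the number of simple paths of length at most $c\sqrt{n}$ between any fixed pair is $n^{O(\sqrt{n})}$, so the logarithm of the union-bound size is $\tilde{O}(\sqrt{n})$; combined with the Gaussian noise on a fixed $c\sqrt{n}$-length path (a sum of $c\sqrt{n}$ independent Gaussians of standard deviation $\tilde{O}(1/\eps)$, hence scale $\tilde{O}(n^{1/4}/\eps)$), a standard Gaussian maximal inequality gives worst-case under-estimation at most $\tilde{O}(n^{1/4} \cdot n^{1/4}/\eps) = \tilde{O}(\sqrt{n}/\eps)$. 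A further union bound over the $n^2$ vertex pairs and $|H|^2$ hub pairs adds only logarithmic factors, so the full accuracy bound is $\tilde{O}(\sqrt{n}/\eps)$ as claimed.
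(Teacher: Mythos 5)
Your proposal is correct and follows essentially the same template as the paper: a random hub set, input perturbation on edge weights for short-hop distances, output perturbation for hub-to-hub distances, and a combine-by-min estimator (the paper packages this as \Cref{lem:combining} instantiated with \Cref{cor:output-apx-dp}, which uses the result of~\cite{GKM21} in place of your vanilla Gaussian mechanism; under $\tilde O$ the difference washes out). One remark on the ``hard part'' you flagged: the union bound over $n^{O(\sqrt n)}$ paths together with a Gaussian maximal inequality is correct but unnecessary --- the simpler route, and the one the paper takes, is to union-bound the per-edge noise over the $O(n^2)$ edges to get $\|\tbw-\bw\|_\infty \leq \tilde O(1/\eps)$ w.h.p., so every $c\sqrt n$-hop path's weight is simultaneously within $\tilde O(\sqrt n/\eps)$ of its truth, handling both the over- and under-estimation directions at once; your path-counting calculation recovers only this same bound at greater effort.
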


We also prove a lower bound of $\Omega(n^{1/6})$ in terms of accuracy, which not only affirms that polynomial dependency in $n$ is necessary but---in light of $O(\log^{2.5} n)$-accurate algorithm on trees~\cite{Sealfon16}---also separates APSD on general graphs from that on trees.


\begin{theorem} \label{thm:lb-main}
For any $\beta \in (0, 1)$ and any sufficiently small $\eps, \delta > 0$ (depending only on $\beta$), no $(\eps, \delta)$-DP algorithm for APSD can be $o(n^{1/6})$-accurate with probability $1 - \beta$.
\end{theorem}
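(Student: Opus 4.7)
The plan is to reduce to the standard $(\eps, \delta)$-DP lower bound for releasing many linear queries on a Boolean database: if I can embed $\Omega(n^{1/3})$ such queries on a database of size $\Omega(n^{1/3})$ into APSD on $n$ vertices, then the fingerprinting/tracing-attack lower bound of order $\Omega(\sqrt{\text{database size}}) = \Omega(n^{1/6})$ transfers directly.

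Concretely, I would fix a ``hard'' matrix $\bA \in \{0,1\}^{k \times m}$ with $k = m = \Theta(n^{1/3})$ (e.g.\ the matrix underlying the standard fingerprinting lower bound) and a hidden $\bx \in \{0,1\}^m$, and build a graph $G_{\bx, \bA}$ on $\Theta(km) = \Theta(n^{2/3}) \le n$ vertices consisting of $k$ parallel ``query lanes'' $L_1, \ldots, L_k$---where $L_i$ is a chain $s_i = v_{i,0}, \ldots, v_{i,m} = t_i$---glued through $m$ shared ``bit edges'' $e_j = (a_j, b_j)$ carrying weight $x_j$. For each $(i,j)$ with $A_{ij} = 1$, zero-weight detour edges $(v_{i,j-1}, a_j)$ and $(b_j, v_{i,j})$ force the natural $s_i \to t_i$ walk through $e_j$; for $A_{ij} = 0$, a zero-weight direct edge $(v_{i,j-1}, v_{i,j})$ bypasses it. The natural lane-$i$ walk then has weight $\sum_j A_{ij} x_j$, and since each $x_j$ occupies a unique edge, an $\ell_1$-weight neighbor of $G_{\bx, \bA}$ corresponds to flipping a single bit of $\bx$. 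Given any $\alpha$-accurate $(\eps,\delta)$-DP APSD algorithm, postprocessing the outputs $\dist(s_i, t_i)$ (and subtracting known offsets) yields an $(\eps,\delta)$-DP release of $\bA \bx$ with $\ell_\infty$-error at most $\alpha$, so a fingerprinting-style lower bound (\`a la Bun--Ullman--Vadhan or Steinke--Ullman) forces $\alpha = \Omega(\sqrt{m}) = \Omega(n^{1/6})$ whenever $\eps, \delta$ fall below the absolute constants those theorems require.

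The main obstacle is ensuring that the natural lane-$i$ walk is actually the shortest $s_i \to t_i$ path in $G_{\bx, \bA}$: because bit edges are shared, a path could ``hop'' to another lane via a cheap ($x_j = 0$) bit and thereby skip some expensive ($x_j = 1$) bits that the natural walk would have paid for. I plan to address this by inserting, at every entrance and exit of each bit gadget, an extra lane-specific ``toll'' vertex with positive-weight edges carefully calibrated to penalize any change of lane by strictly more than the maximum possible savings (at most $m = \Theta(n^{1/3})$); this makes the natural walk uniquely optimal, and shifts $\dist(s_i, t_i)$ away from $\sum_j A_{ij} x_j$ only by a public, $\bx$-independent offset that the reduction subtracts. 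A secondary technical point is simply to cite the fingerprinting lower bound in the form one needs here---over arbitrary $\{0,1\}$-valued linear queries rather than one-way marginals---which is standard but requires care to avoid losing polylogarithmic factors and to track the allowed ranges of $\eps, \delta$.
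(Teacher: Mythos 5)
Your high-level plan---reduce a hard family of linear queries to APSD by wiring the private bits into edge weights and reading off distances---is the same as the paper's. But your specific embedding has a gap that is not a technicality; it runs into a structural obstruction that the paper's construction is expressly designed to sidestep.

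The obstruction is that collections of (unique) shortest paths in an undirected graph cannot ``fork and rejoin.'' Concretely, in your lane construction, suppose $A_{i,j_1}=A_{i',j_1}=1$ and $A_{i,j_2}=A_{i',j_2}=1$ with $j_1<j_2$. Then both the intended $s_i\!\to\!t_i$ walk and the intended $s_{i'}\!\to\!t_{i'}$ walk pass through the shared vertices $b_{j_1}$ and $a_{j_2}$. Any subpath of a shortest path is itself a shortest path, so both the lane-$i$ segment from $b_{j_1}$ to $a_{j_2}$ and the lane-$i'$ segment must be shortest from $b_{j_1}$ to $a_{j_2}$. For this to hold \emph{for every input} $\bx$, their costs must agree as functions of $\bx$, which forces $A_{ij}=A_{i'j}$ for all $j_1<j<j_2$ and equal tolls. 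A generic ``hard'' matrix (random, or a fingerprinting-code matrix) violates this consecutive-agreement condition everywhere, so it simply cannot be realized as a family of shortest $s_i\!\to\!t_i$ walks. The paper records this limitation in its discussion: the incidence matrices of strongly metrizable path systems have VC dimension at most $2$, so the approach cannot embed an arbitrary matrix of discrepancy $\Theta(\sqrt{m})$.

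Your proposed ``toll vertices'' do not resolve this. Toll weights placed on entrance/exit edges of a bit gadget are paid equally by the lane-$i$-to-lane-$i$ traversal and by a lane-$i$-to-lane-$i'$ traversal: entering via $T_{i,j}$ and exiting via $T'_{i',j}$ costs exactly what $T_{i,j}\to T'_{i,j}$ or $T_{i',j}\to T'_{i',j}$ would. For a ``same-lane discount'' you would need weights $f(i)+g(i')$ that are minimized only on the diagonal $i=i'$, which cannot be achieved with nonnegative edge weights that are read independently at entrance and exit. Worse, large tolls only magnify the incentive to skip gadgets. The other natural fix, giving each lane its own private copy of the bit edge $e_j$, blows up the sensitivity of the map $\bz\mapsto\bw$ to $k$, which (via group privacy) shrinks the permissible $\eps$ by a factor of $k=\Theta(n^{1/3})$ and defeats the lower bound.

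The paper avoids all of this by choosing the query family so that uniqueness of the shortest paths comes for free: it uses a planar point-line system. Two lines intersect in at most one point, so the fork-and-rejoin issue never arises, and these paths are ``strongly metrizable'' by taking Euclidean distances as edge weights. Chazelle--Lvov give a point-line system with discrepancy $\Omega(N^{1/6})$ on $N$ points, the vertex-splitting reduction (replacing each point $p$ by an edge $(p^{\myin},p^{\myout})$ of weight $z_p$) maps this to APSD on $2N$ vertices while preserving sensitivity $1$, and the Muthukrishnan--Nikolov discrepancy-to-DP connection finishes the proof. If you want to pursue a reduction with a richer matrix family, you would need it to satisfy the fork-and-rejoin constraint---and the paper's discussion already points out that even in the best case this technique is capped at $\tilde{O}(n^{1/4})$ by the VC-dimension bound.
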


Finally, we consider a slightly more relaxed setting where a \emph{multiplicative} approximation is also allowed. More precisely, we say that an output estimate $\tdist$ is \emph{$(\gamma, \alpha)$-accurate} iff, for all $u, v \in V$, we have
\begin{align*}
\dist_G(u, v) - \alpha \leq \tdist(u, v) \leq \gamma \cdot \dist_G(u, v) + \alpha.
\end{align*}
Multiplicative approximation has long been widely studied in the non-private setting (e.g.,~\cite{AwerbuchBCP98,Cohen98,ThorupZ05}). Under this relaxed setting, we give algorithms with improved additive error guarantees. Specifically, for a multiplicative error $\gamma > 1$, the additive error for our $\eps$-DP and $(\eps, \delta)$-DP algorithms are $\tilde{O}(n^{1/2 + O(1/\gamma)} / \eps)$ and $\tilde{O}(n^{1/3 + O(1/\gamma)} / \eps)$ respectively.

\begin{theorem} \label{thm:pure-mult-intro}
For any $\eps > 0$ and $k \in \N$, there is an $\eps$-DP algorithm that is $\left(2k - 1, \tilde{O}_k(n^{(k+1)/(2k+1)}/\eps)\right)$-accurate w.h.p. 
\end{theorem}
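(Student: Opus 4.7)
The plan is to reduce to the setting of Theorem~\ref{thm:pure-intro} by first sparsifying the input into a $(2k-1)$-spanner. Recall that the Alth\"ofer--Das--Dobkin--Joseph--Soares greedy construction produces a $(2k-1)$-spanner $H$ of any weighted graph with size $\tilde{O}_k(n^{1+1/k})$. Since only the weights are private while the edge set $E$ is public, we can afford to allocate a small constant fraction of the $\eps$ budget to input-perturb the weights and run the greedy construction on the resulting noisy weights; the produced $H$ is then post-processing of a DP release, and standard Laplace concentration shows that $H$ remains a $(2k-1)$-spanner of $G$ up to a small additive per-edge slack that can be absorbed into the final additive bound. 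The task then reduces to APSD on $H$, with a $(2k-1)$ multiplicative blowup in the final bound.

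On $H$ we run a landmark / close-pair template in the spirit of Theorem~\ref{thm:pure-intro}, retuned to the sparsity of $H$. For a hop-radius $r$ to be set later, we sample a set $L \subseteq V$ of $\tilde{O}(n/r)$ landmarks (so that w.h.p.\ every shortest path in $H$ of more than $r$ hops passes through some $s \in L$); we release noisy weights on the edges of $H$ (resolving pairs within $r$ hops directly via Dijkstra on the noisy spanner, with additive error $\tilde O(\sqrt{r}/\eps)$ from Laplace concentration on a single $r$-hop path); and we privately release the single-source distances from each landmark, using these via $\tdist(u,v)=\min_{s\in L}\tdist(u,s)+\tdist(s,v)$ to handle far pairs. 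Pure-DP basic composition is used across the landmark releases. Optimizing $r$ against the spanner sparsity $m=n^{1+1/k}$ yields the target additive error $\tilde{O}_k(n^{(k+1)/(2k+1)}/\eps)$, multiplied by the spanner stretch $2k-1$.

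The main obstacle I expect is the accuracy analysis of noisy shortest paths in $H$: the noisy distance is a minimum over potentially many paths, so single-path Laplace concentration does not suffice on its own. I would exploit structural properties of the greedy spanner---in particular, girth $\geq 2k+1$, which limits the number of alternative short routes between any fixed pair---together with the close/far decomposition to keep the set of relevant paths small enough to union-bound. A secondary concern is verifying that the additive slack introduced by building the spanner on perturbed weights is lower order than the target $\tilde{O}_k(n^{(k+1)/(2k+1)}/\eps)$ bound; this should follow from the small per-edge Laplace magnitudes combined with the $(2k-1)$-stretch guarantee inherited from the greedy construction.
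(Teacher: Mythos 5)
Your proposal misses the central idea that makes the paper's multiplicative-error result go through, and the spanner detour does not actually buy you anything in the parameter regime you need.

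The paper's savings come from replacing the naive output-perturbation step (which queries all $|S|^2$ landmark-pair distances, giving additive error $\tilde O(|S|^2/\eps)$ in the pure-DP case) with a \emph{privatized Thorup--Zwick distance oracle} (Theorem~\ref{thm:pure-dp-do}). The key quantitative point is that the TZ bunch/pivot structure only requires $\tilde O_k(|S|^{1+1/k})$ private \emph{selections}, each made via the exponential mechanism over a sensitivity-$1$ scoring function, so the total additive error of the $S$-pair subroutine drops from $\tilde O(|S|^2/\eps)$ to $\tilde O_k(|S|^{1+1/k}/\eps)$, at the cost of a $(2k-1)$ multiplicative stretch. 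Plugging that into Lemma~\ref{lem:combining} and balancing $s^{1+1/k} \approx n/s$ gives $s = n^{k/(2k+1)}$ and the claimed additive error $n^{(k+1)/(2k+1)}/\eps$. Nothing in your plan reduces the \emph{number} of output-perturbation queries; you still propose releasing either all landmark-landmark distances or (worse) all landmark-to-vertex distances. That part of your scheme has the same error as the non-multiplicative algorithm of Theorem~\ref{thm:pure-intro}, so the spanner at best hands you a strictly worse tradeoff (same additive error, extra $(2k-1)$ multiplicative factor).

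Two further concrete gaps. First, the sparsity of the greedy $(2k-1)$-spanner $H$ is irrelevant to the input-perturbation error: the Laplace mechanism on edge weights has $\ell_1$-sensitivity $1$ regardless of how many edges there are, and the resulting per-pair error scales with the \emph{hop count} of the relevant path (Theorem~\ref{thm:input-pert} gives $\tilde O(t/\eps)$ for $t$-hop paths), not with $|E(H)|$. Greedy spanners give no hop-length guarantee—shortest paths in $H$ can use $\Omega(n)$ edges—so sparsifying does not improve the close-pair error, and it also undermines your claim that the per-edge slack from building $H$ on noisy weights is "lower order": a path of $\Omega(n)$ hops accumulates slack $\tilde\Omega(n/\eps)$, which swamps the target bound. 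Second, your $\tilde O(\sqrt{r}/\eps)$ close-pair estimate is optimistic: as you yourself flag, the noisy shortest path is a minimum over exponentially many candidate paths, so single-path Laplace concentration does not apply, and the paper's own Theorem~\ref{thm:input-pert} only gives $\tilde O(r/\eps)$. The girth-based salvage you sketch would need to be made rigorous and, even if it worked, would not by itself recover the $n^{(k+1)/(2k+1)}$ exponent without the TZ distance-oracle savings. In short, the missing idea is to privatize the TZ \emph{oracle} (bunches, pivots, exponential-mechanism selection), not to build a spanner.
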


\begin{theorem} \label{thm:apx-mult-intro}
For any $\eps, \delta \in (0, 1)$ and $k \in \N$, there is an $(\eps, \delta)$-DP algorithm that is \\ $\left(2k - 1, \tilde{O}_k(n^{(k+1)/(3k+1)}/\eps\right)$-accurate w.h.p. 
\end{theorem}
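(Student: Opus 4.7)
The plan is to extend the Gaussian-mechanism strategy of Theorem~\ref{thm:apx-intro} by replacing its single hub layer with a Thorup--Zwick-style hierarchy of depth $k$, thereby amortizing the noise across the $(2k-1)$-multiplicative stretch.

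First, data-independently sample nested vertex sets $V = A_0 \supseteq A_1 \supseteq \cdots \supseteq A_k = \emptyset$ with $\E[|A_i|] = n^{1-i/k}$: each $v \in A_{i-1}$ is kept in $A_i$ with probability $n^{-1/k}$. For each vertex $v$ and each level $i$, let $p_i(v) = \argmin_{w \in A_i} d(v,w)$ (ties broken consistently) and set $B(v) = \bigcup_{i=0}^{k-1} \{w \in A_i \setminus A_{i+1} : d(v,w) < d(v, p_{i+1}(v))\}$. Standard Thorup--Zwick analysis then yields $\E[|B(v)|] = O(k n^{1/k})$ and gives a deterministic $(2k-1)$-stretch oracle that outputs $\tdist(u,v) = d(u, p_{i^\star}(u)) + d(p_{i^\star}(u), v)$, using exactly two stored distances per query.

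Second, apply the Gaussian mechanism to release noisy versions of all quantities $\{d(v, p_i(v))\}$ and $\{d(v, w) : w \in B(v)\}$; call the resulting vector $\by \in \R^M$, where $M = \tilde O(k n^{1+1/k})$. Because the sampling in the first step is data-independent, privacy reduces to bounding the $\ell_2$-sensitivity $\Delta_2$ of $\by$ as a function of $\bw$ under $\ell_1$-neighboring. Noise of standard deviation $\tilde O(\Delta_2/\eps)$ per coordinate then suffices; the query error for each pair is dominated by the sum of two such independent Gaussians, and a union bound over all $\binom{n}{2}$ pairs incurs only $\tilde O$-factors. The multiplicative error $(2k-1)$ is inherited directly from the Thorup--Zwick stretch.

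The main obstacle is to prove $\Delta_2 = \tilde O(n^{(k+1)/(3k+1)})$. The naive route --- $\ell_\infty$-sensitivity $1$ per coordinate, hence $\Delta_2 \le \sqrt M = \tilde O(n^{(k+1)/(2k)})$ --- is too weak. Getting to the target exponent appears to require two refinements: (i) bounding, for every input edge $e$, the number of released shortest paths that traverse $e$ (equivalently, the $\ell_1 \to \ell_2$ operator norm of the path--edge incidence matrix, which is substantially smaller than $\sqrt M$ thanks to the rapid geometric shrinking $|A_i| \approx n^{1-i/k}$); and (ii) splitting the $(\eps,\delta)$ privacy budget unevenly across the $k$ levels, since deeper levels release far fewer (and structurally sparser) coordinates and so can absorb tighter noise. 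Jointly optimizing the sampling probability and this per-level budget split should produce the target exponent $(k+1)/(3k+1)$.
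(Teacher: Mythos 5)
There is a genuine gap, and it is fatal to the plan as stated. The exponent $(k+1)/(3k+1)$ does \emph{not} come from a refined sensitivity analysis of a Thorup--Zwick oracle built on all of $V$; it comes from a \emph{two-level} construction (Lemma~\ref{lem:combining}): sample a hub set $S \subseteq V$ of size $s$, build a private TZ-style distance oracle \emph{only on $S$} (Theorem~\ref{thm:apx-dp-do}, giving additive error $\tilde O\bigl(\sqrt{s^{1+1/k}}/\eps\bigr)$), and separately run input-perturbation for shortest $t$-hop paths with $t \approx (n/s)\log n$ to stitch arbitrary vertices onto $S$ and to handle short pairs. Balancing $s^{(k+1)/(2k)}$ against $n/s$ gives $s = n^{2k/(3k+1)}$ and hence error $n^{(k+1)/(3k+1)}$. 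Your proposal builds the oracle on all of $V$ and so releases $M = \tilde O(n^{1+1/k})$ coordinates, which (as you note) gives $\sqrt M = \tilde O(n^{(k+1)/(2k)})$ --- exactly the bound you would get by setting $S = V$ in Theorem~\ref{thm:apx-dp-do}. The extra factor of $n^{(k+1)/(2k) - (k+1)/(3k+1)}$ is recovered only by shrinking $S$, not by tightening the $\ell_2$-sensitivity, and your speculative refinements (i)--(ii) do not substitute for that structural change: for $k$ large, your exponent tends to $1/2$ while the target tends to $1/3$, so no per-coordinate or per-level budgeting can close the gap without the $t$-hop layer.

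A second, independent problem is the privacy argument. The sets $A_i$ are data-independent, but $p_i(v)$ and $B(v)$ are functions of the private weights: $p_i(v)$ is an $\argmin$ over true distances, and $B(v)$ is defined by comparing true distances. So the \emph{index set} of the coordinates you release is data-dependent, and ``apply the Gaussian mechanism to $\by$'' is not a well-defined application of a fixed function. The paper avoids this by using the exponential mechanism (Theorem~\ref{thm:em}) to privately \emph{select} the $r \approx s^{1/k}\log n$ nearest candidates at each level, and by accounting for these adaptively composed selection steps via advanced composition; this is what makes the ``private Thorup--Zwick'' oracle $(\eps,\delta)$-DP in the first place, and it is where the $\sqrt{s^{1+1/k}}$ (rather than $\sqrt{n^{1+1/k}}$) shows up. Your plan needs both of these ingredients --- the hub-set/$t$-hop combination and a private selection primitive --- before the stated exponent can be obtained.
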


We remark that our lower bound (\Cref{thm:lb-main}) does \emph{not} apply to the case where multiplicative approximation is allowed, and it remains an interesting open question whether it is possible to achieve an $n^{o(1)}$ additive error in this case.

\subsection{Technical Overview}

\paragraph{Additive-Approximation Algorithms.}
We start with algorithms that only incur an additive error (\Cref{thm:pure-intro,thm:apx-intro}). Our algorithms will combine two primitives: the aforementioned ``input-perturbation'' algorithm and an ``output-perturbation'' algorithm. For the latter (also noted in~\cite{Sealfon16,DPorg-open-problem-all-pairs}), we simply add Laplace noise to each of the shortest path distances. Since each shortest path distance has sensitivity one, it suffices to take the Laplace noise scale to be $n^2/\eps$ to achieve $\eps$-DP. This yields an error of $\tilde{O}(n^2 / \eps)$ for the problem.

To combine the two, we need a notion of $t$-hop path. For $t \in \N$, a path is said to be a \emph{$t$-hop path} if it consists of at most $t$ edges. As proved in~\cite{Sealfon16}, for any pair $u, v$ for which the shortest path is a $t$-hop path, the error of the input-perturbation algorithm is in fact just $\tilde{O}(t/\eps)$ instead of the worst-case bound of $\tilde{O}(n/\eps)$.

Our strategy is now as follows: we sample a subset $S \subseteq V$ of a given size $s$.  Vertices in $S$ will serve as intermediate vertices for long paths. We compute the shortest path distances between every pair of vertices in $S$ using the output-perturbation algorithm, whose error is now only $\tilde{O}(s^2 / \eps)$. We also use the input-perturbation algorithm to compute shortest $t$-hop path distances between all pairs of vertices in $V$; this has an error of $\tilde{O}(t/\eps)$. Finally, for each pair $u, v$, we compute the distance between them by taking the minimum of (i) the shortest $t$-hop path distance between $u$ and $v$ and (ii) the minimum over all $w, z \in S$ of the sum of the shortest $t$-hop path distance between $u, w$, the shortest path distance between $w, z$ and the shortest $t$-hop path distance between $z, v$.

For this algorithm to work, $t$ has to be sufficiently large so that every shortest path contains a vertex in $S$ among its first $t$ and last $t$ points. It is not hard to see that this w.h.p. holds as long as we pick $t = \tilde{O}(n/s)$. In total, the error is then $\tilde{O}(s^2 / \eps) + \tilde{O}(t/\eps) = \tilde{O}\left(\frac{1}{\eps}(s^2 + n/s)\right)$. This is minimized when we take $s = \tilde{\Theta}(n^{1/3})$, which yields an error of $\tilde{O}(n^{2/3}/\eps)$ as claimed in \Cref{thm:pure-intro}.

In the case of $(\eps, \delta)$-DP, we may use the Gaussian mechanism instead of the Laplace mechanism. This reduces the error in the output-perturbation algorithm to $\tilde{O}(s/\eps)$. By taking $s = \tilde{\Theta}(\sqrt{n})$, we get the claimed bound in \Cref{thm:apx-intro}.

We end by noting that in our actual algorithms we do not use the Laplace/Gaussian mechanisms for computing the shortest path distances between points in $S$; rather, we use recent improved results~\cite{SteinkeU16,DaganKur20,GKM21} that allow us to remove a logarithmic bound in the error.

\paragraph{Lower Bounds.}
Our lower bounds follow from a reduction from the \emph{linear queries}\footnote{Since we are using only Boolean matrices $\bA$ here, this version is sometimes called \emph{counting queries}.} problem. A set of linear queries can be encoded as a matrix $\bA \in \{0, 1\}^{d \times n}$. The input to the algorithm is a vector $\bz \in \{0, 1\}^n$ (where two datasets $\bz, \bz'$ are neighbors iff they differ on a single coordinate). The goal is to output an estimate of $\bA\bz$. We write $\ba_i$ as the $i$th row of $\bA$.

To encode linear queries into a graph, we let each $z_i$ be a weight of some edge in our graph, and the goal is to identify each linear query $\ba_i$ with a shortest path. Our goal is that the shortest path has length $\left<\ba_i, z\right>$ (perhaps plus some constant that we know beforehand). Although this seems promising, 
it is not simple to embed linear queries into a graph in this way.
Consider, for example, two linear queries $z_1 + z_2 + z_4$ and $z_1 + z_3 + z_4$. Here, if the first one corresponds to a path that has $z_1, z_2, z_4$ in this order and the second corresponds to a path that has $z_1, z_3, z_4$ in this order, then clearly both of them cannot be the shortest path at the same time.

To avoid the issue described above, we embed the graph while ensuring that each linear query appears in a \emph{unique shortest path} between a pair of endpoints. Roughly speaking, this is done by taking a graph $H = (U, E_H, w_H)$ that has many unique shortest paths, then scaling the edge weights so that the gap between any shortest path and the second shortest path (w.r.t. the same pair) is sufficiently large. Then, we replace each vertex $i \in U$ by an edge, where the weight is equal to $z_i$. 
(Note that, strictly speaking, such a replacement is only well-defined when there is an ordering on the vertices; this is formalized in \Cref{sec:lb}.)
In this way, each linear query includes all the vertices presented in a corresponding unique shortest path.

To obtain a concrete bound, we use a graph $H$ where each vertex is a point in the plane and there is an edge between two vertices iff the line segment between them does not contain another vertex and the weight is simply the Euclidean distance. In this case, the unique shortest paths are simply lines in the plane. These ``point-line'' set systems have been studied before, and we appeal to a discrepancy\footnote{For point-line system, the discrepancy is the minimum across all $2$-colorings of all points of the maximum difference between the number of each color on any line. For a more formal definition, see \Cref{def:discrepancy}} lower bound of $\Omega(n^{1/6})$ of such a set system by Chazelle and Lvov~\cite{ChazelleL01a}; together with the known connection between discrepancy and linear queries lower bounds~\cite{MuthukrishnanN12}, we arrive at our lower bound for the APSD problem. 

On a tangent, the original work of Chazelle and Lvov~\cite{ChazelleL01a} proved a nearly tight upper bound of $O(n^{1/6} \log^{2/3} n)$ for point-line systems. We observe that a sharper partial coloring lemma shown by Lovett and Meka~\cite{LovettM15} immediately implies an improved---and tight---upper bound of $O(n^{1/6})$, which might be of independent interest. The details of this proof can be found in \Cref{sec:discrepancy-upper-bound}.

\paragraph{Additive and Multiplicative Approximation Algorithms.}
When also allowed multiplicative approximation, we use the algorithm of Thorup and Zwick~\cite{ThorupZ05}, which constructs a $(2k - 1)$-stretch spanner with only $O_k(n^{1 + 1/k})$ edges. We privatize their algorithm by replacing each non-private primitive with a private one (e.g., selecting a closeby point is now done using the Exponential Mechanism~\cite{McSherryT07}). Roughly speaking, the benefit of this algorithm is that, unlike in the ``naive'' output-perturbation algorithm we described above that queries the distances of all $O(n^2)$ pairs of vertices, here we only need to query the shortest path distances of only $O_k(n^{1 + 1/k})$ pairs of vertices. Therefore, this also reduces the noise needed for each query to attain the same level of privacy. By using this algorithm to compute the shortest paths within $S$, we can thus reduce the additive error of the overall algorithm, albeit at the price of incurring a multiplication error.

\subsection{Related Work}

As alluded to earlier, our work is most closely related to that of Sealfon~\cite{Sealfon16}. In addition to the already discussed results,~\cite{Sealfon16} also considered several special cases, including an $\eps$-DP $O(\log^{2.5} n / \eps)$-accurate algorithm for trees. It is worth noting that this is a generalization of the prefix sum (aka one-dimensional range query) problem and Sealfon's algorithm can be considered a generalization of previous algorithms for prefix sum~\cite{DworkNPR10,ChanSS11,DworkNRR15}. Another result from~\cite{Sealfon16} concerns the case where each weight is bounded by $M$ (i.e., $w_e \leq M$ for all $e \in E$), in which case an algorithm with error $\tilde{O}(\sqrt{n M})$ was shown. Interestingly, Sealfon's algorithm also employs the idea of $t$-hop paths. However, instead of trying to find $S$ that covers the \emph{shortest paths} (as we do), his algorithm directly finds $S$ that covers the \emph{original graph} $G$ via $t$-hop paths. Note that this can be computed with no privacy loss at all. The algorithm then proceeds to compute the all-pair shortest path distances for $S$ via the output-perturbation algorithm. The main bottleneck here is that, since we are only guaranteed that there are \emph{some} $t$-hop paths from any vertex $v \in V$ to $S$ but this path may not coincide with (a subpath of) the shortest path, the error bound becomes $t \cdot M$ (because we can only guarantee a bound of $M$ on each edge); this causes a dependency on $M$ in the final bound.

\section{Background}

For any $m \in \N$, we write $[m]$ as a shorthand for $\{1, \dots, m\}$.  Let $x \sim D$ denote that the random variable $x$ is sampled according to the distribution $D$.  We use the following notion of neighboring datasets throughout the paper: two input datasets $X, X'$ are said to be \emph{neighbors} if $\| X - X' \|_1 \leq 1$.

We first recall the definition of differential privacy (DP). 
\begin{definition}[Differential Privacy]
An algorithm $\cM$ is \emph{$(\eps, \delta)$-differentially private ($(\eps, \delta)$-DP)} iff for every neighboring input datasets $X, X'$ and every set $S$ of possible outcomes, we have $\Pr[\cM(X) \in S] \leq e^\eps \cdot \Pr[\cM(X') \in S] + \delta$. When $\delta = 0$, we simply write $\eps$-DP instead of $(\eps, \delta)$-DP.
\end{definition}

\begin{definition}[Sensitivity]
The \emph{$\ell_p$ sensitivity}  of $f: \cX \to \R^D$ is defined as $\Delta_p(f) := \max_{X, X'} \|f(X) - f(X')\|_p$, where the maximum is over all neighboring datasets $X, X'$.
\end{definition}

We will also use the Laplace mechanism~\cite{DworkMNS06}, which we recall here. Let $\Lap(b)$ denote the Laplace distribution%
\footnote{The pdf of the Laplace distribution with scale $b$ is $\propto \exp(-|x|/b)$.} with scale $b$. 
\begin{lemma}[Laplace Mechanism~\cite{DworkMNS06}]
\label{lem:laplace}
Given any function $f: \cX \to \R^D$, the Laplace mechanism, on input $X \in \cX$, outputs 
$f(X) + (\eta_1, \ldots, \eta_D)$ where each
$\eta_i \sim \Lap(\Delta_1(f)/\epsilon)$ is independent. The Laplace mechanism is $\epsilon$-DP, and it is $O(\Delta_1(f) \cdot \log(1/\beta) / \eps)$-accurate with probability $1 - \beta$ for all $\beta \in (0, 0.5)$.
\end{lemma}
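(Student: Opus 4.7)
The plan is to verify the two assertions of the lemma separately: (i) the mechanism is $\epsilon$-DP, and (ii) the claimed accuracy guarantee holds.

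For privacy, I would fix two neighboring inputs $X, X' \in \cX$ (so $\|X - X'\|_1 \leq 1$) and directly bound the ratio of the output densities at an arbitrary point $y = (y_1, \ldots, y_D) \in \R^D$. Writing $p_X$ for the density of $\cM(X) = f(X) + (\eta_1, \ldots, \eta_D)$, independence of the coordinates gives
\begin{align*}
\frac{p_X(y)}{p_{X'}(y)} = \prod_{i=1}^D \exp\!\left(\frac{\eps}{\Delta_1(f)}\bigl(|y_i - f(X')_i| - |y_i - f(X)_i|\bigr)\right).
\end{align*}
The reverse triangle inequality yields $|y_i - f(X')_i| - |y_i - f(X)_i| \leq |f(X)_i - f(X')_i|$, so summing over $i$ and using $\|f(X) - f(X')\|_1 \leq \Delta_1(f)$ bounds the ratio by $e^\eps$. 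Integrating this pointwise bound over any measurable set $S$ gives the DP condition with $\delta = 0$.

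For accuracy, I would use the standard tail bound for the Laplace distribution: if $\eta \sim \Lap(b)$ then $\Pr[|\eta| > t] = \exp(-t/b)$. Setting $b = \Delta_1(f)/\eps$ and $t = \Delta_1(f) \log(D/\beta)/\eps$, each coordinate satisfies $|\eta_i| \leq t$ except with probability $\beta/D$. A union bound over the $D$ coordinates then gives $\max_i |\eta_i| \leq O(\Delta_1(f) \log(D/\beta)/\eps)$ with probability at least $1-\beta$, which is the stated accuracy (absorbing the $\log D$ factor into the big-$O$ or noting that it appears inside the logarithm multiplying $1/\beta$).

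There is really no main obstacle here: the proof is a textbook calculation, and both pieces reduce to (a) one application of the triangle inequality together with the definition of $\ell_1$ sensitivity, and (b) the exponential tail of $\Lap(b)$ combined with a union bound. The only point to be careful about is the direction of the triangle inequality in the privacy calculation and making sure to invoke $\ell_1$ sensitivity (not $\ell_\infty$) when summing the coordinatewise gaps.
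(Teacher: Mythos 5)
The paper does not prove this lemma; it is stated with a citation to Dwork, McSherry, Nissim, and Smith, so there is no internal proof to compare against. Your argument is the standard textbook one and is correct: the privacy bound follows from the coordinatewise density ratio, the reverse triangle inequality, and the definition of $\ell_1$ sensitivity, while the accuracy bound follows from the Laplace tail $\Pr[|\eta|>t]=e^{-t/b}$ and a union bound over the $D$ coordinates. One small point worth being explicit about: as literally written the lemma asserts $O(\Delta_1(f)\log(1/\beta)/\eps)$ accuracy, but the union-bound argument actually yields $O(\Delta_1(f)\log(D/\beta)/\eps)$ for the $\ell_\infty$ error, and that is also how the paper invokes the lemma in \Cref{thm:input-pert} (where the bound is $O(\log(|E|/\beta)/\eps)$). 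You noticed this and flagged it; the right resolution is that the lemma's displayed bound should be read either per-coordinate or with $D$ folded into the logarithm, exactly as you say.
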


Next we state the composition theorems for DP, which will be used in our analysis.

\begin{theorem}[Basic Composition~\cite{DworkMNS06,DworkL09}]
\label{thm:basic-composition}
Let $\eps_1, \dots, \eps_k, \delta_1, \dots, \delta_k > 0$. If we run $k$ (possibly adaptive) mechanisms where the $i$th mechanism is $(\eps_i, \delta_i)$-DP, then the entire algorithm is $(\eps_1 + \cdots + \eps_k, \delta_1 + \cdots + \delta_k)$-DP.
\end{theorem}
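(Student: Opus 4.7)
The plan is to prove basic composition by induction on $k$, with the key step being the two-mechanism case ($k=2$), which then extends routinely. For the inductive step, I would bundle the first $k-1$ mechanisms into one mechanism that is (by the inductive hypothesis) $(\eps_1 + \cdots + \eps_{k-1}, \delta_1 + \cdots + \delta_{k-1})$-DP, and then compose it with the $k$-th mechanism using the $k=2$ result.

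Before tackling $k=2$, I would record a small intermediate lemma: if $\cM$ is $(\eps, \delta)$-DP and $g: \mathcal{Y} \to [0,1]$ is any (measurable) bounded function, then $\E[g(\cM(X))] \le e^\eps \, \E[g(\cM(X'))] + \delta$ for neighboring $X, X'$. The proof is the layer-cake identity $g(y) = \int_0^1 \ind[g(y) \ge t]\,dt$ followed by applying the DP guarantee to each superlevel set and integrating in $t$. This lets me ``pull'' DP inequalities through expectations, which is exactly what adaptive composition requires.

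For $k=2$, let $\cM_1$ be $(\eps_1, \delta_1)$-DP and let $\cM_2(y_1, \cdot)$ be $(\eps_2, \delta_2)$-DP for every fixed first argument $y_1$. For a set $S \subseteq \mathcal{Y}_1 \times \mathcal{Y}_2$ and $y_1 \in \mathcal{Y}_1$, write $S_{y_1} = \{y_2 : (y_1, y_2) \in S\}$ and $h(y_1) := \Pr[\cM_2(y_1, X') \in S_{y_1}]$. Then
\begin{align*}
\Pr[(\cM_1(X), \cM_2(\cM_1(X), X)) \in S]
&= \E_{y_1 \sim \cM_1(X)}\!\left[\Pr[\cM_2(y_1, X) \in S_{y_1}]\right] \\
&\le \E_{y_1 \sim \cM_1(X)}\!\left[e^{\eps_2} h(y_1) + \delta_2\right] \\
&\le e^{\eps_2}\!\left(e^{\eps_1}\E_{y_1 \sim \cM_1(X')}[h(y_1)] + \delta_1\right) + \delta_2 \\
&= e^{\eps_1 + \eps_2} \Pr[(\cM_1(X'), \cM_2) \in S] + e^{\eps_2} \delta_1 + \delta_2,
\end{align*}
where the first inequality uses $(\eps_2, \delta_2)$-DP of $\cM_2$ for each fixed $y_1$, and the second uses the intermediate lemma above applied to $h$.

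The main obstacle is making the $\delta$-term come out as $\delta_1 + \delta_2$ rather than the slightly looser $e^{\eps_2}\delta_1 + \delta_2$ above. The standard fix is to invoke the ``bad-event'' reformulation of approximate DP: $\cM$ is $(\eps, \delta)$-DP iff there is an event of probability at most $\delta$ outside which the output distributions on $X$ and $X'$ have pointwise density ratio at most $e^\eps$. Applying this to each of $\cM_1$ and $\cM_2(y_1,\cdot)$ and taking the union of the two bad events costs only $\delta_1 + \delta_2$ in total probability; on the complement, the joint density ratio is bounded by $e^{\eps_1 + \eps_2}$ by multiplying the two pointwise bounds. This yields the clean statement in the theorem, and combined with the induction described in the first paragraph finishes the proof.
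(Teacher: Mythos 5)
The paper does not prove this theorem; it cites it as known from~\cite{DworkMNS06,DworkL09}, so there is no in-paper argument to compare against. Assessing your proof on its own merits: the induction skeleton, the intermediate expectation lemma (via the layer-cake identity), and the resulting bound $e^{\eps_1+\eps_2}\Pr[\cdot] + e^{\eps_2}\delta_1 + \delta_2$ are all correct, and you are right that the remaining work is to tighten $e^{\eps_2}\delta_1 + \delta_2$ to $\delta_1 + \delta_2$.

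The gap is in the proposed fix. The ``bad-event'' characterization you invoke --- that $\cM$ is $(\eps,\delta)$-DP \emph{iff} there is an event of probability at most $\delta$ outside of which the pointwise density ratio is bounded by $e^\eps$ --- is false in the forward direction; this ``probabilistic''/``pointwise'' form of DP is strictly stronger than $(\eps,\delta)$-DP. Concretely, take $\eps = 0$, $\delta = 0.1$, and let $P = \mathrm{Bernoulli}(0.55)$, $Q = \mathrm{Bernoulli}(0.45)$ on neighboring inputs. One checks directly that $P(S) \le Q(S) + 0.1$ and $Q(S) \le P(S) + 0.1$ for every $S \subseteq \{0,1\}$, so this is $(0,0.1)$-DP; but $p(1) > q(1)$, so any event $E$ outside of which $p \le e^\eps q = q$ must contain $1$, giving $P(E) \ge 0.55 \gg \delta$. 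In general the DP inequality only yields $P(E) \le e^\eps Q(E) + \delta$ for $E = \{y : p(y) > e^\eps q(y)\}$, not $P(E) \le \delta$. So the ingredient you need for the ``union bound on bad events'' step is unavailable (and, as a secondary point, the bad event for $\cM_2$ depends on $y_1$, so even granting the characterization the union would need care).

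The standard way to make your plan rigorous replaces ``event of probability $\le \delta$'' with ``distribution at total-variation distance $\le \delta$'': $P(S) \le e^\eps Q(S) + \delta$ for all $S$ iff there exists a distribution $P'$ with $\mathrm{TV}(P,P') \le \delta$ and $P'(S) \le e^\eps Q(S)$ for all $S$ (essentially Lemma 3.17 of Dwork--Roth). With this, for $k=2$ you replace $P_1 = \mathrm{Law}(\cM_1(X))$ by such a $P_1'$ at TV cost $\delta_1$, replace each conditional law $P_2^{y_1} = \mathrm{Law}(\cM_2(y_1,X))$ by $P_2'^{y_1}$ at TV cost $\delta_2$, observe that the composed law $\tilde P$ (draw $y_1 \sim P_1'$, then $y_2 \sim P_2'^{y_1}$) satisfies $\tilde P(S) \le e^{\eps_1+\eps_2} Q(S)$ by multiplying the two pointwise bounds under the integral, and bound $\mathrm{TV}(P,\tilde P) \le \delta_1 + \delta_2$ by the triangle inequality through the hybrid that keeps $P_1$ but swaps in $P_2'^{y_1}$. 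Applying the easy ($\Leftarrow$) direction of the lemma to the pair $(P,\tilde P, Q)$ then gives exactly $(\eps_1+\eps_2,\delta_1+\delta_2)$-DP. In short: your ``union of bad events'' should become a triangle inequality for total-variation distance between distributions, not a union bound over events of the original mechanism.
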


We will also use the advanced composition; the version stated below is a special case of~\cite[Corollary 3.21]{DworkR14}.

\begin{theorem}[Advanced Composition~\cite{DworkRV10, DworkR14}] \label{thm:advanced-composition}
Let $\eps, \delta \in (0, 1]$ and $k \in \N$. Suppose that we run $k$ (possibly adaptive) mechanisms where each mechanism is $\left(\frac{\eps}{2\sqrt{2k\ln(2/\delta)}}, \frac{\delta}{2k}\right)$-DP. Then the entire algorithm is $(\eps, \delta)$-DP.
\end{theorem}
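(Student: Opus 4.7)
The plan is to control the ``privacy loss'' random variable accumulated across the $k$ mechanisms. For any $(\eps_0, \delta_0)$-DP mechanism $\cM$ operating on neighboring inputs $X, X'$, and any outcome $y$ in its range, define the privacy loss $Z(y) := \ln\bigl(\Pr[\cM(X) = y]/\Pr[\cM(X') = y]\bigr)$. Two standard facts, derived directly from the $(\eps_0,\delta_0)$-DP guarantee, are: (i) except on a set of $\cM(X)$-measure at most $\delta_0$, we have $|Z| \leq \eps_0$; and (ii) conditional on the good event, $\E[Z] \leq \eps_0 \cdot \tfrac{e^{\eps_0}-1}{e^{\eps_0}+1} = O(\eps_0^2)$ for small $\eps_0$. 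The rough idea is that each mechanism ``spends'' only $O(\eps_0^2)$ privacy in expectation, while its worst-case leakage is $\eps_0$, so summing $k$ such losses concentrates much more tightly than the basic composition bound $k\eps_0$ would suggest.

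For the adaptive setting, one inductively defines privacy losses $Z_1, Z_2, \ldots, Z_k$, where $Z_i$ is the privacy loss of $\cM_i$ conditioned on the previously observed outputs $y_1, \ldots, y_{i-1}$; a standard unrolling shows that the privacy loss of the entire composition, viewed as a single mechanism on $(X,X')$, equals $\sum_i Z_i$ pointwise on outputs, and that each $Z_i$ (conditional on history) still enjoys properties (i) and (ii). First I would union-bound over the $k$ coordinate-wise ``bad events'' of mass $\delta_0$ each, which contributes $k\delta_0$ to the overall $\delta$; on the complementary good event every $|Z_i| \leq \eps_0$ holds simultaneously.

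Next, on that good event, the centered sequence $\bigl(Z_i - \E[Z_i \mid y_{<i}]\bigr)_{i=1}^k$ is a bounded martingale difference sequence. I would apply either Azuma--Hoeffding or, more sharply, a Bernstein-type inequality that exploits the second-moment bound $\E[Z_i^2 \mid y_{<i}] = O(\eps_0^2)$, to conclude that the centered sum is at most $O\bigl(\eps_0 \sqrt{k \ln(1/\delta')}\bigr)$ except with probability $\delta'$. Adding back the deterministic mean contribution of size $k \cdot O(\eps_0^2)$, the total privacy loss is at most $O\bigl(\eps_0\sqrt{k\ln(1/\delta')} + k\eps_0^2\bigr)$ with probability at least $1 - \delta' - k\delta_0$. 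Plugging in $\delta' = \delta/2$, $\delta_0 = \delta/(2k)$, and $\eps_0 = \eps/\bigl(2\sqrt{2k\ln(2/\delta)}\bigr)$ makes the first term at most $\eps/2$ and the second term negligible (given $\eps \leq 1$), so the composition's privacy loss exceeds $\eps$ only on a set of mass at most $\delta$. Converting this high-probability bound on the privacy loss back to an $(\eps,\delta)$-DP statement via the standard equivalence finishes the argument.

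The main obstacle is tracking the leading constants so that the precise threshold $2\sqrt{2k\ln(2/\delta)}$ actually comes out. A naive Azuma application loses a constant factor that cannot be recovered by rescaling, so one has to work at the level of moment generating functions: bound $\E[\exp(\lambda Z_i)\mid y_{<i}] \leq \exp\bigl(\lambda \eps_0 \tanh(\eps_0/2) + c \lambda^2 \eps_0^2\bigr)$ using (i)--(ii), multiply over $i$, and apply Markov to $e^{\lambda \sum_i Z_i}$ with an optimized $\lambda$. This is the heart of the Dwork--Rothblum--Vadhan argument; once it is in place, matching parameters as above gives the stated corollary.
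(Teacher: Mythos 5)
The paper does not actually prove this statement: it is cited as a known special case of Corollary~3.21 in Dwork and Roth's monograph, so there is no in-paper proof for your attempt to be compared against. Your sketch is, however, a faithful outline of the standard Dwork--Rothblum--Vadhan argument that underlies that citation: define the per-step privacy-loss random variables, exploit the fact that their conditional mean is $O(\eps_0^2)$ while their conditional range is $O(\eps_0)$, control the centered sum by a martingale/MGF argument (Azuma loses constants, so one works at the MGF level as you note), and then plug in $\eps_0 = \eps/(2\sqrt{2k\ln(2/\delta)})$, $\delta_0 = \delta/(2k)$, $\delta' = \delta/2$ to recover $(\eps,\delta)$-DP.

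Two places where your write-up is more informal than the actual argument and would need shoring up. First, the reduction from $(\eps_0,\delta_0)$-DP to the pure-DP case is not literally ``there is a set of output-measure at most $\delta_0$ outside of which $|Z|\le\eps_0$''; the correct statement (the approximate-max-divergence / dense-model characterization used by DRV) is that an $(\eps_0,\delta_0)$-DP mechanism can be coupled to an $\eps_0$-DP mechanism up to total-variation distance $\delta_0$, and it is those $k$ coupling-failure events that contribute $k\delta_0$. Phrasing it as a bad set of outcomes is not quite right and would not carry through the adaptive composition cleanly. Second, the step ``converting a high-probability bound on the cumulative privacy loss back to $(\eps,\delta)$-DP'' is indeed standard, but it requires the direction of the implication stated with care (a bound on $\Pr[\sum_i Z_i > \eps]$ under the distribution induced by $X$, not a two-sided bound), and this is where the adaptive, conditional nature of the $Z_i$ matters. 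With those two points made precise, the parameter bookkeeping you describe does match the stated threshold: $\sqrt{2k\ln(2/\delta)}\cdot\eps_0 = \eps/2$, and the drift term $k\eps_0(e^{\eps_0}-1) \le 2k\eps_0^2 \le \eps^2/(4\ln(2/\delta)) < \eps/2$ for $\eps,\delta\in(0,1]$, so the total is below $\eps$ with residual failure probability $k\delta_0+\delta'=\delta$.
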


\subsection{Answering Multiple Queries}

We use the following results that provide stronger guarantees than the union bound when answering multiple queries. As stated earlier, these help save a logarithmic factor compared to using the Laplace or the Gaussian mechanisms directly.
\begin{theorem}[\cite{SteinkeU16}] \label{thm:pure-DP-avoid-union}
Let $f: \cX \to \R^D$ be any function, $\eps > 0$ and $\beta \in (0, 0.5)$. There exists an $\eps$-DP algorithm $\cM$ such that, for any input $X$, with probability $1 - \beta$, we have $\|\cM(X) - f(X)\|_{\infty} \leq O\left(D + \log(1 / \beta)\right) \cdot \left(\Delta_{\infty}(f) / \eps\right)$.
\end{theorem}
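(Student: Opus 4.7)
My plan is to instantiate the $K$-norm mechanism of Hardt and Talwar with $K$ equal to the unit $\ell_\infty$-ball. Specifically, on input $X$ the mechanism outputs $\cM(X) := f(X) + \eta$, where $\eta \in \R^D$ is drawn from the density proportional to $\exp\!\left(-\eps \|x\|_\infty / \Delta_\infty(f)\right)$. A direct computation in ``$\ell_\infty$-spherical'' coordinates shows the normalizing constant equals $D!\,(2\Delta_\infty(f)/\eps)^D$, so this is a well-defined probability distribution on $\R^D$.

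The privacy analysis is essentially one line. For any pair of neighboring inputs $X, X'$ and any $y \in \R^D$, the reverse triangle inequality for the $\ell_\infty$-norm gives $\bigl|\,\|y - f(X)\|_\infty - \|y - f(X')\|_\infty\,\bigr| \leq \|f(X) - f(X')\|_\infty \leq \Delta_\infty(f)$. Plugging into the density, the pointwise ratio $p(y - f(X))/p(y - f(X'))$ is bounded by $e^{\eps}$, which yields pure $\eps$-DP.

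For accuracy it suffices to bound the tail of $R := \|\eta\|_\infty$. Parameterizing $\eta$ by its $\ell_\infty$ radius $R$ together with a uniformly random point on the boundary of the unit $\ell_\infty$-ball, and using the fact that the $(D-1)$-dimensional content of that boundary at radius $r$ equals $D \cdot 2^D \cdot r^{D-1}$, the marginal density of $R$ is proportional to $r^{D-1} \exp(-\eps r / \Delta_\infty(f))$ on $[0, \infty)$. In other words, $\eps R / \Delta_\infty(f)$ is distributed as $\Gamma(D, 1)$.

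The main technical step is thus a clean tail bound on a $\Gamma(D,1)$ variable, which is where one has to be slightly careful with constants. I would handle it via Markov's inequality on the moment generating function: since $\E[e^{Y/2}] = 2^D$ for $Y \sim \Gamma(D,1)$, we get $\Pr[R > t \cdot \Delta_\infty(f)/\eps] \leq 2^D e^{-t/2}$, which is at most $\beta$ for $t = 2(D \ln 2 + \ln(1/\beta)) = O(D + \log(1/\beta))$. Every other step is essentially forced by the choice of the mechanism.
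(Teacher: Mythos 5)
Your proof is correct and follows essentially the same route as the cited reference: Steinke and Ullman's pure-DP upper bound is obtained by exactly this $\ell_\infty$-ball instantiation of the $K$-norm mechanism (equivalently, the exponential mechanism with loss $\|y - f(X)\|_\infty$), with privacy from the reverse triangle inequality and accuracy from the observation that the $\ell_\infty$-radius of the noise follows a $\Gamma(D,1)$ law. All of your computations — the normalizing constant $D!(2\Delta_\infty(f)/\eps)^D$, the surface-content factor $D\cdot 2^D r^{D-1}$, and the MGF-based tail bound $\Pr[Y>t]\le 2^D e^{-t/2}$ — check out.
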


\begin{theorem}[\cite{GKM21}] \label{thm:apx-DP-avoid-union}
Let $f: \cX \to \R^D$ be any function and $\eps, \delta > 0$. There exists an $(\eps, \delta)$-DP algorithm $\cM$ such that, for any input $X$, with probability $1 - O(1/D^{10})$, we have $\|\cM(X) - f(X)\|_{\infty} \leq O\left(\sqrt{D \log(1/\delta)}\right) \cdot \left(\Delta_{\infty}(f) / \eps\right)$.
\end{theorem}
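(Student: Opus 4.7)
Applying the Gaussian mechanism with standard deviation $\sigma=\Theta(\Delta_2\sqrt{\log(1/\delta)}/\eps)$ and the worst-case bound $\Delta_2\le\sqrt{D}\cdot\Delta_\infty(f)$, then union-bounding over the $D$ coordinates, already yields $(\eps,\delta)$-DP with $\ell_\infty$ error $O(\sigma\sqrt{\log D}) = O(\sqrt{D\log(1/\delta)\log D}\cdot\Delta_\infty(f)/\eps)$, which is a factor of $\sqrt{\log D}$ larger than the theorem requires. The plan is to save this factor by sampling noise from a distribution whose $\ell_\infty$ norm is bounded by $T:=\Theta(\sqrt{D\log(1/\delta)}\cdot\Delta_\infty(f)/\eps)$ \emph{by construction}, rather than relying on a coordinate-wise union bound; we can afford to relax the failure probability from $\delta$ to $1/D^{10}$ only in the accuracy guarantee, but not in the $\delta$ of DP.

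A concrete candidate is the \emph{truncated Gaussian} mechanism: draw $\bm{\eta}\sim N(0,\sigma^2 I_D)$ conditioned on $\|\bm{\eta}\|_\infty\le T$, implemented via rejection sampling with an arbitrary default output after $\mathrm{poly}(D)$ rejections so the mechanism is well-defined. With suitable constants, the unconditioned Gaussian already lies in the cube with probability at least $1-D^{-10}$, which immediately yields the required accuracy bound $\|\cM(X)-f(X)\|_\infty\le T$ from the conditioning.

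The privacy analysis is where the real work lies. For neighbors $X,X'$ with $\bm{v}:=f(X')-f(X)$ and $\|\bm{v}\|_\infty\le\Delta_\infty(f)$, I would split the output space into an \emph{interior} region $R_{\mathrm{in}}=\{\bm{\eta}:\|\bm{\eta}\|_\infty\le T\text{ and }\|\bm{\eta}-\bm{v}\|_\infty\le T\}$ and a \emph{boundary} region $R_{\mathrm{bd}}$ where exactly one of these indicators fires. On $R_{\mathrm{in}}$ the normalizing constants of the two truncated densities are equal by translation invariance of Lebesgue measure, so the Radon--Nikodym derivative is exactly the untruncated Gaussian ratio and the usual Gaussian-mechanism argument gives $(\eps,\delta/2)$-DP. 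On $R_{\mathrm{bd}}$ the ratio can be infinite and must be absorbed into the remaining $\delta/2$ of the privacy budget.

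The main obstacle is therefore the boundary estimate $\Pr[\bm{\eta}\in R_{\mathrm{bd}}]\le\delta/2$ under the truncated law; this is a quantitative Gaussian-on-a-box bound that pins down the joint scaling of $T$ and $\sigma$. I would union-bound over the $2D$ faces of the cube and control the mass of each width-$\Delta_\infty(f)$ boundary strip by the ratio of the Gaussian density at $|\eta_i|=T$ to the truncation normalizer, which dictates taking $T/\sigma$ at an appropriate scale. Because this direct computation is delicate precisely in the regime where $T$ and $\sigma$ are of the same order (which is the regime we need to beat $\sqrt{\log D}$), I would expect to soften the hard indicator by convolving the truncated distribution with an auxiliary Gaussian of scale $o(T)$; the resulting smooth distribution admits a clean R\'enyi-DP analysis, and the extra $o(T)$ perturbation is negligible in the final $\ell_\infty$ accuracy.
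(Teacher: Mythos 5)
The paper does not prove this statement; it is imported verbatim from \cite{GKM21}. So your proposal can only be judged on its own merits, and unfortunately it has a fatal parameter inconsistency at its core.

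You set $\sigma=\Theta(\Delta_2\sqrt{\log(1/\delta)}/\eps)$ and, in the worst case that the theorem must handle, $\Delta_2=\sqrt{D}\cdot\Delta_\infty(f)$, so $\sigma=\Theta\bigl(\sqrt{D\log(1/\delta)}\,\Delta_\infty(f)/\eps\bigr)$. Your truncation radius is $T=\Theta\bigl(\sqrt{D\log(1/\delta)}\,\Delta_\infty(f)/\eps\bigr)$ as well, i.e., $T/\sigma=\Theta(1)$. But for a product Gaussian $N(0,\sigma^2 I_D)$ and any constant ratio $c=T/\sigma$, one has $\Pr[\|\bm{\eta}\|_\infty\le c\sigma]=(2\Phi(c)-1)^D=e^{-\Omega(D)}$, not $1-O(D^{-10})$; getting acceptance probability $1-D^{-10}$ forces $T/\sigma=\Omega(\sqrt{\log D})$, which reinstates the very $\sqrt{\log D}$ you are trying to remove. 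This single miscalibration poisons the rest of the plan: (i) the rejection sampler essentially always hits the fallback output; (ii) on the interior region $R_{\mathrm{in}}$, the Gaussian-mechanism tail bound $\Pr[\log\text{ratio}>\eps]\le\delta$ holds under the \emph{untruncated} Gaussian, so after conditioning on $\|\bm{\eta}\|_\infty\le T$ it becomes $\delta/Z$ with $Z=e^{-\Omega(D)}$, which is useless; and (iii) the boundary mass, which for the product truncation is roughly $D\cdot\Delta_\infty(f)\cdot\phi_\sigma(T)/(\sigma z)\approx\eps\sqrt{D/\log(1/\delta)}$ at $T=\sigma$, is nowhere near $\delta/2$. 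Convolving with an $o(T)$-scale Gaussian smooths the cutoff but does not touch the $1/Z$ inflation or the boundary mass, so the R\'enyi-DP patch does not rescue the argument.

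The deeper reason a fix cannot be found by tuning constants is that your noise is a \emph{product} distribution: the coordinatewise truncated Gaussian has independent coordinates, and for any i.i.d. noise with failure probability $D^{-10}$ a union-bound argument is essentially tight, so the $\sqrt{\log D}$ penalty in $\ell_\infty$ is unavoidable. To beat the union bound at this failure probability the noise must be genuinely correlated. This is exactly what happens in the pure-DP analogue of this theorem (\Cref{thm:pure-DP-avoid-union}, from \cite{SteinkeU16}), where the density is radially symmetric in $\|\cdot\|_\infty$, $p(\bm{\eta})\propto\exp(-\eps\|\bm{\eta}\|_\infty/\Delta_\infty(f))$, so that $\|\bm{\eta}\|_\infty$ concentrates sharply without any coordinatewise union bound. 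A correct route for the $(\eps,\delta)$ case must similarly use a non-product noise with $\ell_\infty$-radial structure (or another correlated construction), together with a privacy analysis that charges the shift-sensitive region to $\delta$ directly; a per-coordinate boundary decomposition of a product truncation will not get there.
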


\subsection{Selection via the Exponential Mechanism}

In the \emph{selection} problem, we are given a set $\cC$ of candidates together with sensitivity-$1$ functions $f_c$ for all $c \in \cC$. The goal is to output a candidate $c \in \cC$ with the smallest $f_c(X)$ where $X$ denotes the (private) input. We say that the output $(\hc, \tf_{\hc})$ is \emph{$\alpha$-accurate} iff $f_{\hc}(X) \leq \min_{c \in \cC} f_c(X) + \alpha$ and\footnote{The original exponential mechanism from~\cite{McSherryT07} does not output the estimate $\tf_{\hc}$ but this can easily be done using, e.g., the Laplace mechanism.} $|\tf_{\hc} - f_{\hc}(X)| \leq \alpha$.

The exponential mechanism~\cite{McSherryT07} can solve this problem with error $O(\log |\cC| / \eps)$, stated more precisely below.

\begin{theorem}[\cite{McSherryT07}] \label{thm:em}
For any $\eps > 0$, there is an $\eps$-DP algorithm for selection that with probability $1 - \beta$ is $O\left(\frac{\log(|\cC|/\beta)}{\eps}\right)$-accurate.
\end{theorem}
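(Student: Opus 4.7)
The plan is to split the privacy budget $\eps$ into two equal parts: use $\eps/2$ to select a candidate $\hc$ via the exponential mechanism, and use the remaining $\eps/2$ to estimate $f_{\hc}(X)$ via the Laplace mechanism. The overall $\eps$-DP guarantee will then follow from basic composition (\Cref{thm:basic-composition}), and a union bound over the two error events will yield the claimed accuracy.

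For the selection step, I would sample $\hc \in \cC$ from the distribution
\begin{equation*}
\Pr[\hc = c] \;=\; \frac{\exp(-\eps f_c(X)/4)}{\sum_{c' \in \cC} \exp(-\eps f_{c'}(X)/4)}.
\end{equation*}
Privacy of this step is the standard exponential mechanism argument: for any neighbors $X, X'$, each $f_c$ changes by at most $1$, so both the numerator and denominator change multiplicatively by at most $e^{\eps/4}$, giving $(\eps/2)$-DP. For accuracy, let $c^* \in \argmin_c f_c(X)$ and set $t := 4\log(2|\cC|/\beta)/\eps$. A direct computation bounds the probability of selecting a ``bad'' candidate:
\begin{equation*}
\Pr[f_{\hc}(X) > f_{c^*}(X) + t] \;\leq\; \frac{|\cC| \cdot \exp(-\eps(f_{c^*}(X)+t)/4)}{\exp(-\eps f_{c^*}(X)/4)} \;=\; |\cC| \cdot e^{-\eps t/4} \;\leq\; \beta/2.
\end{equation*}

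For the value estimation step, I would release $\tf_{\hc} := f_{\hc}(X) + \eta$ with $\eta \sim \Lap(2/\eps)$. Since $f_{\hc}$ has sensitivity $1$, \Cref{lem:laplace} shows this is $(\eps/2)$-DP and, with probability $1 - \beta/2$, gives $|\tf_{\hc} - f_{\hc}(X)| = O(\log(1/\beta)/\eps)$. Combining by \Cref{thm:basic-composition} yields $\eps$-DP, and a union bound on the two failure events yields both $f_{\hc}(X) \leq \min_c f_c(X) + O(\log(|\cC|/\beta)/\eps)$ and $|\tf_{\hc} - f_{\hc}(X)| = O(\log(|\cC|/\beta)/\eps)$ simultaneously with probability $1 - \beta$, as required.

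Since this is a well-known result, there is no real conceptual obstacle; the only subtlety worth noting is that the statement requires releasing the \emph{pair} $(\hc, \tf_{\hc})$ rather than just $\hc$ (as in the original formulation of~\cite{McSherryT07}), which is why the two-step budget-splitting is needed. The choice of using $\eps/2$ for each step is arbitrary and any constant split would give the same asymptotic bound.
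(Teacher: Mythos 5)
Your proposal is correct and matches exactly the approach the paper implicitly takes: the paper cites \cite{McSherryT07} for the exponential-mechanism guarantee and notes in a footnote that the additional estimate $\tf_{\hc}$ can be obtained via the Laplace mechanism, which is precisely your budget-splitting construction. The privacy and accuracy computations (the $e^{\eps/4}$ factors on numerator and denominator giving $(\eps/2)$-DP, the $|\cC|\cdot e^{-\eps t/4}$ tail bound, and the union bound with the Laplace tail) are all standard and correct.
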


\subsection{Linear Queries}

A set of \emph{linear queries} is parametrized by a matrix $\bA \in \R^{D \times N}$. The input to the problem is a vector $\bv \in \{0, 1\}^n$ and the goal is to compute $\bA \bv$. As before, two inputs $\bv, \bv'$ are neighbors iff $\|\bv - \bv\|_1 \le 1$.

We exploit the connection between linear queries and discrepancy proved in~\cite{MuthukrishnanN12}. Before stating their result, we need a variant of discrepancy they employ.  
\begin{definition}[Discrepancy] \label{def:discrepancy}
For $\gamma \in (0, 1]$, let $\disc_\gamma(\bA) := \min_{\bz \in \{-1, 0, +1\}^N, \|\bz\|_1 \geq \gamma N} \|\bA\bz\|_\infty$. When $\gamma = 1$, we may write $\disc(\bA)$ as a shorthand for $\disc_\gamma(\bA)$. %
\footnote{
The case $\gamma = 1$ is the standard definition of discrepancy; see e.g.,~\cite{Chazelle-book} for more details on discrepancy theory.
}
\end{definition}

We can now state their results, which allow us to translate lower bounds on the discrepancy of $\bA$ to that of the corresponding linear query problem.

\begin{theorem}[\cite{MuthukrishnanN12}] \label{thm:linearq-to-disc}
For any $\beta \in (0, 1)$, there exist $\gamma, \eps, \delta > 0$ such that, for any $\bA$, no $(\eps, \delta)$-DP mechanism is $\left(\disc_\gamma(\bA)/2\right)$-accurate with probability $1 - \beta$ for the $\bA$-linear query problem.
\end{theorem}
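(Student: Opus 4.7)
The plan is to reduce the claim to a reconstruction-style attack: any $\alpha$-accurate linear-query mechanism with $\alpha < \disc_\gamma(\bA)/2$ can be post-processed into a procedure whose output agrees with the private input on all but at most $\gamma N$ coordinates, and then an information-theoretic argument shows that a sufficiently-private mechanism cannot leak that much about a uniformly random input.

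\textit{Step 1 (post-processing to reconstruction).} Define the (data-oblivious) estimator $\hat\bv(\by) := \argmin_{\bu \in \{0,1\}^N} \|\by - \bA \bu\|_\infty$. On a run where $\|\cM(\bv) - \bA\bv\|_\infty \le \alpha$, we have $\|\cM(\bv) - \bA \hat\bv\|_\infty \le \alpha$ by the definition of $\hat\bv$, hence by the triangle inequality $\|\bA(\hat\bv - \bv)\|_\infty \le 2\alpha < \disc_\gamma(\bA)$. Since $\hat\bv - \bv \in \{-1,0,+1\}^N$, the contrapositive of the definition of $\disc_\gamma$ forces $\|\hat\bv - \bv\|_1 < \gamma N$. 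Thus with probability at least $1-\beta$, the post-processing recovers $\bv$ up to fewer than $\gamma N$ bit flips.

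\textit{Step 2 (information-theoretic upper bound under DP).} Sample $\bv$ uniformly from $\{0,1\}^N$. For each coordinate $i$, conditioning on $\bv_{-i}$ leaves $v_i$ a uniform bit whose two values induce neighboring datasets, so $(\eps,\delta)$-DP implies a per-coordinate mutual information bound $I(v_i;\cM(\bv) \mid \bv_{-i}) \le g(\eps,\delta)$, where $g$ is an explicit function with $g \to 0$ as $(\eps,\delta) \to 0$ (concretely $g(\eps,\delta) = O(\eps^2 + \delta)$ for small $\eps$). Since $v_i$ is independent of $\bv_{-i}$, this also bounds the unconditional $I(v_i;\cM(\bv))$. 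Fano's inequality then says: if the post-processed $\hat v_i$ satisfies $\Pr[\hat v_i \ne v_i] \le p_i$ for each $i$, then $I(v_i;\cM(\bv)) \ge 1 - H(p_i)$.

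\textit{Step 3 (contradiction).} Step 1 bounds the expected number of reconstruction errors by $(\gamma + \beta) N$, so the average per-coordinate error probability $\bar p := \frac1N \sum_i p_i$ is at most $\gamma + \beta$; by concavity of $H$, averaging Fano over $i$ gives $g(\eps,\delta) \ge 1 - H(\gamma + \beta)$. Given $\beta$, pick $\gamma$ small enough that $\gamma + \beta$ is bounded away from $1/2$ so that $1 - H(\gamma + \beta) > 0$, and then pick $\eps,\delta$ small enough (depending only on $\gamma, \beta$, not on $\bA$ or $N$) that $g(\eps,\delta)$ is strictly smaller. This contradicts the existence of a $(\disc_\gamma(\bA)/2)$-accurate mechanism for any $\bA$.

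The main obstacle is making the per-coordinate mutual information bound $I(v_i;\cM(\bv)\mid\bv_{-i}) \le g(\eps,\delta)$ both clean and $N$-independent in the approximate-DP regime: for pure DP it falls out of the KL characterization, but for $\delta > 0$ one must either truncate to an event of probability $1 - O(\delta/\eps)$ on which the mechanism behaves like a pure-DP one, or pass through a R\'enyi/zCDP-style conversion. A secondary subtlety is extending the argument past $\beta \ge 1/2$, which requires first amplifying success probability by repetition (and paying for it through composition) before applying the reconstruction attack.
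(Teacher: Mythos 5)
The paper itself does not prove \Cref{thm:linearq-to-disc}; it is imported verbatim from~\cite{MuthukrishnanN12}, so there is no in-paper proof to compare against. Judged on its own, your Steps 1--3 are a sound and standard route (a Dinur--Nissim-style reconstruction attack combined with a per-coordinate privacy bound), and they do establish the statement in the regime $\beta < 1/2$: Step~1 is exactly right (note it needs the strict inequality $\alpha < \disc_\gamma(\bA)/2$, since $\|\bA(\hat\bv - \bv)\|_\infty \le 2\alpha$ must be strictly below $\disc_\gamma(\bA)$), and the Fano-style accounting in Step~3 gives a contradiction once $(1-\beta)(1-\gamma) > \tfrac12 + O(\eps+\delta)$. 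The claimed form $g(\eps,\delta)=O(\eps^2+\delta)$ for the conditional mutual information in the approximate-DP regime is a little optimistic—passing through total variation gives something more like $O\bigl((\eps+\delta)\log\frac{1}{\eps+\delta}\bigr)$, and a KL-based route requires truncating the $\delta$-mass first—but this is cosmetic, since $g\to 0$ either way.

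The genuine gap is the case $\beta \ge 1/2$, which the theorem as stated does cover (``for any $\beta\in(0,1)$''). You flag this yourself, but the fix you propose—``amplifying success probability by repetition (and paying for it through composition)''—does not actually work. Repetition only boosts if you can aggregate $k$ runs into one high-confidence output, and here no such aggregation is available when the per-run success probability is below $1/2$: coordinate-wise median fails (if the mechanism is ``all-or-nothing'', a majority of runs will be simultaneously inaccurate with high probability), and picking a ``consistent cluster'' also fails because the inaccurate outputs are allowed to cluster more densely than the accurate ones. Nor can you verify which run is accurate, since that would require knowing $\bA\bv$. So the Fano-style argument you give caps out at $\beta<1/2$, and the route you sketch for pushing past it is a dead end. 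Handling $\beta\ge 1/2$ genuinely needs a different idea (e.g., a packing argument over a Hamming ball that sums the group-privacy lower bound $\Pr[\cM(\bv_0)\in S_\bv]\ge e^{-d(\bv,\bv_0)\eps}(1-\beta)-\cdots$ over exponentially many $\bv$, pitted against the bound $|\{\bv:\by\in S_\bv\}|\le 2^{H(\gamma)N}$ from $\disc_\gamma$), and even that requires care to keep $\eps,\delta$ independent of $N$. At minimum, your write-up should either restrict to $\beta<1/2$ or replace the repetition step with an argument that actually works.
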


\section{Algorithm}

Recall from the Introduction that a path is said to be a \emph{$t$-hop path} if it consists of at most $t$ edges. We write $\dist_{\bw}^{\leq t}(u, v)$ to denote the smallest-weight among all $t$-hop paths from $u$ to $v$.

\subsection{Input-Perturbation Algorithm}

As described earlier in the Introduction, the ``input-perturbation'' algorithm that adds Laplace noise already produces accurate distances for shortest $t$-hop paths when $t$ is small. This is stated below. Note that~\cite{Sealfon16} already proved a similar statement but with one-sided error; we reprove it here for completeness.

\begin{theorem} \label{thm:input-pert}
Let $t \in \N$ be any positive integer and $\beta \in (0, 0.5)$. There is an $\eps$-DP algorithm for computing $\tdist^{\leq t}(u, v)$ such that with probability $1 - \beta$ we have
\begin{align*}
\max_{u, v \in V} |\tdist^{\leq t}(u, v) - \dist^{\leq t}_G(u, v)| \leq O\left(t \cdot \log(n/\beta) / \eps\right).
\end{align*}
\end{theorem}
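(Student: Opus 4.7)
The plan is to analyze the most natural input-perturbation algorithm: compute a noisy weight vector $\tbw := \bw + \bm{\eta}$ where each $\eta_e \sim \Lap(1/\eps)$ independently, and then define $\tdist^{\leq t}(u,v)$ as the smallest-weight $t$-hop path from $u$ to $v$ with respect to $\tbw$ (which can be computed, e.g., by a Bellman--Ford-like dynamic program in case some $\tilde{w}_e < 0$). This estimate is then released for all pairs $u,v$.

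For privacy, I would observe that the weight vector $\bw \in \R^{|E|}$ has $\ell_1$-sensitivity equal to $1$ under the neighboring relation (neighboring graphs differ by at most $1$ in the $\ell_1$ distance on weights). Thus by \Cref{lem:laplace}, releasing $\tbw$ itself is already $\eps$-DP, and since $\tdist^{\leq t}$ is computed purely from $\tbw$ (and the public edge set $E$), post-processing keeps the entire release $\eps$-DP.

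For accuracy, the plan is to first bound the per-edge noise uniformly: since each $\eta_e \sim \Lap(1/\eps)$ satisfies $\Pr[|\eta_e| > c\log(|E|/\beta)/\eps] \leq \beta/|E|$ for a suitable constant $c$, a union bound over at most $\binom{n}{2}$ edges yields that, with probability at least $1-\beta$, every edge satisfies $|\eta_e| \leq O(\log(n/\beta)/\eps)$. Condition on this event. Then for any $t$-hop path $P$, the discrepancy between the noisy and true total weights is at most $t \cdot O(\log(n/\beta)/\eps) = O(t \log(n/\beta)/\eps)$.

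Finally, I would close the argument with a standard two-sided comparison. Let $P^*$ be the optimal $t$-hop path for $(u,v)$ in $G$ and $\tilde{P}$ the optimal $t$-hop path in the noisy graph. Then
\begin{align*}
\tdist^{\leq t}(u,v) \;\leq\; \tilde{w}(P^*) \;\leq\; w(P^*) + O(t\log(n/\beta)/\eps) \;=\; \dist^{\leq t}_G(u,v) + O(t\log(n/\beta)/\eps),
\end{align*}
and symmetrically $\dist^{\leq t}_G(u,v) \leq w(\tilde{P}) \leq \tilde{w}(\tilde{P}) + O(t\log(n/\beta)/\eps) = \tdist^{\leq t}(u,v) + O(t\log(n/\beta)/\eps)$. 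Taking a max over pairs $(u,v)$ (no additional union bound is needed because the event is over edges, not pairs) gives the claimed $\ell_\infty$ error bound. There is no real obstacle here; the only mild care is to note that negative noisy weights are harmless since we restrict to paths of at most $t$ edges and the DP is a straightforward layered shortest path computation.
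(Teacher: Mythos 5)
Your proposal is correct and follows the same approach as the paper: add per-edge Laplace noise, compute noisy $t$-hop shortest distances by post-processing, bound $\|\tbw-\bw\|_\infty$ by $O(\log(n/\beta)/\eps)$ via the Laplace tail plus a union bound over edges, and conclude via the $t$-edge path-length argument. Your explicit two-sided comparison through $P^*$ and $\tilde{P}$ is precisely the step the paper compresses into ``the claimed bound immediately follows.''
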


\begin{proof}
The algorithm works as follows. First, construct the weight $\tbw$ by applying the $\eps$-DP Laplace mechanism (\Cref{lem:laplace}) to $\bw$, i.e., let $\tw_{(u, v)} = w_{(u, v)} + z_{(u, v)}$ where $z_{(u, v)} \sim \Lap(1/\eps)$. Finally, we let $\tdist^{\leq t}(u, v)$ be the weight of the shortest $t$-hop path\footnote{Note that this can be computed in polynomial-time using a simple dynamic programming procedure: let $\tdist^{\leq 0}(u, u) = 0$ for all $u \in V$ and $\tdist^{\leq 0}(u, v) = \infty$ for all $u \ne v \in V$. Then, for $i \in [t], u, v \in V$, let $\tdist^{\leq i}(u, v) = \min_{(u, w) \in E}\{\tw_{(u, v)} + \tdist^{\leq i - 1}(w, v)\}$.} from $u$ to $v$ in $\tG = (V, E, \tbw)$. Since $\tbw$ is a result of an $\eps$-DP Laplace mechanism and our final output $\tdist$ is a post-processing of $\tbw$, we can conclude that our algorithm is $\eps$-DP.

We next argue its utility. From the utility guarantee of the Laplace mechanism (\Cref{lem:laplace}), we can conclude that with probability $1 - \beta$, we have
\begin{align} \label{eq:weight-accuracy}
\|\tbw - \bw\|_{\infty} \leq O\left(\log(|E| / \beta) / \eps\right) \leq O\left(\log(n/\beta) / \eps\right).
\end{align}
Since each $t$-hop path consists of at most $t$ edges, \eqref{eq:weight-accuracy} implies that the weight of any $t$-hop path in $G$ and that in $\tG$ differ by at most $O\left(t \cdot \log(n/\beta) / \eps\right)$. From our definitions of $\tdist^{\leq t}$ and $\dist^{\leq t}$, the claimed bound immediately follows.
\end{proof}

\subsection{Output-Perturbation Algorithm}
\label{sec:output-pert}

For a subset $S \subseteq V$, we define the \emph{$S$-pair shortest path distances problem} to be the problem of outputting, for all $u, v \in S$, an estimate $\tdist(u, v)$ of the smallest-weight $\dist_G(u, v)$ among all paths from $u$ to $v$. (Note that $\dist_G(u, v)$ is still the distance in the original graph $G$, i.e., the path is allowed to leave the subset $S$.)
Of course, the original APSD problem is simply the case $S = V$.

Notice that each $\dist_G(u, v)$ for all $u, v \in S$ has sensitivity at most one. Therefore, we may apply \Cref{thm:pure-DP-avoid-union} and \Cref{thm:apx-DP-avoid-union} directly to the function $f(\bw) = (\dist_G(u, v))_{u, v \in S}$ to achieve the following guarantees for the $S$-pair shortest path distances problem.

\begin{corollary} \label{cor:output-pure-dp}
For any $S \subseteq V$ and $\eps > 0$, there exists an $\eps$-DP algorithm for $S$-pair shortest path distances problem that is $O\left(|S|^2 / \eps\right)$-accurate with probability $1 - O(|S|^{-10})$.
\end{corollary}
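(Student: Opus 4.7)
The plan is to invoke \Cref{thm:pure-DP-avoid-union} directly on the function $f(\bw) = (\dist_G(u,v))_{u,v \in S}$, which is a vector-valued function with $D = |S|^2$ coordinates (or $\binom{|S|}{2}$ if we deduplicate — either way $D = O(|S|^2)$). The only nontrivial input to the theorem is the $\ell_\infty$-sensitivity $\Delta_\infty(f)$, so the bulk of the proof consists of verifying $\Delta_\infty(f) \leq 1$. Once this is in hand, plugging $D = O(|S|^2)$ and $\beta = |S|^{-10}$ into \Cref{thm:pure-DP-avoid-union} yields error at most $O(|S|^2 + \log(1/\beta))/\eps = O(|S|^2/\eps)$ with probability at least $1 - |S|^{-10}$, since $\log(1/\beta) = 10 \log |S| = O(|S|^2)$. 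The algorithm is $\eps$-DP as an immediate consequence of \Cref{thm:pure-DP-avoid-union}, with no post-processing concerns because we output directly what the mechanism produces.

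For the sensitivity claim, I would argue as follows. Let $\bw, \bw'$ be neighboring weight vectors, so $\|\bw - \bw'\|_1 \leq 1$. Fix $u, v \in S$ and let $P$ be a shortest $u$-$v$ path under $\bw$, with total weight $\dist_G(u,v) = \sum_{e \in P} w_e$. Then
\[
\dist_{G'}(u,v) \;\leq\; \sum_{e \in P} w'_e \;\leq\; \sum_{e \in P} w_e + \sum_{e \in P} |w_e - w'_e| \;\leq\; \dist_G(u,v) + \|\bw - \bw'\|_1 \;\leq\; \dist_G(u,v) + 1,
\]
where $G'$ denotes the graph with weight vector $\bw'$. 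By the symmetric argument with the roles of $\bw, \bw'$ swapped, we get $|\dist_G(u,v) - \dist_{G'}(u,v)| \leq 1$, and taking the maximum over $u,v \in S$ gives $\Delta_\infty(f) \leq 1$.

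There is essentially no obstacle here — the argument is a routine instantiation of \Cref{thm:pure-DP-avoid-union} — so the main thing to get right is just the sensitivity bookkeeping above, together with the book-keeping that $\log(1/\beta)$ is absorbed into $D$ for the chosen value of $\beta$. The analogous statement for the $(\eps,\delta)$-DP case (which the paper presumably states right after) would be proved identically but invoking \Cref{thm:apx-DP-avoid-union} in place of \Cref{thm:pure-DP-avoid-union}, yielding error $\tilde{O}(|S|/\eps)$ with the $1 - O(|S|^{-10})$ probability coming directly from that theorem's guarantee with $D = O(|S|^2)$.
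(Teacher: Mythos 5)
Your proposal matches the paper's argument exactly: the paper likewise observes that each $\dist_G(u,v)$ has sensitivity at most one and then applies \Cref{thm:pure-DP-avoid-union} directly to $f(\bw) = (\dist_G(u,v))_{u,v\in S}$. The only difference is that you spell out the (routine, correct) verification that $\Delta_\infty(f)\leq 1$, which the paper asserts without proof.
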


\begin{corollary} \label{cor:output-apx-dp}
For any $S \subseteq V$ and $\eps, \delta > 0$, there exists an $(\eps, \delta)$-DP algorithm for $S$-pair shortest path distances problem that is $O\left(|S| \sqrt{\log(1/\delta)} / \eps\right)$-accurate with probability $1 - O(|S|^{-10})$.
\end{corollary}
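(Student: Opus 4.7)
The proof is a direct application of \Cref{thm:apx-DP-avoid-union} to an appropriately chosen function, so the plan is essentially to verify that the hypotheses line up. Let me define $f: \R^E \to \R^{S \times S}$ by $f(\bw) := (\dist_G(u,v))_{u,v \in S}$, i.e., the vector of all $S$-pair shortest path distances (under edge weights $\bw$, with the underlying graph $(V,E)$ fixed). The ambient dimension is $D = |S|^2$.

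The main calculation I would do is to bound $\Delta_\infty(f) \leq 1$. For any two neighboring weight vectors $\bw, \bw'$ with $\|\bw - \bw'\|_1 \leq 1$, and for any fixed pair $u, v \in S$, if $P$ is the shortest $u$-$v$ path under $\bw$, then
\begin{align*}
\dist_{\bw'}(u,v) \;\leq\; \sum_{e \in P} w'_e \;\leq\; \sum_{e \in P} w_e + \|\bw - \bw'\|_1 \;\leq\; \dist_{\bw}(u,v) + 1,
\end{align*}
and the symmetric inequality holds by swapping the roles of $\bw$ and $\bw'$. Hence each coordinate of $f$ changes by at most $1$, so $\Delta_\infty(f) \leq 1$.

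Now applying \Cref{thm:apx-DP-avoid-union} with this $f$ (and the privacy parameters $\eps, \delta$) yields an $(\eps, \delta)$-DP algorithm $\cM$ whose output, with probability $1 - O(1/D^{10}) = 1 - O(|S|^{-20})$, satisfies
\begin{align*}
\|\cM(\bw) - f(\bw)\|_\infty \;\leq\; O\!\left(\sqrt{D \log(1/\delta)}\right) \cdot \frac{\Delta_\infty(f)}{\eps} \;\leq\; O\!\left(|S| \sqrt{\log(1/\delta)}/\eps\right).
\end{align*}
We output each coordinate of $\cM(\bw)$ as the estimate $\tdist(u,v)$. Since $1 - O(|S|^{-20}) \geq 1 - O(|S|^{-10})$, the claimed accuracy and probability bounds follow, completing the proof.

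There is no real obstacle here: the only step with any content is the sensitivity bound, which follows from the triangle-inequality argument above. \Cref{cor:output-pure-dp} is proved in exactly the same way using \Cref{thm:pure-DP-avoid-union} instead, which would give the error bound $O(D/\eps) = O(|S|^2/\eps)$ with probability $1 - \beta$ for $\beta = \Theta(|S|^{-10})$.
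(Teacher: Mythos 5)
Your proof is correct and takes essentially the same approach as the paper: both apply \Cref{thm:apx-DP-avoid-union} directly to the function $f(\bw) = (\dist_G(u,v))_{u,v\in S}$ after observing that each coordinate has sensitivity at most $1$. You simply spell out the triangle-inequality argument for the sensitivity bound, which the paper leaves implicit.
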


\subsection{Combining the Two Algorithms: Beating the Linear Error}

In this section, we show how to combine the input-perturbation and output-perturbation algorithms to get our final error bound. We start with the following generic lemma for combining the two algorithms.  (We state this lemma in such a way that includes the multiplicative error, since we will use it in the latter sections as well.)

\begin{lemma} \label{lem:combining}
Suppose that for any set $S$, there exists an $(\eps, \delta)$-DP algorithm that is $(\gamma_{\eps, \delta}(|S|), \alpha_{\eps, \delta}(|S|))$-accurate with probability at least $1 - \beta_{\eps, \delta}(|S|)$. Then, for any parameter $s \in [|V|]$, there exists an $(\eps, \delta)$-DP algorithm that is $\left(\gamma_{\eps/2, \delta}(s), O\left(\alpha_{\eps/2, \delta}(s) + n/s \cdot \log^2 n / \eps\right)\right)$-accurate with probability $1 - \beta_{\eps/2, \delta}(s) - O(1/n)$.
\end{lemma}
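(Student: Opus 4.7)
\emph{Algorithm and privacy.} The plan is to sample a uniformly random $S \subseteq V$ of size $s$ (independent of the input), set $t := C(n/s)\log n$ for a sufficiently large constant $C$, and run two sub-routines in parallel with disjoint budgets: (a) the input-perturbation algorithm of \Cref{thm:input-pert} with privacy budget $\eps/2$ and failure probability $1/n$, yielding estimates $\tdist^{\leq t}(u,v)$ for all $u,v \in V$; and (b) the assumed $(\eps/2,\delta)$-DP $S$-pair algorithm, yielding estimates $\tdist_S(w,z)$ for all $w,z \in S$. The final output is
\[
\tdist(u,v) := \min\left\{\tdist^{\leq t}(u,v),\ \min_{w,z \in S}\bigl(\tdist^{\leq t}(u,w) + \tdist_S(w,z) + \tdist^{\leq t}(z,v)\bigr)\right\}.
\]
Since $S$ is data-independent and the combination step is post-processing, basic composition (\Cref{thm:basic-composition}) yields $(\eps,\delta)$-DP overall.

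\emph{Sandwich event.} For each pair $(u,v)$ fix one shortest path $P_{uv}$ in $G$ and let $L_{uv}$ be its number of edges. Define the good event $\mathcal{G}$ to be that, for every pair $(u,v)$ with $L_{uv} > t$, both the first $\min(t,L_{uv})$ and the last $\min(t,L_{uv})$ vertices of $P_{uv}$ contain a vertex of $S$. Since $S$ is a uniformly random $s$-subset, a fixed $t$-vertex window misses $S$ with probability at most $e^{-ts/n} = n^{-\Omega(C)}$; a union bound over at most $2\binom{n}{2}$ windows, with $C$ large enough, gives $\Pr[\mathcal{G}] \geq 1 - O(1/n)$.

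\emph{Error analysis.} Condition on $\mathcal{G}$ and on the accuracy of both sub-routines, a failure event of total probability at most $\beta_{\eps/2,\delta}(s) + O(1/n)$. Write $\alpha_t := O(t\log n/\eps) = O((n/s)\log^2 n/\eps)$, $\alpha_S := \alpha_{\eps/2,\delta}(s)$, and $\gamma := \gamma_{\eps/2,\delta}(s) \geq 1$. For the \emph{upper} bound on $\tdist(u,v)$: if $L_{uv} \leq t$ then option (a) already gives $\tdist^{\leq t}(u,v) \leq \dist_G(u,v) + \alpha_t$; if $L_{uv} > t$, let $w$ be the $S$-vertex on $P_{uv}$ of smallest position and $z$ that of largest position (possibly $w=z$). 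On $\mathcal{G}$, $w$ lies within the first $t$ edges and $z$ within the last $t$, and since both lie on a common shortest path we get $\dist_G(u,v) = \dist_G(u,w) + \dist_G(w,z) + \dist_G(z,v)$ with both $u$-$w$ and $z$-$v$ sub-paths being $t$-hop shortest paths, so
\[
\tdist(u,v) \leq \dist_G(u,w) + \alpha_t + \gamma \dist_G(w,z) + \alpha_S + \dist_G(z,v) + \alpha_t \leq \gamma \dist_G(u,v) + \alpha_S + 2\alpha_t.
\]
For the \emph{lower} bound, every option in the $\min$ is at least $\dist_G(u,v) - \alpha_S - 2\alpha_t$: the first-term option via $\tdist^{\leq t}(u,v) \geq \dist_G^{\leq t}(u,v) - \alpha_t \geq \dist_G(u,v) - \alpha_t$, and each sandwich option by summing the three per-piece lower bounds and applying the triangle inequality. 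Together this establishes the claimed $\bigl(\gamma_{\eps/2,\delta}(s),\ O(\alpha_{\eps/2,\delta}(s) + (n/s)\log^2 n/\eps)\bigr)$-accuracy.

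\emph{Expected main obstacle.} The proof is a fairly clean assembly of composition and a union bound; the one delicate point is that the sandwich upper bound requires the chosen $w,z \in S$ to lie on an actual shortest path from $u$ to $v$, so that the three pieces \emph{sum} to $\dist_G(u,v)$ rather than merely upper-bound it via triangle inequality. Selecting $w,z$ as the extremal $S$-vertices on the pre-fixed shortest path $P_{uv}$ resolves this, and automatically covers the degenerate overlap case $w=z$ (where $\tdist_S(w,w)$ contributes only $\alpha_S$ since $\dist_G(w,w)=0$). The remaining bookkeeping is making sure the multiplicative factor $\gamma$ is only multiplied against $\dist_G(w,z) \leq \dist_G(u,v)$ and that $\gamma \geq 1$ absorbs the other two pieces.
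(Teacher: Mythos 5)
Your proof is correct and follows essentially the same approach as the paper: sample a random $s$-subset $S$, run the input-perturbation algorithm with budget $\eps/2$ for $t$-hop distances with $t = \Theta((n/s)\log n)$, run the assumed $(\eps/2,\delta)$-DP $S$-pair algorithm, combine via the same min formula, and argue accuracy by locating the extremal $S$-vertices on a fixed shortest path so that the three pieces sum exactly to $\dist_G(u,v)$. The only cosmetic differences are that you make the union-bound computation explicit (with the $e^{-ts/n}$ bound over $O(n^2)$ windows), whereas the paper leaves it at ``a simple probability argument,'' and your bookkeeping of the additive-error constant ($\alpha_S + 2\alpha_t$) is slightly tidier than the paper's (which writes $2\alpha_S$ where one copy suffices); both are absorbed by the big-$O$.
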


\begin{proof}
The algorithm works as follows:
\begin{itemize}
\item Randomly sample a subset $S \subseteq V$ of size $s$.
\item Run the $(\eps/2, \delta)$-DP algorithm to compute estimates $\tdist^S(u, v)$ for all $u, v \in S$.
\item Run the $(\eps/2)$-DP algorithm from \Cref{thm:input-pert} to compute $\tdist^{\leq t}(u, v)$ for all $u, v \in V$ where $t := \lceil 10 \cdot (n/s) \log n \rceil$.
\item Finally, for all $u, v \in V$, let
\begin{align*}
\tdist(u, v) = \min\left\{\tdist^{\leq t}(u, v), \min_{w, z \in S} \tdist^{\leq t}(u, w) + \tdist^S(w, z) + \tdist^{\leq t}(z, v)\right\},
\end{align*}
and output $\tdist$.
\end{itemize}
Basic composition (Theorem~\ref{thm:basic-composition}) immediately implies that our algorithm is $(\eps, \delta)$-DP. As for the utility, the guarantee of the $S$-pair shortest path distances algorithm implies that, with probability $1 - \beta_{\eps/2,\delta}(s)$, we have 
\begin{align} \label{eq:acc-hubs}
\dist_G(w, z) - \alpha_{\eps/2, \delta}(s) \leq \tdist^S(w, z) \leq \gamma_{\eps/2, \delta}(s) \cdot \dist_G(w, z) + \alpha_{\eps/2, \delta}(s) & &\forall w, z \in S.
\end{align}
Furthermore, \Cref{thm:input-pert} ensures that, with probability $1 - 1/n$, we have
\begin{align}  \label{eq:acc-short-paths}
|\tdist^{\leq t}(u, v) - \dist^{\leq t}_G(u, v)| \leq O\left(t \cdot \log n / \eps\right) & &\forall u, v \in V.
\end{align}
Next, consider any $u, v \in V$. Let $u = p_0, p_1, \dots, p_\ell = v$ denote a shortest path from $u$ to $v$ (ties broken arbitrarily). If $\ell > t$, let $i^{u, v, \min}, i^{u, v, \max}$ denote the smallest and largest indices $i$ such that $p_i \in S$. Since $S$ is a u.a.r. $s$-size subset of $V$, a simple probability argument implies that $i^{u, v, \min} \leq t$ and $i^{u, v, \max} \geq \ell - t$ with probability $O(1/n^2)$. 

Henceforth, we will condition on these three events to happen; this occurs with probability at least $1 - \beta_{\eps/2,\delta}(s) - O(1/n)$. Under this assumption, we have
\begin{align*}
\tdist(u, v) &= \min\left\{\tdist^{\leq t}(u, v), \min_{w, z \in S} \tdist^{\leq t}(u, w) + \tdist^S(w, z) + \tdist^{\leq t}(z, v)\right\} \\
&\overset{\eqref{eq:acc-short-paths},\eqref{eq:acc-hubs}}{\geq} \min\left\{\dist^{\leq t}(u, v), \min_{w, z \in S} \dist^{\leq t}_G(u, w) + \dist_G(w, z) + \dist^{\leq t}_G(z, v)\right\} \\
&\qquad - 2 \cdot \alpha_{\eps/2, \delta}(s) - O\left(t \cdot \log n / \eps\right) \\
&\geq \dist_G(u, v) - O\left(\alpha_{\eps/2, \delta}(s) + n/s \cdot \log^2 n / \eps\right).
\end{align*}

As for the upper bound on $\tdist(u, v)$, consider two cases, based on whether the shortest path between $u, v$ has at least $t$ edges. If not, then we have
\begin{align*}
\tdist(u, v) \leq \tdist^{\leq t}(u, v) \overset{\eqref{eq:acc-short-paths}}{\leq} \dist^{\leq t}(u, v) + O(t \cdot \log n / \eps) = \dist(u, v) + O(n/s \cdot \log^2 n / \eps).
\end{align*}
Otherwise, we also have
\begin{align*}
\tdist(u, v) &\leq \min_{w, z \in S} \tdist^{\leq t}(u, w) + \tdist^S(w, z) + \tdist^{\leq t}(z, v) \\
&\leq \tdist^{\leq t}(u, p_{i^{u, v, \min}}) + \tdist^S(p_{i^{u, v, \min}}, p_{i^{u, v, \max}}) + \tdist^{\leq t}(p_{i^{u, v, \max}}, v) \\
&\overset{\eqref{eq:acc-short-paths},\eqref{eq:acc-hubs}}{\leq}  \dist^{\leq t}(u, p_{i^{u, v, \min}}) + \gamma \cdot \dist^S(p_{i^{u, v, \min}}, p_{i^{u, v, \max}}) + \dist^{\leq t}(p_{i^{u, v, \max}}, v) \\&\qquad + 2 \cdot \alpha_{\eps/2, \delta}(s) + O\left(t \cdot \log n / \eps\right) \\
&\leq \gamma \cdot \left(\dist^{\leq t}(u, p_{i^{u, v, \min}}) + \dist^S(p_{i^{u, v, \min}}, p_{i^{u, v, \max}}) + \dist^{\leq t}(p_{i^{u, v, \max}}, v)\right) \\&\qquad + 2 \cdot \alpha_{\eps/2, \delta}(s) + O\left(t \cdot \log n / \eps\right) \\
&= \gamma \cdot \dist_G(u, v) + O\left(\alpha_{\eps/2, \delta}(s) + O(n/s \cdot \log^2 n / \eps)\right).
\end{align*}
Therefore, our algorithm is $\left(\gamma, O\left(\alpha_{\eps/2, \delta}(s) + n/s \cdot \log^2 n / \eps\right)\right)$ with probability $1 - \beta_{\eps/2, \delta}(s) - O(1/n)$ as desired.
\end{proof}

By plugging~\Cref{cor:output-pure-dp} into \Cref{lem:combining} with $s = \sqrt[3]{n \log^2 n}$, we immediately arrive at the following, which is a more detailed version of \Cref{thm:pure-intro}.

\begin{theorem} \label{thm:pure-dp-alg}
For any $\eps > 0$, there exists an $\eps$-DP algorithm for APSD that is $O\left(\frac{n^{2/3} \log^{4/3} n}{\eps}\right)$-accurate w.h.p.
\end{theorem}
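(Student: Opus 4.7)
The proof is essentially a direct substitution: I plan to instantiate \Cref{lem:combining} with the $S$-pair shortest path algorithm guaranteed by \Cref{cor:output-pure-dp} and optimize the free parameter $s$. Concretely, \Cref{cor:output-pure-dp} provides an $(\eps',0)$-DP algorithm which, in the notation of \Cref{lem:combining}, corresponds to $\gamma_{\eps',0}(|S|) = 1$, $\alpha_{\eps',0}(|S|) = O(|S|^2/\eps')$, and $\beta_{\eps',0}(|S|) = O(|S|^{-10})$. Feeding these three functions into \Cref{lem:combining} with $\delta = 0$ yields an $\eps$-DP algorithm whose additive error is
\[
O\!\left(\frac{s^2}{\eps} + \frac{n \log^2 n}{s\,\eps}\right),
\]
with failure probability at most $O(s^{-10}) + O(1/n)$.

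Next I would balance the two additive terms. Setting $s^2 \asymp (n\log^2 n)/s$, i.e.\ $s = \lceil (n\log^2 n)^{1/3}\rceil$, makes both summands of order $n^{2/3}\log^{4/3} n / \eps$, giving the advertised accuracy $O(n^{2/3}\log^{4/3} n/\eps)$. With this choice, $s = \Omega(n^{1/3})$, so $s^{-10} = O(1/n)$, and the overall failure probability is $O(1/n)$, which is ``w.h.p.'' in the standard sense.

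Privacy is inherited directly from \Cref{lem:combining}, which already accounts for the basic composition between the output-perturbation step on $S$ and the input-perturbation step on all edge weights; nothing additional needs to be verified. Correspondingly, the only ``work'' in this proof is the parameter balancing described above — there is no real obstacle, since both the accuracy accounting and the high-probability guarantee are already packaged in \Cref{lem:combining} and \Cref{cor:output-pure-dp}. I would write the proof as two or three lines invoking these two results with $s = \lceil (n\log^2 n)^{1/3}\rceil$ and reporting the resulting bound.
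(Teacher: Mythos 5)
Your proposal matches the paper's proof exactly: it instantiates \Cref{lem:combining} with \Cref{cor:output-pure-dp}, reads off the additive error $O(s^2/\eps + n\log^2 n/(s\eps))$, and balances the two terms by taking $s = \Theta\big((n\log^2 n)^{1/3}\big)$, yielding $O(n^{2/3}\log^{4/3} n/\eps)$ with failure probability $O(s^{-10}) + O(1/n) = O(1/n)$. Nothing is missing; this is the same one-line substitution the paper performs.
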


Similarly, by plugging~\Cref{cor:output-apx-dp} into \Cref{lem:combining} with $s = \sqrt{n} \log n / \sqrt[4]{\log(1/\delta)}$, we immediately arrive at the following, which is a more detailed version of \Cref{thm:apx-intro}.

\begin{theorem} \label{thm:apx-dp-alg}
For any $\eps, \delta > 0$, there exists an $(\eps, \delta)$-DP algorithm for APSD that is $O\left(\frac{\sqrt{n} \log n \sqrt[4]{\log(1/\delta)}}{\eps}\right)$-accurate w.h.p.
\end{theorem}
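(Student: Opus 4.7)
The plan is to derive Theorem~\ref{thm:apx-dp-alg} as a direct instantiation of the generic reduction in Lemma~\ref{lem:combining}, using the $(\eps/2, \delta)$-DP $S$-pair shortest path distances algorithm from Corollary~\ref{cor:output-apx-dp} as the inner subroutine. Since Corollary~\ref{cor:output-apx-dp} gives additive accuracy $\alpha_{\eps/2,\delta}(s) = O\!\left(s \sqrt{\log(1/\delta)}/\eps\right)$ with multiplicative factor $\gamma_{\eps/2,\delta}(s) = 1$ and failure probability $\beta_{\eps/2,\delta}(s) = O(s^{-10})$, Lemma~\ref{lem:combining} then yields, for any parameter $s \in [n]$, an $(\eps, \delta)$-DP algorithm that is purely additively $O\!\left(\tfrac{1}{\eps}\left(s\sqrt{\log(1/\delta)} + \tfrac{n}{s} \log^2 n\right)\right)$-accurate with probability $1 - O(s^{-10}) - O(1/n)$.

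The remaining task is to pick $s$ to balance the two summands. Setting $s \sqrt{\log(1/\delta)}$ equal to $(n/s)\log^2 n$ gives $s^2 = \tfrac{n \log^2 n}{\sqrt{\log(1/\delta)}}$, i.e.\ $s = \sqrt{n}\,\log n / \sqrt[4]{\log(1/\delta)}$, and plugging this value into either summand produces the claimed error $O\!\left(\tfrac{\sqrt{n}\,\log n \sqrt[4]{\log(1/\delta)}}{\eps}\right)$. Provided $\delta$ is at least, say, $n^{-O(1)}$ so that $s = \omega(1)$, the failure probability bound $O(s^{-10}) + O(1/n)$ is $o(1)$ (in fact $1/n^{\Omega(1)}$), giving the ``with high probability'' guarantee.

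No step is really an obstacle: the work has all been front-loaded into Lemma~\ref{lem:combining} and Corollary~\ref{cor:output-apx-dp}, and the only nontrivial move is the arithmetic choice of $s$. The one mild subtlety to mention in the write-up is that Corollary~\ref{cor:output-apx-dp}, invoked at privacy parameters $(\eps/2, \delta)$ rather than $(\eps, \delta)$, only changes the accuracy by an absolute constant factor that is absorbed into the $O(\cdot)$, so basic composition (Theorem~\ref{thm:basic-composition}) between this subroutine and the $(\eps/2)$-DP input-perturbation stage of Lemma~\ref{lem:combining} yields overall $(\eps,\delta)$-DP, exactly as claimed.
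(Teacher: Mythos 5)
Your proposal matches the paper's proof exactly: both plug Corollary~\ref{cor:output-apx-dp} into Lemma~\ref{lem:combining} with $s = \sqrt{n}\,\log n/\sqrt[4]{\log(1/\delta)}$ to balance the $O(s\sqrt{\log(1/\delta)}/\eps)$ output-perturbation error against the $O((n/s)\log^2 n/\eps)$ input-perturbation error. The paper presents this as a one-line instantiation, while you additionally spell out the arithmetic and the failure-probability bookkeeping, which is fine.
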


\section{Lower Bound}
\label{sec:lb}

In this section, we prove our lower bound (\Cref{thm:lb-main}).

\subsection{Reducing Linear Queries to Shortest Path}

We start by presenting our reduction from linear queries to APSD. We follow the terminology of~\cite{Bodwin19} for the notion of strongly metrizable path systems.  
\begin{definition}[Strongly Metrizable Path System~\cite{Bodwin19}]
A \emph{path system} is a pair $(U, \Pi)$ where $U$ is the set of \emph{vertices} and $\Pi$ is a set of paths. A path system $(U, \Pi)$ is \emph{strongly metrizable} if there is a weighted graph $H = (U, E_H, w_H)$ such that each path $\pi \in \Pi$ is the unique shortest path between its endpoints in $H$.
\end{definition}

Additionally, an \emph{ordered path system} $(U, \Pi, \prec)$ is a path system $(U, \Pi)$ together with a (total) ordering $\prec$ on $U$ such that every path $\pi \in \Pi$ respects such an ordering, i.e., $\pi = (p_1, \dots, p_\ell)$ satisfies $p_i \prec p_j$ for all $1 \leq i < j \leq \ell$. We say that an ordered path system $(U, \Pi, \prec)$ is strongly metrizable iff the underlying path system $(U, \Pi)$ is strongly metrizable.

Finally, we define the \emph{incidence matrix} of the path system $\Pi$ to be the matrix $\bA^{\Pi}$ whose rows are indexed by $\pi \in \Pi$ and columns are indexed by $u \in U$ such that $\bA^{\Pi}_{\pi, u} = 1$ iff $u \in \pi$.

Below we provide our reduction from linear queries to APSD. This reduction works by replacing each vertex in $U$ by an edge with a weight being the input of the linear query problem, as outlined in the Introduction.

\begin{lemma} \label{lem:reduction}
Let $(U, \Pi, \prec)$ be any strongly metrizable ordered path system. If there exists an $(\eps, \delta)$-DP algorithm that is $\alpha$-accurate with probability $1 - \beta$ for APSD on all $2|U|$-vertex graphs, then there exists an $(\eps, \delta)$-DP algorithm that is $\alpha$-accurate with probability $1 - \beta$ for $\bA^{\Pi}$-linear query.
\end{lemma}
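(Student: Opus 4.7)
The plan is to reduce the $\bA^{\Pi}$-linear query problem to APSD on a $2|U|$-vertex graph by ``splitting'' each vertex $u \in U$ into a pair $(u^{\myin}, u^{\myout})$, encoding the input $\bv \in \{0,1\}^{|U|}$ as the weights of the inner edges $(u^{\myin}, u^{\myout})$, and transplanting (sufficiently scaled copies of) the edges of the certifying graph $H$ between these pairs. Since every $\pi \in \Pi$ is the \emph{unique} shortest endpoint-path in $H$, a large enough scaling should make the same behavior persist in the constructed graph even after the inner-edge contributions are added.

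Concretely, I would first invoke strong metrizability to obtain a weighted graph $H = (U, E_H, w_H)$ certifying $(U, \Pi)$ and define the \emph{gap}
\begin{align*}
\gamma \;:=\; \min_{\pi \in \Pi} \bigl(\,d^{(2)}_H(\pi) - w_H(\pi)\,\bigr) \;>\; 0,
\end{align*}
where $d^{(2)}_H(\pi)$ is the weight in $H$ of the second-shortest walk between the endpoints of $\pi$; positivity follows from uniqueness of the shortest path and finiteness of $\Pi$. Fix a scale $M > 2|U|/\gamma$, and construct $G = (V, E, \bw)$ on $V := \{u^{\myin}, u^{\myout} : u \in U\}$ with an inner edge $(u^{\myin}, u^{\myout})$ of weight $v_u$ for each $u \in U$ and, for each $\{u, u'\} \in E_H$ with $u \prec u'$, an outer edge $(u^{\myout}, u'^{\myin})$ of weight $M \cdot w_H(\{u, u'\})$. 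Crucially, $(V, E)$ does not depend on $\bv$: only the $|U|$ inner-edge weights do.

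Next I would verify the key structural claim: for each $\pi = (p_1, \dots, p_\ell) \in \Pi$, the ``canonical'' path $p_1^{\myin}, p_1^{\myout}, p_2^{\myin}, \dots, p_\ell^{\myin}, p_\ell^{\myout}$ in $G$ has weight exactly $M \cdot w_H(\pi) + (\bA^{\Pi}\bv)_\pi$ and is the shortest $(p_1^{\myin}, p_\ell^{\myout})$-path. Any alternative path in $G$ projects, via $u^{\myin}, u^{\myout} \mapsto u$, to a $(p_1, p_\ell)$-walk in $H$ distinct from $\pi$ and hence of weight at least $w_H(\pi) + \gamma$, so its outer-edge cost in $G$ alone is at least $M(w_H(\pi) + \gamma) > M w_H(\pi) + 2|U|$, which exceeds the canonical weight by more than $|U| \geq \sum_i v_{p_i}$. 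The reduction is then to feed $G$ to the black-box APSD algorithm, receive $\tdist$, and output $(\widetilde{\bA^{\Pi}\bv})_\pi := \tdist(p_1^{\myin}, p_\ell^{\myout}) - M \cdot w_H(\pi)$ for each $\pi \in \Pi$; $\alpha$-accuracy of $\tdist$ transfers directly to $\alpha$-accuracy of this estimate, with probability $1 - \beta$.

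For privacy, if $\bv, \bv'$ are neighbors then the two induced graphs share the same $(V, E)$ and have weight vectors differing only on inner edges with $\ell_1$ norm at most $1$, so they are APSD-neighbors; subtracting the publicly-known quantities $M \cdot w_H(\pi)$ is post-processing, hence the $(\eps, \delta)$-DP guarantee of the APSD algorithm carries through. The main technical obstacle is the unique-shortest-path claim: it must hold uniformly in $\bv$, which is why the scaling $M$ must swamp the $|U|$-scale fluctuations from the inner edges. Since $\gamma$ depends only on the fixed path system $H$ (and not on $\bv$), choosing $M > 2|U|/\gamma$ settles this.
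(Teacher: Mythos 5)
Your proposal matches the paper's proof essentially step for step: same vertex-splitting into the pairs $(u^{\myin}, u^{\myout})$, same choice of inner-edge weights $v_u$ and outer-edge weights inherited from $H$ (with $\prec$ determining orientation), same identification of the canonical threaded path as the unique shortest path, and same post-processing subtraction of the known $H$-contribution. The only difference is that you spell out the scaling argument explicitly via the gap $\gamma$ and the multiplier $M > 2|U|/\gamma$, whereas the paper compresses this into a ``w.l.o.g.\ the second-shortest path is longer by at least $|U|+1$'' remark; this is a presentational nicety, not a different route.
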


\begin{proof}
Since $(U, \Pi, \prec)$ is strongly metrizable, there exists a weighted graph $H = (U, E_H, w_H)$ such that every path $\pi \in \Pi$ is the unique shortest path between its endpoints in $H$. We may assume w.l.o.g. that the weight of each $\pi \in \Pi$ is smaller than the second-shortest path by an additive factor of $|U| + 1$; otherwise, we can simply scale the weights by an appropriate factor.

We construct the graph $G = (V, E)$ for APSD as follows:
\begin{itemize}
\item For every $u \in U$, construct two vertices $u^{\myin}$ and $u^{\myout}$ in $V$, and add an edge $(u^{\myin}, u^{\myout})$ to $E$.
\item For every edge $(u, v) \in E_H$ where $u \prec v$, add an edge $(u^{\myout}, v^{\myin})$ in $E$.
\end{itemize}
Given an input $\bz \in \{0, 1\}^n$ to the $\bA^{\Pi}$-linear query, we construct the weights $\bw$ for $G$ as follows:
\begin{itemize}
\item For every $u \in U$, let $w((u^{\myin}, u^{\myout})) = z_u$.
\item For every $(u^{\myout}, v^{\myin}) \in E$, let $w((u^{\myout}, v^{\myin})) = w_H(u, v)$. (Note that this is independent of $\bz$.)
\end{itemize}
It is simple to verify that two neighboring inputs $\bz, \bz'$ give rises to neighboring weights $\bw, \bw'$. Furthermore, it is also easy to see that, for every path $\pi = (u = p_1, p_2, \dots, p_\ell = v) \in \Pi$, the unique shortest path between $u^{\myin}$ and $v^{\myout}$ in $G$ is exactly the path $(u^{\myin}, u^{\myout}, p_2^{\myin}, p_2^{\myout}, \dots, v^{\myin}, v^{\myout})$. This means that
\begin{align*}
\dist_G(u_{\myin}, v_{\myin}) = w_H(\pi) + \sum_{p \in \pi} z_p = w_H(\pi) + (\bA^{\Pi}\bz)_{\pi}.
\end{align*}
Therefore, we may run the $(\eps, \delta)$-DP algorithm for APSD to get an estimate $\tdist$ and then output $\tdist(u_{\myin}, v_{\myin}) - w_H(\pi)$ for all $\pi \in \Pi$. This is an $(\eps, \delta)$-DP algorithm for $\bA^{\Pi}$-linear query. Furthermore, if the APSD algorithm is $\alpha$-accurate with probability $1 - \beta$, then the linear query algorithm is also $\alpha$-accurate with probability $1 - \beta$.
\end{proof}

We remark that, if we were working in the directed graph case, then we may start with the graph $H = (U, E_H, w_H)$ that is directed and there would be no need for the ordering $\prec$. However, the ordering is needed for undirected graphs in the proof above to avoid the situation where some edge $(p_i^{\myin}, p_i^{\myout})$ could be skipped in some shortest path in $G$.


\subsection{Lower Bound via Discrepancy of Point-Line System}

An \emph{$(N, D)$-planar point-line system ($(N, D)$-PPLS)} is given by $(\cP, \cL)$ where $\cP = \{p_1, \dots, p_N\} \subseteq \R^2$ is a set of points in the plane and $\cL = \{L_1, \dots, L_D\}$ is a set of lines; when $D = N$, we abbreviate it as $N$-PPLS. The adjacency matrix $\bA^{\cP, \cL} \in \{0, 1\}^{D \times N}$ is such that $\bA^{\cP, \cL}_{L, p}$ is one iff $p \in L$.  The discrepancy of $(\cP, \cL)$ is defined as the discrepancy of $\bA^{\cP, \cL}$.

Chazelle and Lvov~\cite{ChazelleL01a} proved the following lower bound for the point-line system.

\begin{theorem}[\cite{ChazelleL01a}] \label{thm:point-line}
There is an $N$-PPLS with discrepancy $\Omega(N^{1/6})$.
\end{theorem}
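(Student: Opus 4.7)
The plan is to combine an arithmetic Szemerédi–Trotter-tight construction with the spectral (eigenvalue-based) lower bound for combinatorial discrepancy.

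\textbf{Construction.} I would use an Elekes-style lattice configuration. Set $n = \lfloor N^{1/3}\rfloor$, let $\cP \subseteq \Z^2$ be the grid $\{0, \dots, n-1\} \times \{0, \dots, 2n^2 - 1\}$, and let $\cL$ consist of the lines $y = ax + b$ for $a \in \{0, \dots, n-1\}$ and $b \in \{0, \dots, n^2 - 1\}$. Every line in $\cL$ meets $\cP$ in exactly $n$ points, giving $I = n^4 = \Theta(N^{4/3})$ incidences in total, matching Szemerédi–Trotter. After trimming or padding, one may assume $|\cP| = |\cL| = N$ without affecting the asymptotics.

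\textbf{Reduction to a spectral estimate.} Let $\bA := \bA^{\cP, \cL}$. For every $\chi \in \{-1, 0, +1\}^N$,
\begin{align*}
\|\bA \chi\|_\infty^2 \;\geq\; \frac{\|\bA \chi\|_2^2}{|\cL|} \;=\; \frac{\chi^\top (\bA^\top \bA)\,\chi}{N}.
\end{align*}
So it suffices to establish $\chi^\top (\bA^\top \bA) \chi \geq c \, N^{1/3} \, \|\chi\|_2^2$ for every $\chi$ of interest (possibly after projecting out an $O(1)$-dimensional ``trivial'' subspace). For $\chi \in \{\pm 1\}^N$ with $\|\chi\|_2^2 = N$ this gives $\|\bA\chi\|_\infty \geq c' N^{1/6}$; for the $\gamma$-discrepancy variant implicitly needed by \Cref{thm:linearq-to-disc}, one has $\|\chi\|_2^2 = \|\chi\|_1 \geq \gamma N$, so the bound survives with an extra $\sqrt{\gamma}$ factor.

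\textbf{Eigenvalue estimate.} Because $\cP$ is essentially the abelian group $\Z/n\Z \times \Z/n^2\Z$ and $\cL$ is the set of affine lines over this group, the matrix $\bA^\top \bA$ is a convolution operator; its eigenvectors are the Fourier characters of the lattice, and the corresponding eigenvalues are explicit exponential sums in the slope/intercept parameters $(a, b)$. Estimating these via Weyl / Gauss sum cancellation shows that, aside from a bounded number of degenerate frequencies, every eigenvalue is at least a constant multiple of $\mathrm{tr}(\bA^\top \bA)/N = I/N = \Theta(N^{1/3})$, which is exactly what we need.

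\textbf{Main obstacle.} The easy step is the trace computation: that the construction has $\Theta(N^{4/3})$ incidences is a short counting exercise. The genuinely delicate step is the \emph{uniform} eigenvalue lower bound for the convolution operator, because Szemerédi–Trotter only controls the average eigenvalue---a priori the spectral mass could concentrate on a tiny subspace, leaving most eigenvalues near zero and making the Rayleigh-quotient bound vacuous. Ruling this out requires the delicate arithmetic cancellation in the associated Gauss/Weyl sums, and it is precisely this step that pins down the final exponent $1/6$; any looseness here (or any choice of a geometrically plausible but arithmetically unstructured line family) would degrade the bound.
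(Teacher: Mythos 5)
This theorem is stated in the paper as a citation of Chazelle and Lvov~\cite{ChazelleL01a}; the paper does not prove it, so there is no ``paper's proof'' to compare your argument against. Your Elekes-type grid construction is indeed a natural choice and is in the right spirit (it saturates Szemer\'edi--Trotter, which is what one needs). However, the spectral step has a genuine gap, and it is not the route Chazelle and Lvov take.

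\textbf{The gap.} Your reduction requires $\chi^\top(\bA^\top\bA)\chi \geq c N^{1/3}\|\chi\|_2^2$ for all $\chi \in \{\pm 1\}^N$ (after projecting out $O(1)$ dimensions), i.e., essentially $\lambda_{\min}(\bA^\top\bA) \gtrsim N^{1/3}$. This is not established, and the Fourier justification you invoke does not apply as stated. First, in the Euclidean (no-wraparound) grid, $(\bA^\top\bA)_{p,p'}$ is \emph{not} a function of $p-p'$ alone: whether the line through $p$ and $p'$ lies in $\cL$ depends on where the pair sits relative to the boundary of the parameter box $a\in[0,n),\,b\in[0,n^2)$. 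So $\bA^\top\bA$ is not a convolution operator and its eigenvectors are not characters. Second, if you instead work on the torus $\Z_n\times\Z_{n^2}$ to restore the convolution structure, the ``lines'' wrap around and are no longer Euclidean lines, so the system is not a PPLS and the result does not transfer. Third, even granting a Fourier decomposition, the exponential-sum estimate is the entire content of the claim and is not carried out; and the fallback of ``projecting out an $O(1)$-dimensional degenerate subspace'' does not automatically rescue a discrepancy \emph{lower} bound, since a $\pm 1$ vector $\chi$ may have a large component in that subspace and one would need a separate argument to control the contribution of $\bA\chi$ along it.

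\textbf{What Chazelle--Lvov actually do.} Their argument is a trace bound on the \emph{hereditary} discrepancy, which sidesteps any need to control individual (let alone the smallest) eigenvalues. Writing $M = \bA^\top\bA$, they prove a bound of the flavor $\herdisc(\bA)^2 \gtrsim \operatorname{tr}(M)^3/\bigl(n^2\,\operatorname{tr}(M^2)\bigr)$ via the determinant lower bound (Cauchy interlacing plus an averaging argument over column subsets). For an incidence matrix one has $\operatorname{tr}(M) = I$ (the number of incidences) and $\operatorname{tr}(M^2) = \sum_L |L\cap\cP|^2$; both are controlled directly by Szemer\'edi--Trotter, and for a construction like yours with $I = \Theta(N^{4/3})$ and all lines of size $\Theta(N^{1/3})$ this immediately yields $\herdisc \gtrsim N^{1/6}$, hence a point-line subsystem with discrepancy $\Omega(N^{1/6})$. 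The trace bound needs only the two traces, not a uniform lower bound on the spectrum; this is precisely what avoids the concentration-of-spectral-mass issue you correctly flag but do not resolve.
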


Using the same technique as in~\cite{MuthukrishnanN12} it is possible to also show a lower bound for the discrepancy variant where $\gamma \ne 1$ (Definition~\ref{def:discrepancy}).

\begin{corollary} \label{cor:point-line-disc}
For any $\gamma > 0$, there is a PPLS
$(\cP, \cL)$, $|\cP| \leq N$ such that $\disc_\gamma(\bA^{\cP, \cL}) \geq \Omega_\gamma(N^{1/6})$.
\end{corollary}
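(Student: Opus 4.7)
Following the technique of~\cite{MuthukrishnanN12}, the plan is to reduce the $\disc_\gamma$ lower bound to the standard discrepancy lower bound of \Cref{thm:point-line} via an elementary monotonicity combined with a robustness-under-restriction property of the Chazelle--Lvov construction. The elementary step is the bound
\[
\disc_\gamma(\bA) \;\geq\; \min_{S \subseteq [N],\, |S| \geq \gamma N} \disc(\bA|_S),
\]
which holds because, for any $\bz \in \{-1, 0, +1\}^N$ with $\|\bz\|_1 \geq \gamma N$, setting $S := \supp(\bz)$ gives $|S| \geq \gamma N$, the restriction $\bz|_S$ lies in $\{-1, +1\}^{|S|}$, and $\bA\bz = (\bA|_S)\,\bz|_S$, so $\|\bA\bz\|_\infty \geq \disc(\bA|_S)$. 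Thus it suffices to lower-bound $\disc(\bA^{\cP, \cL}|_S)$ uniformly over all $|S| \geq \gamma N$ for the PPLS produced by \Cref{thm:point-line}.

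The second, more substantive, step is to show that the construction underlying \Cref{thm:point-line} is \emph{robust under adversarial subsampling}: for every $S \subseteq \cP$ with $|S| \geq \gamma N$, the restricted system $(\cP \cap S, \cL)$ still satisfies $\disc(\bA^{\cP, \cL}|_S) \geq \Omega_\gamma(N^{1/6})$. The Chazelle--Lvov lower bound arises from the $\Theta(N^{4/3})$ point-line incidences distributed across $\Theta(N)$ lines, combined with a spectral/averaging estimate on $\bA^{\cP, \cL}$. Since removing any set of $(1-\gamma)N$ points can destroy at most an $O_\gamma(1)$-fraction of the $\Theta(N^{4/3})$ incidences, the same proof replayed on the restricted incidence matrix yields a discrepancy bound of the form $\Omega(\sqrt{\gamma}\cdot N^{1/6}) = \Omega_\gamma(N^{1/6})$. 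Combined with the monotonicity above, this gives $\disc_\gamma(\bA^{\cP, \cL}) \geq \Omega_\gamma(N^{1/6})$, as desired.

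\emph{Main obstacle.} The entire difficulty lies in the robustness claim. Generic black-box boosting tricks---taking disjoint copies of a smaller PPLS, padding with isolated dummy points, or randomly completing the zero coordinates of $\bz$ to $\pm 1$---all fail: each either loses the discrepancy hypothesis, or merely reduces the question to $\disc_\gamma$ of a smaller base system (arguing in a circle), or incurs a random-completion error of order $\sqrt{N \log N}$ that dwarfs the target bound $N^{1/6}$. One must therefore open up the Chazelle--Lvov proof and verify that its spectral and incidence-counting estimates are stable under adversarial column deletion with only a $\gamma$-dependent multiplicative loss. The key combinatorial observation enabling this is that $\Theta(N^{4/3})$ incidences spread across $\Theta(N)$ lines cannot have more than an $O_\gamma(1)$-fraction destroyed by any set of $(1-\gamma)N$ point deletions, which is precisely the slack the spectral lower bound can absorb.
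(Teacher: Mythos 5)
Your monotonicity reduction is correct (and is in fact an equality, since $\disc_\gamma(\bA) = \min_{S:\,|S| \geq \gamma N}\disc(\bA|_S)$), but it merely restates the problem: the corollary is now equivalent to the ``robustness under adversarial subsampling'' claim, which you acknowledge is the whole difficulty and then do not actually prove. Your argument for it---that deleting $(1-\gamma)N$ points destroys only an $O_\gamma(1)$-fraction of the $\Theta(N^{4/3})$ incidences, so the spectral lower bound ``replays''---is a heuristic, not a proof. Nothing in the Chazelle--Lvov argument is a simple incidence count; it is a trace/eigenvalue estimate tied to the specific structure of $\bA^T\bA$, and there is no a priori reason that deleting an adversarially chosen constant fraction of columns preserves the relevant spectral quantities. (For instance, a single point can lie on $\Theta(N^{1/3})$ lines, so the adversary can concentrate deletions on high-incidence points, and the surviving incidence graph may look quite different.) You would need to reopen the Chazelle--Lvov proof and carry through each estimate on the restricted matrix; that work is entirely missing.

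The paper's proof, following the Muthukrishnan--Nikolov technique (which is \emph{not} the route you sketch), sidesteps all of this. It pairs the Chazelle--Lvov \emph{lower} bound with a matching $O(N^{1/6})$ \emph{upper} bound on the full discrepancy of every PPLS (\Cref{thm:tight-discrepancy-point-line}, proved via Lovett--Meka partial coloring), and argues by contradiction. Assume $\disc_\gamma(\bA^{\cP^*,\cL^*}) < cN^{1/6}$ for all PPLS with at most $N$ points. Then iterate partial colorings on the uncolored points: each step colors a $\gamma$-fraction, so after $T = O_\gamma(1)$ rounds at most $\hN = N(0.5 c_1/c_2)^6$ points remain; color those with the upper bound, incurring discrepancy $\leq c_2\hN^{1/6} = 0.5 c_1 N^{1/6}$. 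The resulting full coloring has discrepancy $< T\cdot cN^{1/6} + 0.5 c_1 N^{1/6}$, and choosing $c$ small makes this $< c_1 N^{1/6}$, contradicting the lower bound. This black-box argument never touches the internals of Chazelle--Lvov. The missing ingredient in your plan is precisely the upper bound: once you have it, the iterated-partial-coloring contradiction does the work that your unproven robustness claim was supposed to do.
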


Before we continue with our lower bound proof, let us note that Chazelle and Lvov~\cite{ChazelleL01a} also gave an upper bound of $O(N^{1/6} \log^{2/3} N)$ for the discrepancy of $N$-PPLS. This leaves a gap of $O(\log^{2/3} N)$ compared to the lower bound. Thanks to the recent advancements in discrepancy theory, we observe that a tight upper bound of $O(N^{1/6})$ can be shown---thereby closing this gap. We provide a proof in \Cref{sec:discrepancy-upper-bound}.

We are now ready to prove our lower bound.

\begin{proof}[Proof of \Cref{thm:lb-main}]
Let $N = \lfloor n/2 \rfloor$, and let $(\cP, \cL)$ be the PPLS as in \Cref{cor:point-line-disc} with $|\cP| \leq N$. Let $U = \cP$ and let $\prec$ denote the lexicographic ordering of the points, and for every $L \in \cL$, add a single path $\pi_L$ to $\Pi$ corresponding to an enumeration of points on $L$ in the order defined by $\prec$. To see that $(U, \Pi, \prec)$ is strongly metrizable, consider the graph $H = (U, E_H, w_H)$ where there is an edge between all points $p, p'$ such that there is no other point on the line segment $pp'$ and the weight of such an edge is $w_H(p, p') = \|p - p'\|_2$. In the graph $H$, the path $\pi_L$ for each line $L$ with starting point $u$ and end point $v$ is the unique shortest path of length $\|u - v\|_2$ between $u$ and $v$. 

Therefore, by \Cref{lem:reduction} and \Cref{thm:linearq-to-disc}, for any $\beta \in (0, 1)$ and any sufficiently small $\gamma, \eps, \delta > 0$, no $(\eps, \delta)$-DP algorithm for APSD can be $\alpha$-accurate w.p. at least $1 - \beta$ for
\begin{align*}
\alpha = \Omega\left(\disc_\gamma(\bA^{\Pi})\right) = \Omega\left(\disc_\gamma(\bA^{\cP, \cL})\right) \geq \Omega\left(n^{1/6}\right). &\qedhere
\end{align*}
\end{proof}

\section{Algorithm with Both Additive and Multiplicative Error}

In this section, we show how to reduce the additive error when we allow multiplicative errors.

\subsection{Private Distance Oracles}

We start by considering the $S$-pair shortest path distances problem and give an improved additive error compared to the algorithms in \Cref{sec:output-pert}. As stated in the Introduction, our algorithm uses an approximate distance oracle due to Thorup and Zwick~\cite{ThorupZ05}, except that we use the  private selection algorithm (\Cref{thm:em}) in each step. We formalize this below for the $\eps$-DP case.

\begin{theorem} \label{thm:pure-dp-do}
Let $k \in \N$ and $\eps > 0$. There is an $\eps$-DP algorithm for the $S$-pair shortest path distances problem that is $\left(2k - 1, O\left(\frac{k^2 |S|^{1 + 1/k} \log^2 n}{\eps}\right)\right)$-accurate w.h.p. 
\end{theorem}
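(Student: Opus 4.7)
The plan is to privatize the Thorup--Zwick $(2k-1)$-stretch approximate distance oracle~\cite{ThorupZ05}, building it over $S$ rather than over all of $V$. Recall that the non-private construction first samples a nested hierarchy $S = A_0 \supseteq A_1 \supseteq \cdots \supseteq A_{k-1} \supseteq A_k = \emptyset$, where each $v \in A_{i-1}$ is retained in $A_i$ independently with probability $|S|^{-1/k}$; it then computes, for each $v \in S$ and each level $i < k$, the pivot $p_i(v) := \argmin_{w \in A_i} d_G(v, w)$ and the bunch $B(v) := \bigcup_{i<k} \{w \in A_i \setminus A_{i+1} : d_G(v, w) < d_G(v, A_{i+1})\}$. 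The data structure stores $d_G(v, w)$ for every $v \in S$ and every $w \in B(v)$ (which includes $p_i(v)$ for every $i$), and a query $(u, v) \in S \times S$ is resolved by an $O(k)$-step lookup that returns $d_G(u, \tilde w) + d_G(\tilde w, v)$ for a single pivot $\tilde w$, achieving stretch $2k-1$. The hierarchy sampling is data-independent, so the only privacy leakage is through accesses to the function $d_G$.

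To privatize, I would replace each $\argmin$ with the exponential mechanism (\Cref{thm:em}) on candidate set $A_i$ with score $w \mapsto -d_G(v, w)$ (sensitivity $1$), obtaining an approximate pivot $\tilde p_i(v)$. I would then release every stored distance through the multi-query primitive of \Cref{thm:pure-DP-avoid-union}, applied to the vector $f(\bw) := (d_G(v, w))_{(v, w) \in \text{stored pairs}}$, which has $\ell_\infty$ sensitivity $1$ and dimension $D = \tilde O(k|S|^{1 + 1/k})$ by the standard TZ bound on the total bunch size (upgraded to a high-probability bound with a $\log n$ slack). Splitting the budget $\eps$ between the $\tilde O(k|S|)$ exponential-mechanism calls and the one multi-query release via basic composition (\Cref{thm:basic-composition}) yields per-release additive error $\alpha = \tilde O(k|S|^{1+1/k}/\eps)$ w.h.p., and a lower-order error on each pivot selection.

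The final query answer is $\tilde d_G(u, \tilde w) + \tilde d_G(\tilde w, v)$, so it lies within $2\alpha$ of the value the non-private TZ oracle would produce on the same (approximate) hierarchy and pivots; combined with the $(2k-1)$-stretch, this yields the claimed $(2k-1, \tilde O(k^2 |S|^{1+1/k}/\eps))$-accuracy. The main obstacle I anticipate is showing that the classical TZ stretch argument, which inductively relies on $\delta_i(v) = d_G(v, A_i)$ being the \emph{exact} distance to the next layer, survives the replacement by $\tilde \delta_i(v) := \tilde d_G(v, \tilde p_i(v))$, which is off by $\pm O(\alpha)$ plus the exp-mechanism selection error. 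I plan to handle this by propagating the additive slack through the at most $k$ lookup iterations: each iteration contributes $O(\alpha)$ extra additive error, summing to $O(k\alpha)$---absorbed into the stated bound---while the multiplicative factor $2k-1$ is untouched since it comes purely from the triangle inequality applied to true shortest-path distances.
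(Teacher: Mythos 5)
Your high-level plan (privatize Thorup--Zwick by replacing the pivot $\argmin$ with the exponential mechanism and releasing distances via a multi-query primitive) is in the right neighborhood, but there is a genuine gap: you do not privately determine the bunch $B(v)$, and this is where the real work is.

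The classical bunch $B(v) = \bigcup_{i<k}\{w \in A_i \setminus A_{i+1} : d_G(v,w) < d_G(v, A_{i+1})\}$ is a \emph{data-dependent} set---it depends on the private edge weights. This causes two problems you do not address. First, \Cref{thm:pure-DP-avoid-union} requires a fixed function $f:\cX\to\R^D$; if the index set ``stored pairs'' itself changes with the input, the mechanism is not well-defined and the privacy guarantee does not apply. Second, the TZ query loop performs membership tests ``is $w_{i-1} \in B(v_{i-1})$?'', which would leak the exact comparisons $d_G(v,w) \lessgtr d_G(v,A_{i+1})$ without any noise. You cannot sidestep this by releasing all of $(d_G(v,w))_{v,w\in S}$ either, since then $D=|S|^2$ and \Cref{thm:pure-DP-avoid-union} gives error $O(|S|^2/\eps)$---no better than the plain output-perturbation bound from \Cref{cor:output-pure-dp}, which is exactly what you are trying to beat.

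The paper's fix is to make the bunch itself a DP object: for each $v$ and each level $i$, it runs the exponential mechanism $r = \lceil 10\,s^{1/k}\log n\rceil$ times on the candidate set $A_i \setminus B(v)$, so $B(v)$ accumulates (approximately) the $r$ nearest $A_i$-vertices of $v$, and $r$ is chosen so that w.h.p.\ the true $r$-th nearest neighbor in $A_i$ is already past $A_{i+1}$ (i.e.\ $\rnn_G(v,A_i) \le \dist_G(v,A_{i+1})$). That is how the inclusion \eqref{eq:close-vertices-included}, the private surrogate of your bunch condition, is obtained. The bunch membership test then becomes post-processing of DP output. Also, the paper does not need \Cref{thm:pure-DP-avoid-union} at all: each exponential-mechanism call already returns the distance estimate $\hdist(u,v)$ alongside the selected vertex, so the noisy distances come for free. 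A separate multi-query release could in principle be layered on top of a DP bunch, but the missing ingredient in your proposal is the DP construction of the bunch (and of the membership tests), and your error/privacy accounting does not allocate any budget for it.
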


\begin{proof}
For a subset $T \subseteq V$, we use $\dist_G(v, T)$ to denote $\min_{u \in T} \{ \dist_G(v, u) \}$. 

Let $s = |S|$.  Let $q = s^{-1/k}, r = \lceil 10 s^{1/k} \log n \rceil$, and $\eps' = \frac{\eps}{s r k}$.
The algorithm works as follows:
\begin{enumerate}
\item Create sets $S = A_0 \subseteq \cdots \subseteq A_{k - 1}$, where for each $i \in [k - 1]$, each element of $A_{i - 1}$ is included in $A_i$ independently with probability $q$.
\item For each $v \in S$:
\begin{enumerate}
\item Let $B(v) = \emptyset$.
\item For each $i \in \{0\} \cup [k - 1]$:
\begin{enumerate}
\item For each $j \in [r]$:
\begin{enumerate}
\item Use $\eps'$-DP algorithm for selection where the candidate set is $\cC = A_i \setminus B(v)$ and the scoring function is $f_u := \dist(u, v)$. Let $(\hu, \hdist(u, v))$ denote the output. We then add $\hu$ to $B(v)$.
\end{enumerate}
\item Let $p_i(v) := \argmin_{u \in A_i \cap B(v)}\{\hdist(u, v)\}$, where ties are broken arbitrarily.
\end{enumerate}
\end{enumerate}
\item For each $u, v \in V$, the approximate distance $\tdist(u, v)$ is calculated as follows:
\begin{enumerate}
\item Let $u_0 = w_0 = u$ and $v_0 = v$.
\item For each $i \in [k-1]$:
\begin{enumerate}

\item If $w_{i-1} \in B(v_{i-1})$, break.
\item Otherwise, let $u_i = v_{i - 1}, v_i = u_{i - 1}$, and $w_i = p_i(u_i)$.
\end{enumerate}
\item Let $\tdist(u, v) := \hdist(w, u) + \hdist(w, v)$, where $w$ denotes the last $w_i$ that got assigned in the above.
\end{enumerate}
\end{enumerate}
The basic composition theorem implies that our algorithm is $\eps$-DP as desired. We will next analyze its utility. Let $\rnn_G(v, A_i)$ denote the distance between $v$ to the $r$th closest vertex in $A_i$. The utility guarantees from \Cref{thm:em} implies that, with probability $1 - O(1/n^3)$, we have
\begin{align} \label{eq:bottom-acc}
\forall u \in A_i \text{ such that } \dist_G(v, u) \leq \rnn_G(v, A_i) - \alpha, & &u \in B(v),
\end{align}
and
\begin{align} \label{eq:apx-distance-bottom}
\forall u \in A_i,& &|\dist_G(u, v) - \hdist(u, v)| \leq \alpha,
\end{align}
where $\alpha = O\left(\frac{\log n}{\eps'}\right) = O\left(\frac{k s^{1+1/k} \log^2 n}{\eps}\right)$.

Furthermore, notice that, for all $i \in \{0\} \cup [k - 2]$, since each element in $A_i$ is kept in $A_{i+1}$ with probability $q$, we have
\begin{align*}
\Pr[\rnn_G(v, A_i) > \dist_G(v, A_{i+1})] = (1 - q)^r < 1/n^3.
\end{align*}
Therefore, by a union bound, with probability $1 - O(1/n)$ we have
\begin{align*}
\forall v \in V, i \in \{0\} \cup [k - 2], & &\rnn_G(v, A_i) \leq \dist_G(v, A_{i+1}),
\end{align*}
which combined with~\eqref{eq:bottom-acc} implies that, for all $i \in \{0\} \cup [k-2]$, we have
\begin{align} \label{eq:close-vertices-included}
\forall u \in A_i \text{ such that } \dist_G(v, u) \leq \dist_G(v, A_{i+1}) - \alpha, & &u \in B(v).
\end{align}

Similarly, notice that each element is included in $A_{k-1}$ w.p. $q^{k-1} = s^{1/k}/s$ independently. Therefore, with probability $1 - O(1/n)$ we have $|A_{k-1}| \leq r$, which means that
\begin{align*}
A_{k-1} \subseteq B(v).
\end{align*}
Note that this event means that the distance calculation is valid because if we reach the end of the loop (i.e., reaching $i = k - 1$), we will always have $w \in A_{k - 1} \subseteq B(u), B(v)$.

Henceforth, we will assume that these events occur.

Now, consider any $u, v \in V$. Since $\tdist(u, v) := \hdist(w, u) + \hdist(w, v)$ for some $w \in V$, we may lower bound $\tdist(u, v)$ by
\begin{align*}
\tdist(u, v) \overset{\eqref{eq:apx-distance-bottom}}{\geq} \dist(w, u) + \dist(w, v) -2\alpha \geq \dist(u, v) - 2\alpha.
\end{align*}
As for the upper bound on $\tdist(u, v)$, let $\ell \in [k]$ denote the largest value of $i$ for which $u_i, v_i$ are assigned (before the loop breaks/ends). For any $i \in [\ell]$, since the loop does not break, we have $w_{i-1} \notin B(v_{i-1})$. From \eqref{eq:close-vertices-included}, this implies that
\begin{align*}
\dist_G(w_{i-1}, v_{i-1}) \geq \dist_G(v_{i-1}, A_i) - \alpha.
\end{align*}
Furthermore, the definition of $p_i(v_{i-1})$ together with~\eqref{eq:apx-distance-bottom} implies that
\begin{align*}
\dist_G(w_i, v_{i-1}) = \dist_G(p_i(v_{i-1}), v_{i-1}) \leq \dist_G(v_{i-1}, A_i) + \alpha.
\end{align*}
Therefore, we have
\begin{align}
\dist_G(w_i, u_i) = \dist_G(w_i, v_{i-1}) &\leq \dist_G(w_{i-1}, v_{i-1}) + 2\alpha \nonumber \\
&\leq \dist_G(w_{i-1}, u_{i-1}) + \dist_G(u_{i-1}, v_{i-1}) + 2\alpha \nonumber \\
&=  \dist_G(w_{i-1}, u_{i-1}) + \dist_G(u, v) + 2\alpha, \label{eq:distance-grow-each-level}
\end{align}
where the second inequality follows from the triangle inequality.

As a result, we have
\begin{align*}
\tdist(u, v) &= \hdist(u, w_\ell) + \hdist(v, w_\ell) \\
&= \hdist(u_\ell, w_\ell) + \hdist(v_\ell, w_\ell) \\
&\overset{\eqref{eq:apx-distance-bottom}}{\leq}  \dist_G(u_\ell, w_\ell) + \dist_G(v_\ell, w_\ell) + 2\alpha \\
(\text{Triangle inequality}) &\leq 2 \dist_G(u_\ell, w_\ell) + \dist_G(u, v) + 2\alpha \\
&\overset{\eqref{eq:distance-grow-each-level}}{\leq} 2 \dist_G(u_0, w_0) + (2\ell + 1) \cdot \dist_G(u, v) + 2(2\ell + 1)\alpha \\
(\text{Because } w_0 = u_0) &= (2\ell + 1) \cdot \dist_G(u, v) + 2(2\ell + 1)\alpha \\
&\leq (2k - 1) \cdot \dist_G(u, v) + (4k - 2) \alpha.
\end{align*}

Therefore, we can conclude that w.p. $1 - O(1/n)$, the output is $(2k - 1, (4k - 2)\alpha)$-accurate.
\end{proof}

The $(\eps, \delta)$-DP case is nearly identical except we use advanced composition instead of basic composition.

\begin{theorem} \label{thm:apx-dp-do}
Let $k \in \N$ and $\eps, \delta \in [0, 1)$. There is an $\eps$-DP algorithm for the $S$-pair shortest path distances problem that is $\left(2k - 1, O\left(\frac{k^{3/2} \log^{3/2} n \sqrt{|S|^{1+1/k} \log(1/\delta)}}{\eps}\right)\right)$-accurate w.h.p. 
\end{theorem}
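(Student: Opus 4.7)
The plan is to run the exact same algorithm as in the proof of \Cref{thm:pure-dp-do}, changing only the privacy budget $\eps'$ allocated to each exponential-mechanism invocation so that advanced composition (\Cref{thm:advanced-composition}) gives overall $(\eps, \delta)$-DP instead of pure $\eps$-DP. Concretely, the total number of exponential-mechanism calls is $|S| \cdot k \cdot r = srk$ (for each $v \in S$, each of the $k$ levels $i$, and each of the $r$ inner iterations $j$). To apply advanced composition, I would set
\begin{align*}
\eps' \;=\; \Theta\!\left(\frac{\eps}{\sqrt{srk \,\log(1/\delta)}}\right), \qquad \delta' = 0,
\end{align*}
so that $srk$-fold adaptive composition of $\eps'$-DP mechanisms is $(\eps, \delta)$-DP. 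Since each exponential mechanism call is purely $\eps'$-DP, privacy follows directly.

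For utility, nothing in the analysis from \Cref{thm:pure-dp-do} depends on how $\eps'$ was chosen; the only place the privacy parameter enters is in the per-query error bound of \Cref{thm:em}, namely $\alpha = O(\log(|\cC|/\beta)/\eps') = O(\log n / \eps')$ (taking $\beta = 1/\poly(n)$ so that a union bound over the $srk = \poly(n)$ calls still gives high probability). Plugging in $r = \lceil 10 s^{1/k} \log n \rceil$ and the new $\eps'$,
\begin{align*}
\alpha \;=\; O\!\left(\frac{\log n}{\eps'}\right) \;=\; O\!\left(\frac{\log n \cdot \sqrt{srk \, \log(1/\delta)}}{\eps}\right) \;=\; O\!\left(\frac{\log^{3/2} n \cdot \sqrt{k \, |S|^{1+1/k} \, \log(1/\delta)}}{\eps}\right).
\end{align*}

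The remainder of the utility argument from \Cref{thm:pure-dp-do} is verbatim: the bounds \eqref{eq:bottom-acc}, \eqref{eq:apx-distance-bottom}, \eqref{eq:close-vertices-included}, and the sampling arguments showing $A_{k-1} \subseteq B(v)$ depend only on the guarantees of the exponential mechanism and on the structure of the $A_i$'s, not on $\eps'$. Thus the same telescoping inequality \eqref{eq:distance-grow-each-level} yields $(2k - 1, (4k - 2)\alpha)$-accuracy w.h.p., and substituting the bound on $\alpha$ above gives the claimed additive error $O\!\left(k^{3/2} \log^{3/2} n \,\sqrt{|S|^{1+1/k} \log(1/\delta)}/\eps\right)$.

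I do not anticipate a serious obstacle: there are no new geometric or combinatorial ideas beyond \Cref{thm:pure-dp-do}, and the only delicate point is bookkeeping the parameters in advanced composition correctly (in particular, that we pay a $\sqrt{srk}$ factor rather than an $srk$ factor in the noise, which is exactly what saves the $\sqrt{k}$ in the final bound). One minor care-point is that the $\beta$ parameter in \Cref{thm:em} must be chosen as $1/\poly(n)$ so that a union bound over all $srk$ calls still succeeds with high probability; this only contributes an extra $\log n$ to $\alpha$ and is already absorbed by the $\log^{3/2} n$ factor above.
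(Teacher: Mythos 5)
Your proposal is correct and matches the paper's own proof essentially verbatim: the paper likewise reruns the algorithm of \Cref{thm:pure-dp-do} with the per-call privacy budget reduced to $\eps'' := \frac{\eps}{2\sqrt{2(srk)\log(2/\delta)}}$, invokes advanced composition (\Cref{thm:advanced-composition}) for the privacy claim, and recomputes $\alpha = O(\log n/\eps'')$ to obtain the same $O\bigl(\sqrt{k s^{1+1/k}\log(1/\delta)}\log^{3/2} n/\eps\bigr)$ per-query error, from which the stated bound follows after multiplying by the $(4k-2)$ factor.
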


\begin{proof}
The proof is exactly the same as in that of \Cref{thm:pure-dp-do} except that here we use an $\eps''$-DP selection algorithm with $\eps'' := \frac{\eps}{2\sqrt{2(srk)\log(2/\delta)}}$.  Advanced composition  (\Cref{thm:advanced-composition}) implies that the algorithm is $(\eps, \delta)$-DP. As for the error, we now have $\alpha = O\left(\frac{\log n}{\eps''}\right) = O\left(\frac{\sqrt{k s^{1+1/k} \log(1/\delta)} \log^{3/2} n}{\eps}\right)$, resulting in the claimed additive error.
\end{proof}

\subsection{Combining with Input Perturbation Algorithm}

By plugging~\Cref{thm:pure-dp-do} into \Cref{lem:combining} with $s = (n/k^2)^{k/(2k + 1)}$, we immediately arrive at the following, which is a more detailed version of \Cref{thm:pure-mult-intro}.

\begin{theorem} \label{thm:pure-dp-alg-mult}
Let $k \in \N$. For any $\eps > 0$, there exists an $\eps$-DP algorithm for APSD that is $\left(2k - 1, O\left(\frac{k \cdot n^{(k+1)/(2k+1)} \log^2 n}{\eps}\right)\right)$-accurate w.h.p.
\end{theorem}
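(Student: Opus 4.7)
The plan is to apply \Cref{lem:combining} as a black box, using \Cref{thm:pure-dp-do} as the subroutine for the $S$-pair shortest path distances problem, and then to optimize the free parameter $s$ so as to balance the two sources of error that appear on the right-hand side of \Cref{lem:combining}.

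First, I would instantiate \Cref{lem:combining} in the pure-DP setting (i.e., $\delta = 0$) with the algorithm from \Cref{thm:pure-dp-do}. This gives $\gamma_{\eps,0}(|S|) = 2k-1$ and $\alpha_{\eps,0}(|S|) = O\!\left(k^2 |S|^{1+1/k}\log^2 n / \eps\right)$, while the failure probability $\beta_{\eps,0}(|S|)$ is $\mathrm{poly}(1/n)$, so a union bound with the $O(1/n)$ failure probability of the input-perturbation step still yields high-probability correctness. Plugging these into \Cref{lem:combining} produces an $\eps$-DP algorithm with multiplicative error $2k-1$ and additive error
\begin{align*}
O\!\left(\frac{k^2 s^{1+1/k}\log^2 n}{\eps} \;+\; \frac{n \log^2 n}{s\,\eps}\right).
\end{align*}

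Next, I would choose $s$ to balance the two terms. Setting $k^2 s^{1+1/k} \asymp n/s$ gives $s^{(2k+1)/k} \asymp n/k^2$, i.e., $s = (n/k^2)^{k/(2k+1)}$, which is the value suggested in the statement preceding the theorem. Substituting back, both terms evaluate to $O\!\left(k^{2k/(2k+1)} \cdot n^{(k+1)/(2k+1)} \log^2 n / \eps\right)$, and since $k^{2k/(2k+1)} \le k$, this is bounded by $O\!\left(k \cdot n^{(k+1)/(2k+1)} \log^2 n / \eps\right)$, matching the claim.

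The only subtlety, and really the only place where something more than plug-and-chug occurs, is checking that $s = (n/k^2)^{k/(2k+1)}$ is a legal choice, i.e., that it lies in $[1, |V|]$ (which is needed for \Cref{lem:combining} to apply). For $k = O(\log n)$ and $n$ sufficiently large this is routine; for the degenerate regimes where $s < 1$ or $s > n$, one can fall back to the trivial algorithm (for instance, just output $0$ or run the input-perturbation algorithm alone) with a matching or better bound, so these cases can be absorbed into the $O(\cdot)$. I do not expect any real obstacle: the entire proof is an invocation of \Cref{lem:combining,thm:pure-dp-do} followed by the above one-variable optimization.
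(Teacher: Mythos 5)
Your proposal matches the paper exactly: the paper's entire ``proof'' is the one-sentence remark that Theorem~\ref{thm:pure-dp-alg-mult} follows by plugging Theorem~\ref{thm:pure-dp-do} into Lemma~\ref{lem:combining} with $s = (n/k^2)^{k/(2k+1)}$, which is precisely the instantiation and optimization you carry out. Your arithmetic is correct (both terms balance to $O(k^{2k/(2k+1)} n^{(k+1)/(2k+1)} \log^2 n/\eps) \le O(k\, n^{(k+1)/(2k+1)} \log^2 n/\eps)$), and you in fact supply more detail than the paper does, including the degenerate-$s$ sanity check.
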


Similarly, by plugging~\Cref{thm:apx-dp-do} into \Cref{lem:combining} with $s = \left(\frac{n \log^{1/2} n}{k^{3/2} \log^{1/2}(1/\delta)}\right)^{\frac{2k}{3k+1}}$, we immediately arrive at the following, which is a more detailed version of \Cref{thm:apx-mult-intro}.

\begin{theorem} \label{thm:apx-dp-alg-mult}
Let $k \in \N$. For any $\eps, \delta \in (0, 1)$, there exists an $(\eps, \delta)$-DP algorithm for APSD that is $\left(2k - 1, O\left(\frac{k \cdot n^{(k+1)/(3k+1)} (\log n)^{(5k+2)/(3k+1)} (\log(1/\delta))^{k/(3k+1)}}{\eps}\right)\right)$-accurate w.h.p.
\end{theorem}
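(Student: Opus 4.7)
The plan is to combine Lemma~\ref{lem:combining} with the $(\eps,\delta)$-DP distance oracle from Theorem~\ref{thm:apx-dp-do}, mirroring how Theorem~\ref{thm:pure-dp-alg-mult} was deduced from Theorem~\ref{thm:pure-dp-do}, but exploiting the improved additive rate available in the approximate-DP regime. Specifically, Theorem~\ref{thm:apx-dp-do} supplies, for any $S \subseteq V$, an $(\eps', \delta)$-DP algorithm for the $S$-pair shortest path distances problem with $\gamma_{\eps',\delta}(|S|) = 2k-1$ and $\alpha_{\eps',\delta}(|S|) = O\bigl(k^{3/2} |S|^{(k+1)/(2k)} (\log n)^{3/2} \sqrt{\log(1/\delta)}/\eps'\bigr)$. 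Feeding this into Lemma~\ref{lem:combining} with $\eps' = \eps/2$ immediately yields an $(\eps,\delta)$-DP APSD algorithm that is $\bigl(2k-1,\ O(\alpha_{\eps/2,\delta}(s) + (n/s) \log^2 n/\eps)\bigr)$-accurate with high probability, for any choice of $s \in [|V|]$.

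The second step is to optimize $s$. Setting the two additive contributions equal and cancelling the common $1/\eps$, one needs $k^{3/2} s^{(k+1)/(2k)} (\log n)^{3/2} \sqrt{\log(1/\delta)} \asymp n \log^2 n / s$, i.e.\ $s^{(3k+1)/(2k)} \asymp n (\log n)^{1/2} / (k^{3/2} \sqrt{\log(1/\delta)})$. Solving yields exactly the choice $s = \bigl(n (\log n)^{1/2} / (k^{3/2} \log^{1/2}(1/\delta))\bigr)^{2k/(3k+1)}$ announced in the theorem. Substituting this $s$ back, and using the identities $1 - \tfrac{2k}{3k+1} = \tfrac{k+1}{3k+1}$ and $2 - \tfrac{k}{3k+1} = \tfrac{5k+2}{3k+1}$, turns $(n/s) \log^2 n / \eps$ (and, symmetrically, $\alpha_{\eps/2,\delta}(s)$) into $k^{3k/(3k+1)} \cdot n^{(k+1)/(3k+1)} (\log n)^{(5k+2)/(3k+1)} (\log(1/\delta))^{k/(3k+1)} / \eps$. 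Since $k^{3k/(3k+1)} \leq k$, this matches the bound in the theorem statement.

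There is no essential obstacle: both Lemma~\ref{lem:combining} and Theorem~\ref{thm:apx-dp-do} are already in hand, so the argument reduces to bookkeeping on the exponents. The only minor things to verify are that (i) the $\eps/2$ splitting inside Lemma~\ref{lem:combining} only affects constants, which are absorbed into the $O(\cdot)$, and (ii) the chosen $s$ can be taken to lie in $[|V|]$ — if the balanced value formally exceeds $|V|$ one simply sets $s = |V|$, in which case the $(n/s)\log^2 n/\eps$ term is already $O(\log^2 n/\eps)$ and Theorem~\ref{thm:apx-dp-do} alone supplies a stronger bound. The claimed high-probability guarantee follows from a union bound over the two constituent routines' failure events.
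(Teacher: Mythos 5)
Your proposal is correct and follows the paper's proof exactly: plug Theorem~\ref{thm:apx-dp-do} into Lemma~\ref{lem:combining}, choose $s = \bigl(n\log^{1/2} n/(k^{3/2}\log^{1/2}(1/\delta))\bigr)^{2k/(3k+1)}$ to balance the two additive terms, and simplify. Your exponent algebra checks out, and the remarks on $\eps/2$-splitting and the boundary case $s > |V|$ are fine.
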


\section{Discussion and Open Questions}

In this work, we give an algorithm for the APSD problem with error $\tilde{O}(n^{2/3}/\eps)$ in the $\eps$-DP setting and $\tilde{O}(\sqrt{n} / \eps)$ in the $(\eps, \delta)$-DP setting, together with a lower bound of $\Omega(n^{1/6})$ for any sufficiently small $\eps, \delta > 0$. 

An obvious open problem here is to close the gap between the lower and upper bounds. On this front, we note that our lower bound technique cannot go beyond an $\tilde{O}(n^{1/4})$ additive error. The reason is that, for any strongly metrizable path system $(U, \Pi)$, $\bA^{\Pi}$ has VC dimension $d \leq 2$, since unique shortest paths cannot ``fork and rejoin''. Thus, we can immediately use an algorithm of~\cite{MuthukrishnanN12}\footnote{In fact,~\cite{MuthukrishnanN12} gives a root mean square error guarantee. However, one may use the boosting framework of~\cite{DworkRV10} to achieve a similar (up to polylogarithmic factors) bound for the maximum error.} to get an error of $\tilde{O}_{\eps, \delta}\left(n^{1/2 - 1/(2d)}\right) \leq \tilde{O}_{\eps, \delta}(n^{1/4})$. It remains interesting whether a lower bound of $\tilde{\Omega}_{\eps, \delta}(n^{1/4})$ can be shown via our technique. A possibly helpful tool in this direction is the work of~\cite{Bodwin19}, which characterizes the set of strongly metrizable path systems based on a family of (infinite) prohibited subgraphs; it may be possible to use this characterization to construct a more elaborate graph with higher discrepancy than the ones we obtain from PPLS.

When a multiplicative approximation is allowed, we give algorithms with improved, but still polynomial in $n$, additive errors.
As mentioned earlier, our lower bound does \emph{not} apply in this case, and it remains a possibility that an algorithm with, e.g., $O(1)$ multiplicative approximation and polylog($n)$  additive error exists. Coming up with such an algorithm or refuting its existence is also an interesting future direction.

\bibliographystyle{alpha}
\bibliography{ref}

\appendix

\section{Discrepancy Bounds for Planar Point-Line Systems}

\subsection{Tight Upper Bound}
\label{sec:discrepancy-upper-bound}

As mentioned earlier, Chazelle and Lvov~\cite{ChazelleL01a} prove an upper bound of $O(N^{1/6} \log^{2/3} N)$ on the discrepancy of any $N$-PPLS. Below, we improve this upper bound to $O(N^{1/6})$, which is tight in light of the lower bound in~\cite{ChazelleL01a}.

\begin{theorem} \label{thm:tight-discrepancy-point-line}
For any $(N, D)$-PPLS $(\cP, \cL)$, we have $\disc(\bA^{\cP, \cL}) \leq O(N^{1/6})$.
\end{theorem}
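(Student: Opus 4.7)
The plan is to apply the partial coloring lemma of Lovett and Meka~\cite{LovettM15} iteratively, with a line-specific choice of slack parameters that leverages the Szemerédi--Trotter incidence theorem. Recall that the Lovett--Meka lemma states: given vectors $v_1, \ldots, v_m \in \R^n$ and $\lambda_1, \ldots, \lambda_m \geq 0$ with $\sum_i \exp(-\lambda_i^2/16) \leq n/16$, there exists $x \in [-1,1]^n$ with at least $n/2$ coordinates in $\{-1, +1\}$ and $|\langle v_i, x\rangle| \leq \lambda_i \|v_i\|_2$ for every $i$. A naive application using a common $\lambda$ across all lines would force $\lambda = \Theta(\sqrt{\log N})$, which is exactly the source of the extra $\log^{2/3}$ factor in the Chazelle--Lvov bound.

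First I would invoke Lovett--Meka with the indicator vectors $v_L = \mathbf{1}_L \in \{0,1\}^N$ and the line-dependent parameter $\lambda_L := C \cdot N^{1/6}/\sqrt{|L|}$ for a suitable absolute constant $C$; by construction this targets discrepancy $\lambda_L \|v_L\|_2 = C \cdot N^{1/6}$ uniformly on every line. To verify the hypothesis $\sum_L \exp(-\lambda_L^2/16) \leq N/16$, I would bucket the lines by incidence count, $\cL_j := \{L \in \cL : 2^j \leq |L| < 2^{j+1}\}$, and apply Szemerédi--Trotter, which gives the standard rich-line bound $|\cL_j| = O(N^2/2^{3j} + N/2^j)$. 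The sum $\sum_j |\cL_j| \exp\bigl(-C^2 N^{1/3}/(16 \cdot 2^{j+1})\bigr)$ is then dominated by the bucket at $2^j \asymp N^{1/3}$, where the polynomial factor $N^2/2^{3j}$ and the exponential factor $\exp(-\Theta(C^2))$ balance, yielding a total of $O(N/C^3)$; this is at most $N/16$ once $C$ is a sufficiently large absolute constant.

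To convert the partial coloring into a full $\pm 1$ coloring, I would iterate in the usual fashion: one application fixes at least $N/2$ coordinates in $\{-1, +1\}$ while contributing at most $C \cdot N^{1/6}$ to the discrepancy of each line. Restricting to the set of still-uncolored points (of size at most $N/2$), the induced system remains a PPLS on a point set of half the size, so the same argument produces a fresh partial coloring with contribution $C(N/2)^{1/6}$. Summing over $O(\log N)$ rounds yields total discrepancy $C \sum_{t \geq 0} (N/2^t)^{1/6} = O(N^{1/6})$, a convergent geometric series; this is the claimed bound.

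The main obstacle is the balancing computation in the second step: one must confirm that the choice $\lambda_L \propto 1/\sqrt{|L|}$ matches the $k^{-3}$ decay of Szemerédi--Trotter precisely at the ``sweet spot'' $k \sim N^{1/3}$, with negligible contributions both from very rich lines (where $|\cL_j|$ is tiny) and from very sparse lines (where the exponential factor is super-polynomially small). A minor subtlety to flag is that under restriction to the uncolored point set some lines may shrink to size $0$ or $1$, in which case their discrepancy contribution is trivial and they can be ignored, while longer lines still satisfy the Szemerédi--Trotter bound with the smaller point count; thus nothing breaks when recursing. No other step appears to require a nonroutine argument.
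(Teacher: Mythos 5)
Your proposal is correct and matches the paper's proof essentially step for step: both apply the Lovett--Meka partial coloring lemma with line-dependent slack parameters proportional to $N^{1/6}/\sqrt{|L|}$, verify the entropy-type hypothesis by bucketing lines by richness and invoking Szemer\'edi--Trotter (with the balance occurring at $k \asymp N^{1/3}$, which is exactly the threshold that Chazelle--Lvov's weaker partial coloring lemma could not reach without the extra $\log^{2/3}$ factor), and then iterate the partial coloring to a full $\pm 1$ coloring with a geometric decay $\sum_t (N/2^t)^{1/6} = O(N^{1/6})$. Your flag about degenerate lines under restriction corresponds to the paper's convention of only including lines through at least two remaining points and absorbing the rest into an additive $+1$ per round, so no substantive gap remains.
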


The proof of \Cref{thm:tight-discrepancy-point-line} is similar to that of~\cite{ChazelleL01a}, which uses the Szemerédi--Trotter theorem~\cite{SzemerediT83} (stated below) that bounds the number of lines that pass through many points; the main difference is that we will use the partial coloring result of~\cite{LovettM15} (also stated below) which is stronger than that in the original proof of~\cite{ChazelleL01a}.

\begin{theorem}[Szemerédi--Trotter~\cite{SzemerediT83}] \label{thm:st}
There exists a constant $c \geq 1$ such that, for any $N, k \in \N$ with $k \geq 2$ and $N$ points $x_1, \dots, x_N \in \R^2$, the number of lines that pass through at least $k$ points is at most $c(N^2/k^3 + N/k)$.
\end{theorem}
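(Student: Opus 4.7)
The plan is to prove the Szemerédi--Trotter theorem through its more standard incidence form and then extract the stated bound on rich lines. Concretely, I would first establish the following incidence inequality: for any set of $N$ points and $M$ lines in $\R^2$, the total number $I$ of point--line incidences satisfies $I \leq C(N^{2/3}M^{2/3} + N + M)$ for some absolute constant $C$. The cleanest route is Sz\'ekely's proof using the crossing lemma, which avoids the heavier cell-decomposition machinery of the original argument.

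For the incidence bound, I would construct a graph $G$ whose vertex set is the point set $\cP$ and whose edges are obtained, for each line $\ell$, by connecting consecutive points of $\cP \cap \ell$ along $\ell$. A line incident to $r_\ell \geq 1$ points contributes $r_\ell - 1$ edges, so the total number of edges is at least $I - M$. Because any two distinct lines meet in at most one point, no two vertices of $G$ are joined by more than one edge (a double edge would force two lines to share two points), so $G$ is simple. The natural drawing of $G$ has every crossing at an intersection of two lines of $\cL$, giving at most $\binom{M}{2} \leq M^2/2$ crossings. The crossing lemma states that any simple graph drawn in the plane with $n$ vertices and $e \geq 4n$ edges has crossing number at least $e^3/(64 n^2)$. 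If $e = I - M \geq 4N$, combining with $\binom{M}{2}$ yields $e = O(N^{2/3}M^{2/3})$; otherwise $I \leq 4N + M$. Together this gives $I \leq C(N^{2/3}M^{2/3} + N + M)$.

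To derive the stated form, let $M_k$ denote the number of lines passing through at least $k \geq 2$ points. Each such line accounts for at least $k$ incidences, so
\begin{equation*}
k M_k \;\leq\; I \;\leq\; C\bigl(N^{2/3} M_k^{2/3} + N + M_k\bigr).
\end{equation*}
Choosing $c$ large enough, the $C M_k$ term can be absorbed into $(k/2) M_k$ using $k \geq 2$. The remaining inequality $(k/2) M_k \leq C(N^{2/3} M_k^{2/3} + N)$ splits into two regimes: either the first term dominates, giving $(k/2) M_k^{1/3} \leq C N^{2/3}$ and hence $M_k = O(N^2/k^3)$, or the second term dominates, giving $M_k = O(N/k)$. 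In either case, $M_k \leq c(N^2/k^3 + N/k)$, as claimed.

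The main obstacle is the crossing lemma itself, which I would prove by a short probabilistic amplification of the Euler-formula bound $\mathrm{cr}(H) \geq e(H) - 3 v(H) + 6$ (valid because a simple planar graph has at most $3v - 6$ edges). Applying this to a random induced subgraph that keeps each vertex independently with probability $p$, taking expectations, and optimizing $p = 4n/e$ yields the $e^3/(64 n^2)$ lower bound when $e \geq 4n$. A secondary technical point is to ensure the graph $G$ above is genuinely simple before invoking the crossing lemma; the ``two lines meet in at most one point'' observation handles this cleanly, so no extra pruning is needed.
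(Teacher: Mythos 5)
The paper does not prove this statement at all: it is imported as a black-box citation of Szemerédi--Trotter (1983), used only in \Cref{sec:discrepancy-upper-bound} to bound the number of rich lines. So there is no in-paper argument to compare against; what you have written is a self-contained proof via Székely's crossing-number method, which is the standard modern route and is essentially correct. The graph construction, the simplicity argument (two lines share at most one point, so no multi-edges), the $\binom{M}{2}$ bound on crossings, and the probabilistic amplification of $\mathrm{cr}(H) \geq e(H) - 3v(H)$ are all sound, and the bootstrapping from the incidence bound $I \leq C(N^{2/3}M^{2/3} + N + M)$ to the rich-lines count is the right idea.

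One step is glossed over and should be tightened: you claim the $CM_k$ term ``can be absorbed into $(k/2)M_k$ using $k \geq 2$,'' but that absorption requires $C \leq k/2$, i.e., $k \geq 2C$, and the constant $C$ from the incidence bound exceeds $1$. For the remaining range $2 \leq k < 2C$ the inequality gives nothing, and you must fall back on the trivial bound $M_k \leq \binom{N}{2} \leq N^2/2$, which in that range is at most $\tfrac{(2C)^3}{2} \cdot N^2/k^3$ since $k = O(1)$ there; taking $c$ to dominate both regimes finishes the claim. This is a one-line fix, not a structural flaw, but as written the constant-tracking does not go through for small $k$. With that patch your argument fully establishes \Cref{thm:st}, and in fact proves the sharper incidence form, which is more than the paper needs.
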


\begin{theorem}[Partial Coloring~\cite{LovettM15}] \label{thm:partial-coloring}
Let $\bv_1, \dots, \bv_m \in \R^n$ be vectors, $\bx^{\mystart} \in [-1, +1]^n$ be the ``starting'' vector and $c_1, \dots, c_m > 0$ be thresholds\footnote{Note that the original formulation of~\cite{LovettM15} also has an ``approximation error'' parameter $\delta$ and the guarantee is only that $|x^{\myend}_i| \geq 1 - \delta$ for more than half of the coordinates. However, we may derive our formulation by simply halving each of $c_j$ and picking $\delta = \max\{c_1, \dots, c_m\} / n$. The reason their work contains $\delta$ is mainly due to the algorithmic aspect, i.e., the running time grows as $(1/\delta)^{O(1)}$.  Since we are not interested in computational complexity, we choose to discard $\delta$ from the formulation.}. If $\sum_{j \in [m]} \exp(-c_j^2 / 16) \leq n/16$, then there exists a point $\bx^{\myend} \in [-1, +1]^n$ such that
\begin{enumerate}
\item[(i).] $|\left<\bx^{\myend} - \bx^{\mystart}, \bv_j\right>| \leq c_j \|\bv_j\|_2$ for all $j \in [m]$,
\item[(ii).] $x^{\myend}_i \in \{\pm 1\}$ for at least $n/2$ indices $i \in [n]$.
\end{enumerate}
\end{theorem}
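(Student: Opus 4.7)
The plan is to prove $\disc(\bA^{\cP, \cL}) = O(N^{1/6})$ by the standard iterative partial-coloring strategy: starting from $\bx^{(0)} = \mathbf{0} \in [-1,+1]^N$, I apply \Cref{thm:partial-coloring} repeatedly, in each round halving the set of non-$\pm 1$ (``fractional'') coordinates. After $O(\log N)$ rounds every coordinate is in $\{\pm 1\}$, yielding a coloring $\chi$, and the per-line discrepancy of $\chi$ is controlled by the sum $\sum_i \Delta_i$ of the per-round bounds. It therefore suffices to show that round $i$, which starts with $n_i \leq N/2^i$ fractional coordinates, admits a per-line bound $\Delta_i = O(n_i^{1/6})$; the series $\sum_i (N/2^i)^{1/6}$ then telescopes geometrically to $O(N^{1/6})$.

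Within round $i$ I restrict \Cref{thm:partial-coloring} to the fractional coordinates. The vectors $\bv_L$ are $\{0,1\}$-indicators of each line $L$ over those coordinates, so $\|\bv_L\|_2 = \sqrt{|L|_i}$ where $|L|_i$ denotes the number of fractional points on $L$. I set $c_L := \Delta_i / \sqrt{|L|_i}$ with $\Delta_i := C \cdot n_i^{1/6}$ for a sufficiently large absolute constant $C$; this choice automatically gives each line a discrepancy budget of $c_L \|\bv_L\|_2 = \Delta_i$ in this round. The key remaining task is to verify the Lovett--Meka hypothesis
\[
\sum_{L \in \cL} \exp\!\left(-\frac{\Delta_i^2}{16 \, |L|_i}\right) \leq \frac{n_i}{16}.
\]
I bound this sum by grouping lines dyadically according to $|L|_i \in [2^j, 2^{j+1})$; \Cref{thm:st}, applied to the $n_i$ fractional points, gives $O(n_i^2/2^{3j} + n_i/2^j)$ lines in each range. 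For $2^j \ll \Delta_i^2 = C^2 n_i^{1/3}$ the exponential factor provides rapid decay, and the dominant contribution comes from $2^j \asymp \Delta_i^2$, contributing $O(n_i^2/\Delta_i^6) = O(n_i/C^6)$. Summing across all dyadic scales yields a total of $O(n_i/C^6 + n_i^{2/3}/C^2)$, which is at most $n_i/16$ for any sufficiently large $C$ (with trivial base cases for small $n_i$).

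Granted this precondition, \Cref{thm:partial-coloring} produces $\bx^{(i)}$ with at least half of the fractional coordinates rounded to $\pm 1$ and with incremental line discrepancy $|\langle \bx^{(i)} - \bx^{(i-1)}, \bv_L \rangle| \leq \Delta_i$ for every $L$. Telescoping over the $O(\log N)$ rounds gives
\[
\disc(\bA^{\cP,\cL}) \;\leq\; \sum_{i} \Delta_i \;\leq\; C \sum_{i=0}^{O(\log N)} (N/2^i)^{1/6} \;=\; O(N^{1/6}).
\]
The main obstacle I foresee is the dyadic verification of the Lovett--Meka precondition: one has to apply \Cref{thm:st} to the \emph{current} set of $n_i$ fractional points (so its right-hand side scales with $n_i$ rather than the original $N$) and track both terms of the Szemer\'edi--Trotter bound carefully through the change of variable $u = \Delta_i^2/2^j$ to confirm the resulting series converges to a small multiple of $n_i$. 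This $n_i$-dependence is exactly what drives the geometric decay $\Delta_i \asymp (N/2^i)^{1/6}$ across rounds, and hence yields the tight $O(N^{1/6})$ bound.
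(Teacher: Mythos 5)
Your write-up does not prove the stated theorem; it \emph{uses} it. Theorem~\ref{thm:partial-coloring} is the Lovett--Meka partial coloring lemma itself: a general statement about arbitrary vectors $\bv_1,\dots,\bv_m\in\R^n$, an arbitrary starting point $\bx^{\mystart}$ in the cube, and thresholds satisfying $\sum_{j}\exp(-c_j^2/16)\le n/16$. What you have written is the derivation of the point-line discrepancy bound $\disc(\bA^{\cP,\cL})=O(N^{1/6})$ (i.e., Theorem~\ref{thm:tight-discrepancy-point-line}) \emph{from} Theorem~\ref{thm:partial-coloring}, invoking the latter as a black box in every round. As a proof of Theorem~\ref{thm:partial-coloring} this is circular: nothing in your argument explains why a point $\bx^{\myend}$ satisfying properties (i) and (ii) exists under the stated hypothesis.

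In the paper, Theorem~\ref{thm:partial-coloring} is imported from the Lovett--Meka paper without proof; the footnote only explains how to remove their approximation parameter $\delta$ by halving each $c_j$. A genuine proof requires their ``edge-walk'' argument: run a discretized Gaussian random walk started at $\bx^{\mystart}$, constrained at each step to the subspace orthogonal to the cube faces already reached and to the discrepancy constraints $\langle\bx-\bx^{\mystart},\bv_j\rangle=\pm c_j\|\bv_j\|_2$ already saturated; Gaussian concentration combined with the hypothesis $\sum_j\exp(-c_j^2/16)\le n/16$ shows that in expectation far more cube constraints than discrepancy constraints become tight, so the walk terminates with at least $n/2$ coordinates in $\{\pm 1\}$ while every $|\langle\bx^{\myend}-\bx^{\mystart},\bv_j\rangle|$ stays below $c_j\|\bv_j\|_2$. (A non-constructive alternative is the Gluskin/Giannopoulos convex-geometric counting argument.) None of this machinery appears in your proposal. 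For what it is worth, your argument closely tracks the paper's proof of Theorem~\ref{thm:tight-discrepancy-point-line} --- the choice $c_L=\Delta/\|\bv_L\|_2$, the dyadic Szemer\'edi--Trotter accounting, and the geometric decay of $n_t^{1/6}$ across rounds are exactly the steps there --- but that is a different statement from the one you were asked to prove.
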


Compared to the partial coloring result used in~\cite{ChazelleL01a} (proved in~\cite{beck_chen_1987,Chazelle-book}), \Cref{thm:partial-coloring} is sharper in that it allows for $O(n)$ ``stringent'' constraints where $c_j = o(1)$, whereas the version used in~\cite{ChazelleL01a} only allows $O(n / \log n)$ such constraints. When it comes to Szemerédi--Trotter theorem, this means that \cite{ChazelleL01a} can only let the lines with at least $k \geq \Theta(N^{1/3} \log^{1/3} N)$ points be these ``stringent'' constraints; on the other hand, \Cref{thm:partial-coloring} allows us to take $k = \Theta(N^{1/3})$. This is indeed the reason for the improvement achieved here.

\begin{proof}[Proof of \Cref{thm:tight-discrepancy-point-line}]
We construct the final vector $\bx$ via the standard partial coloring approach as follows:
\begin{itemize}
\item Let $t = 0$ and $\bx^0 = (0, \dots, 0)$ be the all-zeros vector in $\R^N$.
\item While $\bx^t \notin \{\pm 1\}^N$:
\begin{itemize}
\item Let $I_t = \{i \in [N] \mid x^t_i \notin \{\pm 1\}\}$ denote the set of fractional coordinates, let $\cP_t = \{p_i \mid i \in I_t\}$ be the set of corresponding points, and let $n_t = |I_t|$.
\item Let $\cL_t$ denote the set of lines that pass through at least two points in $\cP_t$. For each $L_j \in \cL^t$, let $\bv_j^t$ denote the $j$th row in $\bA^{\cP_t, \cL}_{I_t}$. Furthermore, let $c^t_j = \frac{12800 \cdot c n_t^{1/6}}{ \|\bv_j^t\|_2}$, where $c$ is the constant from \Cref{thm:st}.
\item Apply \Cref{thm:partial-coloring} on $\{\bv_i^t\}_{L_j \in \cL^t}$ with the starting vector $\bx^{\mystart} = \bx^t_{I_t}$ and thresholds $\{c^t_j\}_{L_j \in \cL^t}$ to get $\bx^{\myend} \in [-1, +1]^{I_t}$.
\item Finally, let $\bx^{t + 1}$ be such that
\begin{align*}
x^{t + 1}_j &=
\begin{cases}
x^t_j &\text{ if } j \notin I_t, \\
x^{\myend}_j &\text{ otherwise.}
\end{cases}
\end{align*}
\item $t = t + 1$.
\end{itemize}
\item Let $\bx = \bx^t$ be the coloring.
\end{itemize}
Let us start by checking that the invocation of \Cref{thm:partial-coloring} is valid, i.e., that $\sum_{L_j \in \cL^t} \exp(-c^t_j / 16) \leq n^t/16$. Let $q = \lceil 64 c n_t^{1/3} \rceil$. We may bound the sum using the Szemerédi--Trotter theorem as follows:
\begin{align*}
\sum_{L_j \in \cL^t} \exp \left( \frac{-c^t_j}{16} \right) &= \sum_{L_j \in \cL^t} \exp \left( \frac{-800 \cdot c n_t^{1/6}}{\|\bv_j^t\|_2} \right) \\
&= \sum_{k=2}^{n_t} \sum_{L_j \in \cL^t \atop |L_j \cap \cP_t| = k} \exp \left( \frac{-800 \cdot c n_t^{1/6}}{\sqrt{k}} \right) \\
&= \sum_{k = q}^{n_t} \sum_{L_j \in \cL^t \atop |L_j \cap \cP_t| = k} \exp \left( \frac{-800 \cdot c n_t^{1/6}}{\sqrt{k}} \right) + \sum_{k = 2}^{q - 1} \sum_{L_j \in \cL^t \atop |L_j \cap \cP_t| = k} \exp \left( \frac{-800 \cdot c n_t^{1/6}}{\sqrt{k}} \right) \\
&\leq |\left\{L_j \in \cL^t \mid |L_j \cap \cP_t| \geq q\right\}| + \sum_{s=1}^{\lfloor \log(q-1) \rfloor} \sum_{L_j \in \cL^t \atop |L_j \cap \cP_t| \geq 2^s} \exp \left( \frac{-800 \cdot c n_t^{1/6}}{2^{(s+1)/2}} \right) \\
(\text{\Cref{thm:st}}) &\leq c\left( \frac{n_t^2}{q^3} + \frac{n_t}{q} \right) + \sum_{s=1}^{\lfloor \log(q-1) \rfloor} c\left( \frac{n_t^2}{2^{3s}} + \frac{n_t}{2^s} \right) \cdot \exp \left( \frac{-800 \cdot c n_t^{1/6}}{2^{(s+1)/2}} \right) \\
&= \frac{n_t}{32} + \sum_{s=1}^{\lfloor \log(q-1) \rfloor} 4097\cdot c \frac{n_t^2}{2^{3s}} \cdot \exp\left(-560 \cdot c \frac{n_t^{1/6}}{2^{s/2}} \right). \\
\end{align*}
Now, let $a(s) = 70cn_t^{1/6}/2^{s/2}$. For $s \leq \log(q-1)$, we have $a(s) \geq 70 cn_t^{1/6}/\sqrt{q-1} \geq 1$. From this, we have $e^{a(s)} \geq a(s)$. In other words, $e^{-a(s)} \leq 1/a(s)$. Plugging this back into the above, we have
 \begin{align*}
\sum_{L_j \in \cL^t} \exp \left(\frac{-c^t_j}{16} \right) 
&\leq \frac{n_t}{32} + \sum_{s=1}^{\lfloor \log(q-1) \rfloor} 4097 \cdot c \frac{n_t^2}{2^{3s}} \cdot \left(\frac{1}{70cn_t^{1/6}/2^{s/2}}\right)^8. \\
&\leq \frac{n_t}{32} + \frac{n_t^{2/3}}{10^6 c} \cdot \sum_{s=1}^{\lfloor \log(q-1) \rfloor} 2^s \\
&\leq \frac{n_t}{32} + \frac{n_t^{2/3}}{10^6 c} \cdot 2^{\lfloor \log(q-1) \rfloor + 1} \\
&\leq \frac{n_t}{32} + \frac{n_t^{2/3}}{10^6 c} \cdot 128 \cdot cn_t^{1/3} \\
&\leq \frac{n}{16},
\end{align*}
as desired.

As for the discrepancy, we have
\begin{align*}
\|\bA^{\cP, \cL} \bx\|_\infty 
&= \max_{L_j \in \cL} \left<\bv^0_j, \bx\right> \\
&\leq \max_{L_j \in \cL} \left(1 + \sum_{t \geq 0} \left<\bv^t_j, \bx^{t + 1} - \bx^t\right>\right)\\
(\text{\Cref{thm:partial-coloring}(i)}) &\leq \max_{L_j \in \cL} \left(1 + \sum_{t \geq 0} O \left(n_t^{1/6} \right)\right) \\
(\text{\Cref{thm:partial-coloring}(ii)}) &\leq \max_{L_j \in \cL} \left(1 + \sum_{t \geq 0} O\left(\left(\frac{N}{2^t}\right)^{1/6}\right)\right) \\
&\leq O(N^{1/6}).
\end{align*}
Note that the extra $1$ on the second line comes from the fact that we only include in $\cL^t$ lines that pass through at least two points in $\cP_t$.
\end{proof}

\subsection{Lower Bound for $\gamma < 1$}

In this section we prove a lower bound for discrepancy of partial coloring, i.e., for $\gamma < 1$.

\begin{proof}[Proof of \Cref{cor:point-line-disc}]
Let $c_1 > 0$ be the constant from~\Cref{thm:point-line} such that for all $N \in \N$, there is an $N$-PPLS $(\cP, \cL)$ such that $\disc(\bA^{\cP, \cL}) \geq c_1 N^{1/6}$. Furthermore, let $c_2 > 0$ be the constant from~\Cref{thm:tight-discrepancy-point-line} such that for all $N \in \N$, for 
any $N$-PPLS $(\cP', \cL')$ we have $\disc(\bA^{\cP', \cL'}) \leq c_2 N^{1/6}$.

Let $c = \frac{c_1}{12} \cdot \frac{\log(1/(1-\gamma))}{\log(2c_2/c_1)}$. We will show that for all $N \in \N$, there exists a PPLS 
$(\cP^*, \cL^*)$ with $|\cP^*| \leq N$ such that $\disc_\gamma(\bA^{\cP^*, \cL^*}) \geq c N^{1/6}$. Suppose for the sake of contradiction that there exists $N \in \N$ such that for all PPLS $(\cP^*, \cL^*)$ with $|\cP^*| \leq N$, we have $\disc_\gamma(\bA^{\cP^*, \cL^*}) < cN^{1/6}$.

Let $(\cP, \cL)$ be the $N$-PPLS guaranteed by \Cref{thm:point-line}.
Let $\hN = N(0.5c_1 / c_2)^6$.
Consider the following recursive coloring algorithm for $(\cP, \cL)$:  
\begin{itemize}
\item Let $t = 0$ and $\cP_0 = \cP$.
\item While $|\cP_t| > \hN$:
\begin{itemize}
\item Apply the hypothesis on $(\cP_t, \cL)$ to get a partial coloring $\bx^t \in \{-1, 0, +1\}^{\cP_t}$ such that $\|\bx^t\|_1 \geq \gamma |\cP_t|$ and $\|\bA^{\cP_t, \cL}\bx^t\|_\infty \leq c N^{1/6}$.
\item Let $\cP_{t + 1} := \{p \in \cP_t \mid x^t_p = 0\}$.
\item $t = t + 1$.
\end{itemize}
\item Apply~\Cref{thm:tight-discrepancy-point-line} on $(\cP_t, \cL)$ to get a coloring $\bx^t \in \{\pm 1\}^{\cP_t}$ such that $\|\bA^{\cP_t, \cL}\bx^t\|_\infty \leq c_2 \hN^{1/6}$.
\item Let $\bx$ be the coloring generated above,  i.e., $x_p = x^{t'}_p$ for all $p \in \cP_{t'} \setminus \cP_{t'+1}$.
\end{itemize}
Let $T$ denote the number of iterations executed by the algorithm.  Since $|\cP_t|$ reduces by a factor of (at least) $1 / (1 - \gamma)$ each time, we have $T \leq \frac{\log(N/\hN)}{\log(1/(1-\gamma))} \leq \frac{6 \log(2c_2/c_1)}{\log(1/(1-\gamma))}$. Therefore, we have
\begin{align*}
\|\bA^{\cP, \cL}\bx\|_\infty < T \cdot cN^{1/6} + c_2\hN^{1/6}
\leq 0.5 c_1 N^{1/6} + 0.5c_1 N^{1/6} = c_1N^{1/6},
\end{align*}
which contradicts our choice of $(\cP, \cL)$.
\end{proof}

\end{document}